\renewcommand{\tilde}{\widetilde}
\newcommand{\size}[1]{\left| #1 \right|}
\newcommand{\E}{\mathbb{E}}
\newcommand{\remove}[1]{}
\newcommand{\R}{\mathbb{R}}
\newcommand{\N}{\mathbb{N}}
\newcommand{\cS}{\mathcal{S}}
\newcommand{\cD}{\mathcal{D}}
\newcommand{\cE}{\mathcal{E}}
\newcommand{\cP}{\mathcal{P}}
\newcommand{\cR}{\mathcal{R}}
\newcommand{\Oh}{\mathcal{O}}
\newcommand{\cQ}{\mathcal{Q}}
\newcommand{\tOh}{\widetilde{{\mathcal O}}}
\newcommand{\supp}{\mbox{{\sc Supp}}}
\newcommand{\eps}{\varepsilon}
\newcommand{\pr}{\mathbb{P}}
\newcommand{\floor}[1]{{\lfloor{#1}\rfloor}}
\theoremstyle{plain}
\newtheorem{theo}{Theorem}[section]
\newtheorem{lem}[theo]{Lemma}
\newtheorem{cl}[theo]{Claim}
\theoremstyle{definition}
\newtheorem{defi}[theo]{Definition}
\newtheorem{rem}{Remark}
\newtheorem{obs}[theo]{Observation}
\newcommand{\high}{\mathsf{High}}
\title{Exploring the Gap between Tolerant and Non-tolerant\\
Distribution Testing\footnote{A preliminary version of this paper has been accepted in RANDOM 2022.}
}
\author{Sourav Chakraborty\footnote{Indian Statistical Institute, Kolkata, India. Email: chakraborty.sourav@gmail.com.}
\and
Eldar Fischer\footnote{Technion - Israel Institute of Technology, Israel. Email: eldar@cs.technion.ac.il.}
\and
Arijit Ghosh\footnote{Indian Statistical Institute, Kolkata, India. Email: arijitiitkgpster@gmail.com.}
\and
Gopinath Mishra\footnote{University of Warwick, UK. Email: gopianjan117@gmail.com. Research supported in part by the Centre for Discrete Mathematics and its Applications (DIMAP) and by EPSRC award EP/V01305X/1.}
\and
Sayantan Sen\footnote{Indian Statistical Institute, Kolkata, India. Email: sayantan789@gmail.com.}
}
\date{}
\begin{document}
\begin{titlepage}
\clearpage\thispagestyle{empty}
  \maketitle

\begin{abstract}
The framework of distribution testing is currently ubiquitous in the
field of property testing. In this model, the input is a probability
distribution accessible via independently drawn samples from an oracle.
The testing task is to distinguish a distribution that satisfies some
property from a distribution that is far in some distance measure from satisfying it. The task of tolerant testing imposes a further restriction, 
that distributions close to satisfying the property are also accepted.

This work focuses on the connection between the sample complexities of
non-tolerant testing of distributions and their tolerant testing counterparts. When limiting our scope to label-invariant (symmetric)
properties of distributions, we prove that the gap is at most quadratic, ignoring poly-logarithmic factors.
Conversely, the property of being the uniform distribution is indeed known
to have an almost-quadratic gap.

When moving to general, not necessarily label-invariant properties, the situation is more complicated, and we show some partial results. We
show that if a property requires the distributions to be
non-concentrated, that is, the probability mass of the distribution is sufficiently spread out, then it cannot be non-tolerantly tested with  $o(\sqrt{n})$ many samples, where $n$ denotes the universe size. Clearly, this
implies at most a quadratic gap, because a distribution can be learned
(and hence tolerantly tested against any property) using $\mathcal{O}(n)$ many samples. Being non-concentrated is a strong requirement on properties, as we also prove a close to linear lower bound against their tolerant tests.

Apart from the case where the distribution is non-concentrated, we also show if an input
distribution is very concentrated, in the sense that it is mostly
supported on a subset of size $s$ of the universe, then it can be learned
using only $\mathcal{O}(s)$ many samples. The learning procedure adapts to the input,
and works without knowing $s$ in advance.



\thispagestyle{empty}
\end{abstract}
\clearpage

\clearpage
\pagenumbering{arabic} 
\end{titlepage}

\section{Introduction}

Let $D$ be a distribution over a finite set $\Omega$, and $\mathcal{P}$ be a
property, that is, a set of distributions over $\Omega$. Given access to 
independent random samples from $\Omega$ according to the distribution $D$, 
we are interested in the problem of distinguishing whether the distribution $D$ is $\eta$-close to having the property $\mathcal{P}$, or is $\eps$-far from having 
the property $\mathcal{P}$, where $\eta$ and $\eps$ are two fixed proximity 
parameters such that $0\leq \eta<\eps \leq 2$. The distance of the 
distribution $D$ from the property $\mathcal{P}$ is defined as 
$\min\limits_{D' \in \mathcal{P}}\left| \left| D - D'\right|\right|_1$,
where $||D-D'||_1$ denotes the \emph{$\ell_1$-distance} between the distributions $D$ and $D'$~\footnote{Strictly speaking it is an infimum, but since all properties we consider are compact sets, it is equal to the minimum.}. A distribution $D$ is said to be $\eta$-close to $\cP$ if the distance of $D$ from $\cP$ is at most $\eta$. Similarly, $D$ is said to be $\eps$-far from $\cP$, if the distance of $D$ from $\cP$ is at least $\eps$. The goal is to design a tester that uses as few samples as possible. For $\eta>0$, the problem of distinguishing the two cases is referred to as the \emph{tolerant distribution testing} problem of $\mathcal{P}$, and the particular case where $\eta=0$ is referred to as the \emph{non-tolerant distribution testing} problem of $\mathcal{P}$. The sample complexity (tolerant and non-tolerant testing) is the number of samples required by the best algorithm that can distinguish with high probability (usually with probability at least $\frac{2}{3}$) whether the distribution $D$ is $\eta$-close to having the property $\mathcal{P}$, or is $\eps$-far from having the property $\mathcal{P}$.


While results and techniques from distribution testing are already interesting in their own right, they have also found numerous applications in central problems in Theoretical Computer Science, and in particular in property testing, e.g. graph isomorphism testing~\cite{fischer2008testing, goldreich2019testing} and function isomorphism testing~\cite{DBLP:journals/siamcomp/AlonBCGM13}, learning theory~\cite{ben2010theory, diakonikolas2017statistical, diakonikolas2016new}, and differential privacy~\cite{aliakbarpour2019private, gopi2020locally, zhang2021statistical,acharya2021inference}. Thus, understanding the tolerant and non-tolerant sample complexity of distribution testing is a central problem in theoretical computer science.

\color{black}

{There have been extensive studies of non-tolerant and tolerant testing of some specific distribution properties like {uniformity}, {identity with a fixed distribution}, {equality of two distributions} and {independence of a joint distribution}~\cite{Batu00, batu2001testing,paninski2008coincidence, valiant2011testing, ValiantV11, valiant2017automatic}. Various other specific distribution properties have also been studied~\cite{batu2017generalized,diakonikolas2017sharp}.} Then, some works investigated general tests for the large class of all shape-restricted properties of distributions, which contains properties like monotonicity, log-concavity, modality etc.~\cite{canonne2018testing, fischer2019improving}. This paper proves general results about the gap between tolerant and non-tolerant distribution testing that hold for large classes of properties.


\subsection{Our results }\label{sec:res}

We now informally present our results. The formal definitions are presented in Section~\ref{sec:pre}.
We assume that the distributions are supported over a set $\Omega=[n]=\{1,2,\ldots, n\}$. We first prove a result about label-invariant distribution properties (properties that are invariant under all permutations of $\Omega$). We show that, for any label-invariant distribution property, there is at most a quadratic blowup in its tolerant sample complexity as compared to its non-tolerant counterpart, ignoring poly-logarithmic factors.


\begin{theo}[Informal]\label{thm:symtolerant}
Any label-invariant distribution property that can be non-tolerantly tested using $\Lambda$ samples, can also be tolerantly tested using $\widetilde{\Oh}(\min\{\Lambda^2,n\})$ samples, where $n$ is the size of the support of the distribution~\footnote{$\tOh(\cdot)$ hides a poly-logarithmic factor.}.

\end{theo}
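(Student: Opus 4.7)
The plan is to reduce tolerant testing of a label-invariant property $\mathcal{P}$ to estimating $D$'s sorted probability vector in $\ell_1$ distance, and then to produce that estimate in $\tOh(\min\{\Lambda^2, n\})$ samples. The reduction is immediate: since $\mathcal{P}$ is closed under relabelings of $\Omega$, the quantity $\min_{D' \in \mathcal{P}} \|D - D'\|_1$ depends only on the sorted probability vector of $D$. So an estimate $\widehat{D}^{\downarrow}$ of this sorted vector with $\ell_1$ error $\alpha < (\eps - \eta)/4$ supports a tolerant decision by a brute-force comparison against the sorted profiles attainable in $\mathcal{P}$. The entire difficulty is to obtain $\widehat{D}^{\downarrow}$ sample-efficiently.

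\textbf{Two regimes.} When $\Lambda^2 = \Omega(n)$, I would invoke a known generic sorted-$\ell_1$ learner, for instance the Valiant--Valiant unseen-distribution estimator, which recovers the sorted profile to constant $\ell_1$ error in $\tOh(n)$ samples; this yields the $\tOh(n)$ branch of the bound. The interesting regime is $\Lambda^2 \ll n$, where the generic learner is too expensive and the hypothesis on $\mathcal{P}$ must be brought in.

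\textbf{Bucketing the support.} In the interesting regime I would bucket the support into $\Oh(\log n)$ levels $B_i = \{j : 2^{-i-1} < D(j) \leq 2^{-i}\}$, with the cutoff $1/\Lambda^2$ separating \emph{heavy} from \emph{light} elements. There are at most $\Oh(\Lambda^2)$ heavy elements by total mass, and with $\tOh(\Lambda^2)$ samples a standard Chernoff argument estimates each heavy element's probability up to a constant factor, which recovers the heavy portion of the sorted profile.

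\textbf{Main obstacle: the light portion.} Learning the sorted profile of the light elements with only $\tOh(\Lambda^2)$ samples is the bottleneck, since generically this needs $\tOh(n)$ samples. The plan is to extract a structural consequence of the hypothesis: for every $D$ that is $\eta$-close to $\mathcal{P}$, all but $\alpha$ of the mass of $D$ is supported on $\Oh(\Lambda^2)$ elements. I would prove this by contradiction, via a Paninski--Valiant-style indistinguishability construction: any $D$ near $\mathcal{P}$ whose light portion lives on substantially more than $\Lambda^2$ elements is $\Lambda$-sample indistinguishable from a perturbation $D'$ obtained by redistributing light mass onto a much smaller number of elements; since $\mathcal{P}$ is label-invariant, a careful choice of $D'$ pushes it $\eps$-far from $\mathcal{P}$, contradicting non-tolerant testability with $\Lambda$ samples. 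With the effective support thus bounded by $\Oh(\Lambda^2)$, the light profile is learnable with $\tOh(\Lambda^2)$ samples, for example by applying the Valiant--Valiant sorted-$\ell_1$ estimator restricted to this effective support. Stitching together the heavy and light estimates yields the desired sorted-$\ell_1$ approximation and hence the tolerant tester. The principal technical work is converting the indistinguishability construction into a clean quantitative structural lemma with the right dependence on $\eps - \eta$ and on the polylogarithmic slack hidden in the $\tOh$ notation.
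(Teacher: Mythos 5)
Your high-level reduction — that distance to a label-invariant property depends only on the sorted probability vector, so an $\ell_1$-accurate estimate of that vector suffices — is correct, as is the plan to estimate the heavy elements (mass $\geq 1/\Lambda^2$) directly with $\tOh(\Lambda^2)$ samples. The gap is in the step you yourself flag as the bottleneck: you try to learn the \emph{light} profile by first proving the structural lemma ``every $D$ that is $\eta$-close to $\cP$ has all but $\alpha$ of its mass on $\Oh(\Lambda^2)$ elements.'' That lemma is false. Take $\cP = \{D : \exists i, \, D(i) \geq 1/2\}$, which is label-invariant and non-tolerantly testable with $\Lambda = \Oh(1/\eps^2)$ samples (just check whether the empirical maximum exceeds a threshold). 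Then $D$ = (one point of mass $1/2$, the remaining mass $1/2$ spread uniformly over $n-1$ points) is \emph{in} $\cP$, yet half its mass sits on $\Theta(n) \gg \Lambda^2$ light elements. Your proposed indistinguishability proof of the lemma also does not go through: concentrating the light mass of such a $D$ onto $\Oh(\Lambda^2)$ elements necessarily creates new heavy elements, so the resulting $D'$ is \emph{not} $\Lambda$-sample indistinguishable from $D$, and there is in any case no reason $D'$ should become $\eps$-far from $\cP$ (in the example, it doesn't — it retains a mass-$1/2$ element and stays in $\cP$).

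The paper sidesteps exactly this obstruction with a different and subtler use of the testability hypothesis. Rather than arguing that the light portion has small effective support, it proves (Lemma~\ref{lem:lbnon}, extended in Lemma~\ref{cor:lbnom}) that the \emph{fine structure of the light elements cannot change the answer}: if two distributions agree on their high elements, one cannot be in $\cP$ while the other is $\eps$-far, because after a random relabeling of the light elements, $\Lambda$ samples see at most one occurrence of any light element (birthday bound) and hence produce statistically indistinguishable sample sequences. So the algorithm never attempts to learn the sorted light profile at all — it estimates the heavy masses, records the aggregate light mass, and checks whether some $D_1\in\cP$ with $\high_{1/q^2}(D_1)\subseteq H$ matches that partial description. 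This ``the light profile doesn't matter, up to $\eps$ slack'' direction of the argument is essentially the opposite of the ``the light support is small'' claim your sketch relies on, and is what makes the quadratic bound go through even when $D$ near $\cP$ has light mass spread over all of $[n]$.
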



This result gives a unified way for obtaining tolerant testers from their non-tolerant counterparts. The above result will be stated and proved formally in Section~\ref{sec:symresult}. Moreover, in Section~\ref{sec:consttoltest}, we give a constructive variant of the tolerant tester of Theorem~\ref{thm:symtolerant}, when the property can be expressed as the feasible solution to a set of linear inequalities.

\begin{theo}[Informal]\label{thm:inf-conv}
Any label-invariant distribution property that can be non-tolerantly tested using $\Lambda$ samples and can be expressed as a feasible solution to $m$ linear inequalities, can also be tolerantly tested using $\widetilde{\Oh}(\min\{\Lambda^2,n\})$ samples and in time polynomial in $m$ and $n$, where $n$ is the size of the support of the distribution.
\end{theo}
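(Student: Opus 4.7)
The plan is to turn the existential tester of Theorem~\ref{thm:symtolerant} into a constructive one by encoding its central decision (\emph{is there a distribution in $\cP$ close to the empirical sorted profile?}) as a polynomial-size linear program. Write the property as $\cP = \{D \in \R^n : D \geq 0,\; \sum_i D_i = 1,\; A D \leq b\}$ with $A \in \R^{m \times n}$ and $b \in \R^m$, where $\cP$ is closed under coordinate permutations by hypothesis. The first step is to produce an explicit sorted vector $\hat\pi \in \R^n$ using the sampling primitive behind Theorem~\ref{thm:symtolerant}: from $\tOh(\min\{\Lambda^2,n\})$ samples it returns (in $\text{poly}(n)$ time for the standard Valiant--Valiant--style sorted-profile learner) a $\hat\pi$ satisfying $\|\hat\pi - D^{\downarrow}\|_1 \leq \delta$ with high probability, where $D^{\downarrow}$ is the sorted probability vector of the unknown input and $\delta := (\eps-\eta)/4$.

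\textbf{LP formulation.} With $\hat\pi$ in hand, solve
\begin{align*}
\min \quad & \sum_{i=1}^n (u_i + v_i) \\
\text{s.t.} \quad & A D' \leq b, \quad \sum_{i=1}^n D'_i = 1, \quad D'_1 \geq D'_2 \geq \cdots \geq D'_n \geq 0, \\
& D'_i - \hat\pi_i = u_i - v_i,\; u_i \geq 0,\; v_i \geq 0 \quad \text{for all } i \in [n].
\end{align*}
The LP has $\Oh(n)$ variables and $\Oh(m+n)$ constraints, so any standard LP solver returns an optimum in time polynomial in $n$ and $m$. Accept iff the optimum is at most $(\eta+\eps)/2$.

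\textbf{Correctness via rearrangement.} For any label-invariant $\cP$ and any distribution $D$, the identity
\[
\min_{D^* \in \cP} \|D - D^*\|_1 \;=\; \min_{\substack{\tilde D \in \cP \\ \tilde D \text{ sorted}}} \|D^{\downarrow} - \tilde D\|_1
\]
holds: given any sorted $\tilde D \in \cP$, one may permute it to agree in order with $D$, stay in $\cP$ by label-invariance, and realize the right-hand side. Hence the LP value equals $\min_{\tilde D \in \cP,\, \tilde D \text{ sorted}} \|\hat\pi - \tilde D\|_1$, which by the triangle inequality and the sampling guarantee lies within $\delta$ of $\min_{D^* \in \cP}\|D - D^*\|_1$. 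Since $\delta = (\eps-\eta)/4$, the threshold $(\eta+\eps)/2$ cleanly separates the $\eta$-close and $\eps$-far cases.

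The main obstacle is ensuring that the sorted-profile estimator implicit in Theorem~\ref{thm:symtolerant} is itself algorithmically constructive---i.e.\ produces an explicit $\hat\pi$ of $\ell_1$-accuracy $\delta$ in $\text{poly}(n)$ time rather than existing only information-theoretically. Once this is in place, the LP argument above is routine and yields the claimed $\tOh(\min\{\Lambda^2,n\})$ sample complexity together with a $\text{poly}(n,m)$ running time.
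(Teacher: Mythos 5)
Your LP formulation and the rearrangement argument (restricting the optimization to sorted $\tilde D \in \cP$ and relying on label-invariance, essentially Claim~\ref{lem:summin}) are sound. The fatal problem is the very first step: there is no estimator that outputs a vector $\hat\pi$ with $\|\hat\pi - D^{\downarrow}\|_1 \leq \delta$ from $\tOh(\min\{\Lambda^2,n\})$ samples. When $\Lambda = o(\sqrt{n})$ --- the only regime where the theorem is not trivial --- learning the sorted probability vector to constant $\ell_1$ error is information-theoretically impossible with $o(n/\log n)$ samples (this is exactly the content of the Valiant--Valiant tolerant-testing lower bounds the paper invokes in Theorem~\ref{theo:lowfreq}). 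So the obstacle you flag at the end is misdiagnosed: it is not a constructivity issue, it is that the primitive you are relying on does not exist. There is no ``sorted-profile estimator implicit in Theorem~\ref{thm:symtolerant}'' because the tester there never learns the sorted profile.

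The paper's route is structurally different and this difference is what makes the sample complexity work. It performs \emph{partial learning}: it estimates only the masses of the high-mass elements (those with mass roughly $\geq 1/\Lambda^2$), collects them in a set $H$ of size $\tOh(\Lambda^2)$, and reserves just a single aggregate constraint $\bigl|\sum_{i > |H|} z_i - \sum_{i>|H|}\widetilde D(i)\bigr|$ for the entire low-mass tail, together with the bound $z_i < 1/q^2$ for $i \notin H$. The justification that this coarse information suffices is Lemma~\ref{cor:lbnom}, which in turn rests on Lemma~\ref{lem:lbnon}: the assumed $\Lambda$-sample non-tolerant tester cannot tell apart two distributions agreeing on their high-mass parts, so the tail can be ``anything'' and the test is still faithful. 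The LP/Ellipsoid step is then applied to the convex set $\mathcal{CP} \cap \Delta(|H|, q, \widetilde D, \cdot)$, which has $\Oh(|H|)$ non-trivial per-coordinate constraints rather than $n$ of them. To salvage your argument you would need to replace the full sorted-profile accuracy guarantee with the head-only guarantee, and justify why matching on the head plus a single tail constraint is enough --- which is precisely the content of Lemma~\ref{cor:lbnom} and cannot be bypassed.
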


\color{black}

Note that if $\Lambda= \Omega(\sqrt{n})$, Theorem~\ref{thm:symtolerant} is obvious.  It is only interesting if $\Lambda = o(\sqrt{n})$. Now we present a property for which this connection is useful. Consider a natural distribution property: given a distribution $D$ and a parameter $k$, we want to decide whether the support size of $D$ is at most $k$ or $\eps$-far from having support at most $k$. If $k = o(\sqrt{n})$, the query complexity for testing this problem is $\Oh(\frac{k}{\log k})$~\cite{valiant2017estimating}.

It is a natural question to investigate the extent to which the above theorem can be generalized. Though we are not resolving this question completely, as a first step in the direction of extending the above theorem for properties that are not necessarily label-invariant, we consider the notion of  \emph{non-concentrated} properties. By the notion of a non-concentrated distribution, intuitively, we mean that there is no significant portion of the base set of the distribution that carries only a negligible weight, making the probability mass of the distribution well distributed among its indices. Specifically, any subset $X \subseteq [n]$, for which $\size{X}$ is above some threshold (say $\beta n$ with $\beta \in (0, \frac{1}{2})$), has probability mass of at least another threshold (say $\alpha$ with $\alpha \in (0, \frac{1}{2})$). A property is said to be non-concentrated if only non-concentrated distributions can satisfy the property. 
We prove a lower bound on the testing of any non-concentrated property (not necessarily label-invariant).

\begin{theo}[Informal]\label{thm:nonconclb}
In order to non-tolerantly test any non-concentrated distribution property, $\Omega(\sqrt{n})$ samples are required, where $n$ is the size of the support of the distribution.
\end{theo}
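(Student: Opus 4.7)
\medskip
\noindent\textbf{Proof plan.}
The plan is to apply Yao's minimax principle: exhibit a YES distribution supported on $\cP$ and a NO distribution supported on distributions that are $\epsilon$-far from $\cP$, such that no algorithm taking $o(\sqrt{n})$ samples can distinguish them with constant advantage. Fix any $D^* \in \cP$ as the deterministic YES instance. Since $D^*$ is non-concentrated, a Markov-type counting argument produces a ``light'' set $L \subseteq [n]$ with $|L| \geq (1-\beta) n$ on which every index carries mass at most $1/(\beta n)$; the remaining ``heavy'' set $H = [n]\setminus L$ has fewer than $\beta n$ elements.

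For the NO instance, draw a uniformly random subset $S \subseteq L$ of size $\beta n$ and let $D^-_S$ be obtained from $D^*$ by zeroing out the mass on $S$ and redistributing the lost mass proportionally within $L \setminus S$, leaving indices in $H$ untouched. Farness is immediate: for any $D' \in \cP$, non-concentration of $D'$ gives $D'(S) \geq \alpha$, whereas $D^-_S(S) = 0$, so $\|D^-_S - D'\|_1 \geq 2\alpha$; taking $\epsilon \leq 2\alpha$, every realization of $D^-_S$ is $\epsilon$-far from $\cP$.

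The core of the argument is to show that the total variation distance between $t$ i.i.d.\ samples from $D^*$ and $t$ i.i.d.\ samples from $D^-_S$ (averaged over the random $S$) is $o(1)$ when $t = o(\sqrt{n})$. Three ingredients combine. First, since each light index has mass $O(1/n)$, the probability that any two of the $t$ samples collide is $O(t^2/n) = o(1)$, so the tester essentially sees $t$ distinct indices. Second, heavy indices are unaffected by the construction, so their sample statistics are identical under YES and NO and carry no useful signal. Third, for a typical sequence $(i_1,\ldots,i_t)$ of distinct light indices, the likelihood ratio equals
\[
\mathbb{E}_S\!\left[\mathbbm{1}_{\{i_1,\ldots,i_t\}\cap S=\emptyset}\cdot\left(\frac{D^*(L)}{D^*(L\setminus S)}\right)^{t}\right],
\]
and the concentration of $D^*(S)$ around its mean yields a renormalization factor of approximately $(1-|S|/|L|)^{-t}$, which exactly cancels the subset-avoidance probability $(1-|S|/|L|)^t$, leaving $1+o(1)$.

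The main obstacle is controlling the fluctuations of $D^*(S)$ when $D^*$ is non-uniform on $L$, because the renormalization factor is raised to the $t$th power and could in principle magnify small deviations. The lightness bound $D^*(i) \leq 1/(\beta n)$ is precisely what rescues us, since it forces $\mathrm{Var}_S(D^*(S)) = O(1/n)$; by Chebyshev, $D^*(S)$ lies within $O(1/\sqrt{n})$ of its mean with high probability. Raising to the $t$th power introduces a multiplicative error of at most $(1+O(1/\sqrt{n}))^t = 1+o(1)$ for $t = o(\sqrt{n})$, closing the analysis. Yao's principle then implies the claimed $\Omega(\sqrt{n})$ lower bound.
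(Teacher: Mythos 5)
Your high-level plan --- a deterministic YES distribution $D^*\in\cP$, a random NO distribution obtained by degrading $D^*$ only on light indices, and a birthday-paradox indistinguishability argument --- matches the spirit of the paper's proof, and your farness argument is correct. The gap is in the indistinguishability step: the Chebyshev bound you invoke does not, on its own, close the argument. Write $R(S) = D^*(L)/D^*(L\setminus S)$. Your likelihood ratio is $\mathbb{E}_S[\mathbbm{1}_{\text{avoid}}\,R(S)^t]$, and you argue it is $1+o(1)$ by showing $R(S)^t = (1+O(1/\sqrt{n}))^t$ on a Chebyshev-typical event. But you never account for the complementary (low-probability) event, on which $R(S)^t$ is only bounded by $(1/\alpha)^t$ --- and this requires $\beta\le 1/3$ so that $|L\setminus S|\ge \beta n$ and non-concentration forces $D^*(L\setminus S)\ge\alpha$; for $\beta\in(1/3,1/2)$ you have no lower bound on $D^*(L\setminus S)$ at all and $R(S)$ may be unbounded. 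Since the Chebyshev tail has probability $\Theta(1/c^2)$ for a deviation threshold $c/\sqrt{n}$, the tail contribution $\Theta((1/\alpha)^t/c^2)$ cannot be made $o(1)$ for any $t=\omega(1)$ with a polynomially large $c$. Even if you upgrade to a Hoeffding/Bernstein tail and optimize the truncation threshold $\delta$, the two constraints $t\delta=o(1)$ (for the good-event factor to be $1+o(1)$) and $e^{-\Omega(\delta^2 n)}(1/\alpha)^t=o(1)$ (for the bad-event contribution to vanish) jointly force $t=o(n^{1/3})$, which falls short of $\Omega(\sqrt{n})$. To reach $\Omega(\sqrt{n})$ along your route you would need to dispense with truncation and directly bound $\mathbb{E}_S[\mathbbm{1}_{\text{avoid}}\,e^{t\ln R(S)}]$ via the sub-Gaussian moment generating function of $D^*(S)$ (variance proxy $O(1/n)$, so the dominant linear term contributes $e^{O(t^2/n)}=1+o(1)$), together with a careful treatment of the quadratic-and-higher remainder and of the $\beta>1/3$ regime.

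The paper takes a structurally different route that eliminates the concentration argument entirely. It lets $L$ be the $2\beta n$ lightest elements, pairs them into $\beta n$ pairs $(x_i,y_i)$, and defines $D_{no}$ by, independently for each pair, placing the combined mass $D_{yes}(x_i)+D_{yes}(y_i)$ entirely on $x_i$ with probability $D_{yes}(x_i)/(D_{yes}(x_i)+D_{yes}(y_i))$, and on $y_i$ otherwise. Because the total mass of each pair is preserved and there is no global renormalization, the per-pair flips are independent, and the flip probabilities are calibrated precisely so that, conditioned on no two samples falling in the same pair (which holds with probability $1-o(1)$ for $t=o(\sqrt{n})$ by the lightness bound of Observation~\ref{obs:LH-no}), the distribution over sample sequences under $D_{yes}$ is \emph{identical} to that under the mixture over $D_{no}$ --- proved by a clean induction on the sample positions (Lemma~\ref{lem:imp}). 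This buys an exact coupling in place of a delicate tail estimate, sidesteps the $\beta>1/3$ issue, and yields the $\Omega(\sqrt{n})$ bound with no asymptotic approximations.
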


The quadratic gap between tolerant testing and non-tolerant testing for any non-concentrated property follows from the above theorem, since by a folklore result, only $\Oh(n)$ many samples are required to learn any distribution approximately.

The proof of Theorem~\ref{thm:nonconclb} for label-invariant non-concentrated properties is a generalization of the proof of the $\Omega(\sqrt{n})$ lower bound for classical uniformity testing, while for the whole theorem, that is, for the general (not label-invariant) non-concentrated properties, a more delicate argument is required. The formal proof is presented in Section~\ref{sec:thm-noncon}.

The next natural question is about the sample complexity of any tolerant tester for non-concentrated properties.
We address this question for label-invariant non-concentrated properties by proving the following theorem in Section~\ref{sec:sym-tol}. However, the question is left open for non-label-invariant properties.

\begin{theo}[Informal]\label{thm:tol-sym-non}
The sample complexity for tolerantly testing any non-concentrated label-invariant distribution property is $\Omega(n^{1-o(1)})$, where $n$ is the size of the support of the distribution.
\end{theo}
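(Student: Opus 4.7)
The plan is to adapt the Chebyshev-polynomial moment-matching construction of Valiant and Valiant---the same machinery underlying their $\Omega(n/\log n)$ lower bound for support-size estimation---to the non-concentrated setting. Since $\cP$ is label-invariant, by a standard Yao-style symmetrization it suffices to exhibit two distributions $D_1,D_2$ on $[n]$ such that $D_1$ is $\eta$-close to $\cP$, $D_2$ is $\eps$-far from $\cP$, and the \emph{fingerprint} distributions induced by $s=o(n/\log n)$ samples from either $D_1$ or $D_2$ are within total variation $o(1)$. Because $n/\log n = n^{1-o(1)}$, any such construction yields the claimed lower bound.

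First I would fix an arbitrary witness $D^{*}\in\cP$ and isolate a ``spread'' sub-universe inside it. Non-concentration forces fewer than $\beta n$ coordinates of $D^{*}$ to have probability below $\alpha/(2\beta n)$---otherwise those coordinates would form a $\beta n$-size subset of total mass below $\alpha$. Combined with the elementary fact that at most half the coordinates can exceed $2/n$, this yields a subset $T\subseteq[n]$ with $\size{T}=m=\Omega(n)$ on which the probabilities of $D^{*}$ are uniformly of order $\Theta(1/n)$. The heavy coordinates outside $T$ are kept untouched in both $D_1$ and $D_2$; all the surgery is performed inside $T$.

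On $T$ I would invoke the Valiant--Valiant template to construct two probability profiles $\pi_1,\pi_2$ on $m$ coordinates such that (i) their first $K=\Theta(\log m)=\Theta(\log n)$ power-sum moments agree, (ii) $\pi_1$ matches the sorted probabilities of $D^{*}$ restricted to $T$ up to an $\ell_1$-error at most $\eta$, and (iii) $\pi_2$ places at least $\beta n$ of its coordinates at value $0$, redistributing their mass among the remaining probability values to preserve the moment match. Gluing $\pi_1$ and $\pi_2$ back to the heavy part of $D^{*}$ yields $D_1$ and $D_2$. Correctness then follows in three steps: the sorted probability vector of $D_1$ is within $\ell_1$-distance $\eta$ of that of $D^{*}$, so by label-invariance a suitable permutation of $D^{*}\in\cP$ lies at $\ell_1$-distance at most $\eta$ from $D_1$; the zero coordinates of $\pi_2$ form a subset $S$ with $\size{S}\ge \beta n$ and $D_2(S)=0$, whereas any $D'\in\cP$ must satisfy $D'(S)\ge\alpha$, forcing $\|D_2-D'\|_1\ge\alpha\ge\eps$ uniformly; and the matched first $\Theta(\log n)$ moments feed into Valiant's Poissonization bound on the fingerprint total variation, which is $o(1)$ whenever $s=o(n/\log n)$.

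The main obstacle will be the simultaneous calibration of these three constraints: $D_1$'s sorted profile staying $\ell_1$-close to $D^{*}$'s, $D_2$ carving out a $\beta n$-size near-zero subset, and the first $\Theta(\log n)$ power-sum moments coinciding. Because $\cP$ is an arbitrary label-invariant non-concentrated property, we cannot fall back on a canonical shape such as the uniform distribution; the Chebyshev template has to be rescaled and grafted onto whatever spread substrate the chosen witness $D^{*}$ happens to provide. The two-level decomposition of $D^{*}$ into a preserved heavy part and a $\Theta(1/n)$-spread part on which the polynomial surgery is carried out is the natural tool, but the honest $\ell_1$ accounting---ensuring the closeness, farness, and moment-matching requirements hold simultaneously on an arbitrary witness---is where the delicate work lies.
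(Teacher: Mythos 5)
Your proposal reaches for the right underlying machinery --- the Valiant--Valiant Chebyshev moment-matching and fingerprint CLT --- but it tries to re-derive that machinery inline rather than deploying the high-level theorem the paper actually leans on. The paper's proof is much shorter: it reuses the pair $(D_{yes}, D_{no})$ already built in the non-tolerant lower bound (Theorem~\ref{theo:sym}), observes that these two distributions agree on all elements of mass $\geq \Oh(1/n)$, sets $\Pi_{\cP}(\cdot)$ to be the distance-to-$\cP$ functional (label-invariant and $(\gamma,\gamma)$-weakly continuous for every $\gamma$), and then invokes Valiant's \emph{Low Frequency Blindness} theorem as a black box. That theorem performs the moment-matching internally: one only needs a pair $p^+, p^-$ identical on the heavy indices with different $\Pi$-values, and the theorem outputs the $\Omega(n^{1-o(1)})$ bound for distinguishing $\Pi > b - \gamma$ from $\Pi < a + \gamma$. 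Your approach replaces this by a bespoke moment-matching construction, which is a genuinely different (more elementary but far less modular) route.

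The gap is exactly in the step you flag yourself as ``where the delicate work lies.'' You need three constraints to hold simultaneously for a profile pair $(\pi_1,\pi_2)$ on the spread sub-universe $T$: (i) matching first $\Theta(\log n)$ power-sum moments, (ii) $\pi_1$ sorted-$\ell_1$-close to $D^*$ restricted to $T$, and (iii) at least $\beta n$ exact zeros in $\pi_2$. The Chebyshev template produces a \emph{fixed} histogram pair on a geometric grid in $[\Theta(1/n), \Theta(\log n / n)]$; it does not obviously produce a $\pi_1$ that tracks an arbitrary target profile $D^*|_T$ while simultaneously carving $\beta n$ zeros out of $\pi_2$. Existing moment-matching constructions perturb \emph{both} distributions away from the target --- this is precisely why Low Frequency Blindness needs the weak-continuity hypothesis, which your sketch never uses --- whereas pinning $\pi_1$ near the witness while zeroing a constant fraction of $\pi_2$ and equating $\Theta(\log n)$ moments is a nontrivial polynomial-interpolation and feasibility task that your sketch acknowledges but does not carry out. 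Until that lemma is stated and proved, the argument does not close; and since Low Frequency Blindness already delivers exactly the consequence you need from a far weaker hypothesis (agreement on heavy elements only, no explicit moment matching), I would strongly recommend the paper's route: reuse $(D_{yes},D_{no})$ from Theorem~\ref{theo:sym}, plug $\Pi_{\cP}$ into Theorem~\ref{theo:lowfreq}, and tune $a$, $b$, $\gamma$ to hit the parameters $\eps_1,\eps_2,\alpha$.
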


A natural question related to tolerant testing is: 
\begin{center}
    \emph{How many samples are required to learn a distribution?}
\end{center}

As pointed out earlier, any distribution can be learnt using $\Oh(n)$ samples.
But what if the distribution happens to be \emph{very concentrated}?
We present an upper bound result for learning a distribution, in which the sample complexity depends on the {minimum} cardinality of any set $S \subseteq [n]$ over which the unknown distribution is concentrated.

\begin{theo}[Informal]\label{thm:nonconcub}
{To learn a distribution approximately, $\Oh(\size{S})$ samples are enough, where $S\subseteq [n]$ is an unknown set of minimum cardinality whose mass is close to $1$. Note that $\size{S}$ is also unknown, and the algorithm adapts to it. \remove{Note that trivially $S$ can be set to $[n]$. But $\size{S}$ can be much less than $n$ as we need the mass of $S$ and $1$ differ by at most a factor that depends on the proximity parameters.}}
\end{theo}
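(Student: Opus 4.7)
Plan. Since $|S|$ is unknown, I would use an adaptive, doubling-budget learner with a purely data-dependent stopping rule. Fix a count threshold $\tau = \Theta(\log n / \eps^{2})$ and start with $m_1 = \Theta(\log n / \eps^2)$ samples, doubling $m_i = 2 m_{i-1}$ in each subsequent phase. In phase $i$ let $c_j$ be the number of samples equal to $j$; let $H_i = \{j : c_j \ge \tau\}$ be the ``heavy'' set and declare the phase-$i$ estimate $\widetilde D_i(j) = c_j/m_i$ for $j \in H_i$ and $\widetilde D_i(j) = 0$ otherwise. Terminate and return $\widetilde D_i$ the first time the empirical tail mass $T_i := 1 - \sum_{j \in H_i} c_j/m_i$ drops below $\eps/8$.

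Correctness. Decompose $\|\widetilde D_i - D\|_1 = \sum_{j \in H_i} |c_j/m_i - D(j)| + \sum_{j \notin H_i} D(j)$. For the first summand, per-element Bernstein combined with Cauchy--Schwarz over $H_i$, using the structural bound $|H_i| \le m_i/\tau$, yields $\sum_{j \in H_i}|c_j/m_i - D(j)| \le \Oh(\sqrt{|H_i|\log n /m_i}) = \Oh(\sqrt{\log n / \tau}) = \Oh(\eps)$. For the second summand, I would partition $[n]$ by the relationship between $D(j)$ and the detection threshold $\tau/m_i$ into \emph{small} atoms ($D(j) < \tau/(2m_i)$), \emph{ambiguous} atoms ($D(j) \in [\tau/(2m_i), 2\tau/m_i]$), and \emph{large} atoms ($D(j) > 2\tau/m_i$). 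Chernoff guarantees that, with high probability, every large atom lies in $H_i$ and every small atom lies outside of it. A binomial concentration on the aggregate mass of small atoms, together with a per-element Bernstein bound on the ambiguous atoms (of which there are at most $2m_i/\tau$), then shows $\sum_{j \notin H_i} D(j) \le T_i + \Oh(\eps)$. A union bound over the $\Oh(\log n)$ phases makes the analysis uniform.

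Sample efficiency. Let $s = |S|$ so that $D(S) \ge 1 - \eps/10$ by hypothesis. Choose the first phase $i^*$ with $m_{i^*} \ge C s \tau /\eps^{2}$ for a large constant $C$. A Chernoff argument shows that every $j \in S$ with $D(j) \ge \eps/(10 s)$ is counted at least $\tau$ times with high probability, so $H_{i^*}$ captures at least $D(S) - \eps/10 \ge 1 - \eps/5$ of the mass; together with $D([n]\setminus S) \le \eps/10$ this forces $T_{i^*} \le \eps/8$, triggering termination at $m_{i^*} = \tOh(s)$ samples.

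Main obstacle. The crux of the argument is correctness at possibly \emph{early} termination: if the condition $T_i \le \eps/8$ fires at a phase where $m_i$ is much smaller than $s/\eps^{2}$, the output must still be $\eps$-close to $D$. This is precisely the regime where atoms of probability $\Theta(\tau/m_i)$ might slip out of $H_i$, and whose aggregate mass is not individually controllable. The resolution lies in the small/ambiguous/large trichotomy above: the concentration bounds linking $T_i$ to the true tail mass, and the bounds controlling the per-element error on $H_i$, require only $\tau = \Theta(\log n/\eps^2)$ — they do \emph{not} need $m_i$ to grow with $s$ beyond the starting budget. Carrying through these Bernstein-type bounds simultaneously for all phases, using fresh samples in each batch to preserve independence, is the principal technical care required.
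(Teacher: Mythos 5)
Your approach is genuinely different from the paper's. The paper also doubles a guess for $|S|$, but at each scale $s$ it builds the \emph{full} empirical distribution $D_s$ and invokes the Valiant--Valiant tolerant identity tester (Lemma~\ref{pro:val-tol}) as a black box to decide whether $\|D_s - D\|_1$ is already small enough to stop; the per-phase learning guarantee (Lemma~\ref{lem:know}) is proved by a union bound over the $2^{|S|}$ subsets of $S$, which is what gives the formal bound $\Oh(|S|/\delta^2)$ with \emph{no} $\log n$ factor. Your stopping rule is instead purely sample-based: keep only elements counted at least $\tau$ times, halt when the empirical tail mass is small. This is an elegant, self-contained alternative that avoids any appeal to a tolerant tester. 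However, because your per-element Bernstein bounds must be union-bounded over all of $[n]$ (and over phases), $\tau$ and the starting budget each carry a $\log n$, so your sample cost is $\tilde{\Oh}(|S|\log n \cdot \mathrm{poly}(1/\eps))$ --- adequate for the informal $\Oh(|S|)$ claim but a $\log n$ (and extra $1/\eps$ factors) worse than Theorem~\ref{theo:learn-main} as actually stated; it also does not beat $\Theta(\log n/\eps^2)$ samples when $|S|$ is very small, unlike the paper's algorithm which starts its guess at $s=1$.

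One substantive simplification: the small/ambiguous/large trichotomy you introduce for the tail term is not needed, and the ``early termination'' obstacle you flag is already resolved by your first bound. Since $\sum_j c_j/m_i = \sum_j D(j) = 1$, one has the \emph{exact} identity
\begin{equation*}
\sum_{j\notin H_i} D(j) \;=\; T_i \;+\; \sum_{j\in H_i}\bigl(c_j/m_i - D(j)\bigr),
\end{equation*}
so $\sum_{j\notin H_i} D(j) \le T_i + \sum_{j\in H_i}|c_j/m_i - D(j)|$ deterministically. Hence once you have the Cauchy--Schwarz/Bernstein bound $\sum_{j\in H_i}|c_j/m_i - D(j)| = \Oh(\eps)$ --- which, as you correctly note, relies only on the structural fact $|H_i|\le m_i/\tau$ and never requires $m_i$ to grow with $s$ --- the output is $\Oh(\eps)$-close whenever the stopping condition fires, at whatever phase. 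You should also cap the number of phases (say once $m_i$ exceeds $\Theta(n/\eps^2)$) and draw fresh samples each phase, as the paper does, so the union bound over phases ranges over a finite index set and the outputs are decoupled across phases.
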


Observe that we cannot learn a distribution supported on the set $S$ using $o(|S|)$ samples, so the above result is essentially tight.

\subsection{Related works}\label{sec:rel}

Several forms of distribution testing have been investigated for over a century in statistical theory~\cite{king1997guide, corder2014nonparametric}, while combinatorial properties of distributions have been explored over the last two decades in Algorithm Theory, Machine Learning and Information Theory~\cite{goldreich2017introduction,MacKay2003,cover1999elements,bhattacharyya2022property}.
In Algorithm Theory, the investigation into testing properties of distributions started with the work of Goldreich and Ron~\cite{DBLP:journals/eccc/ECCC-TR00-020}, even though it was not directly stated there in these terms. Batu, Fortnow, Rubinfeld, Smith and White~\cite{Batu00} launched the intensive study of property testing of distributions with the problem of equivalence testing~\footnote{Given two unknown probability distributions that can be accessed via samples from their respective oracles, equivalence testing refers to the problem of distinguishing whether they are same or far from each other.}. Later, Batu, Fischer, Fortnow, Kumar, Rubinfeld and White~\cite{batu2001testing} studied the problems of identity and independence testing of distributions~\footnote{Given an unknown distribution accessible via samples, the problem of identity testing refers to the problem of distinguishing whether it is identical to a known distribution or far from it.}. Since then there has been a flurry of interesting works in this model. For example, Paninski~\cite{paninski2008coincidence} proved tight bounds on uniformity testing, Valiant and Valiant~\cite{ValiantV11} resolved the tolerant sample complexity for a large class of label-invariant properties that includes uniformity testing, Acharya, Daskalakis and Kamath~\cite{acharya2015optimal} proved various optimal testing results under several distance measures, and Valiant and Valiant~\cite{valiant2017automatic} studied the sample complexity of instance optimal identity testing. In \cite{batu2017generalized}, Batu and Cannone studied the problem of \emph{generalized uniformity testing}, where the distribution is promised to be supported on an unknown set $S$, and proved a tight bound of $\tilde{\Theta}(\size{S}^{2/3})$ samples for non-tolerant uniformity testing. This is in contrast to the non-tolerant uniformity testing of a distribution supported over $[n]$, whose sample complexity is $\Theta(\sqrt{n})$, ignoring the dependence on the proximity parameter.
Daskalakis, Kamath and Wright~\cite{daskalakis2018distribution} studied the problem of tolerant testing under various distance measures. Very recently,  Canonne, Jain, Kamath and Li~\cite{canonne2021price} revisited the problem of determining the sample complexity of tolerant identity testing, where they proved the optimal dependence on the proximity parameters. Going beyond studying specific properties, Canonne, Diakonikolas, Gouleakis and Rubinfeld~\cite{canonne2018testing} studied the class of \emph{shape-restricted} properties of a distribution, a condition general enough to contain several interesting properties like monotonicity, log-concavity, $t$-modality etc. Their result was later improved by Fischer, Lachish and Vasudev~\cite{fischer2019improving}.
See the surveys of Cannone~\cite{canonne2020survey, CanonneTopicsDT2022} for a more exhaustive list.

While the most studied works concentrate on non-tolerant testing of distributions, a natural extension is to test such properties tolerantly. Since the introduction of tolerant testing in the pioneering work of Parnas, Ron and Rubinfeld~\cite{parnas2006tolerant}, that defined this notion for classical (non-distribution) property testing, there have been several works in this framework. Note that it is nontrivial in many cases to construct tolerant testers from their non-tolerant counterparts, as in the case of tolerant junta testing~\cite{blais2019tolerant} for example. In a series of works, it has been proved that tolerant testing of the most natural distribution properties, like  uniformity, requires an almost linear number of samples~\cite{valiant2011testing, ValiantV11}~\footnote{To be precise, the exact lower bounds for non-tolerant uniformity testing is $\Omega(\sqrt{n})$, and for tolerant uniformity testing it is $\Omega(\frac{n}{\log n})$, where $n$ is the support size of the distribution and the proximity parameter $\eps$ is constant.}.
Now a natural question arises about how the sampling complexity of tolerant testing is related to non-tolerant testing of distributions in general. To the best of our knowledge, there is no known example with more than a quadratic gap.

It would also be interesting to bound the gap for sample-based testing as defined in the work of Goldreich and Ron~\cite{goldreich2016sample}. This model was investigated further in the work of  Fischer, Lachish and Vasudev~\cite{fischer2015trading}, where a general upper bound for non-tolerant sample-based testing of strongly testable properties was provided.

\paragraph*{Organization of the paper}
Section~\ref{sec:pre} contains the definitions used throughout the paper. The high-level overview of the proofs of all our theorems are presented in Section~\ref{sec:overview}. Section~\ref{sec:symresult} contains the formal statement and proof of Theorem~\ref{thm:symtolerant}. Theorem~\ref{thm:inf-conv} has been formally stated and proved in Section~\ref{sec:consttoltest}. Finally Theorem~\ref{thm:nonconclb}, Theorem~\ref{thm:tol-sym-non} and Theorem~\ref{thm:nonconcub} are formally stated and proved in Section~\ref{sec:nonconclb}, Section~\ref{sec:thm-noncon} and Section~\ref{sec:tol-ub} respectively.

\section{Notation and definitions}
\label{sec:pre}

A probability distribution $D$ over a universe $\Omega=[n]$ is a non-negative function $D:\Omega \rightarrow [0,1]$ such that $\sum_{i \in \Omega}D(i)=1$.
For $S \subseteq \Omega$, the mass of $S$ is defined as $D(S) = \sum_{i \in S}D(i)$, where $D(i)$ is the mass of $i$ in $D$. The support of a probability distribution $D$ on $\Omega$ is denoted by $\mbox{{\sc Supp}}(D)$. For any distribution $D$, by the top $t$ elements of $D$, we refer to the first $t$ elements in the support of $D$ when the elements in the support are sorted according to a non-increasing order of their probability masses in $D$. When we write $\tOh(\cdot)$, it sometimes suppresses a poly-logarithmic term in $n$ and the inverse of the proximity parameter(s), as well as the inverse of the difference of two proximity parameters. Although there are several other distance measures, in this work, we mainly focus on the $\ell_1$ distance.

\begin{defi}[Distribution property]
Let $\cD$ denote the set of all distributions over $\Omega$. A \emph{distribution property} $\cP$ is a topologically closed subset of $\cD$~\footnote{We put this restriction to avoid formalism issues. In particular, the investigated distribution properties that we know of (such as monotonicity and being a k-histogram) are topologically closed.}. A distribution $D \in \cP$ is said to be \emph{in the property} or to {\em satisfy} the property. Otherwise, $D$ is said to be \emph{not in the property} or \emph{to not satisfy} the property.
\end{defi}

\begin{defi}[Label-invariant property]
Let us consider a property $\cP$. For a distribution $D$ and a permutation $\sigma:\Omega \rightarrow \Omega$, consider the distribution $D_{\sigma}$ defined as $D_{\sigma}(\sigma(i))=D(i)$ (equivalently, $D_{\sigma}(i)=D(\sigma^{-1}(i)))$ for each $i \in \Omega$. If for every distribution $D$ in $\cP$, $D_\sigma$ is also in $\cP$ for every permutation $\sigma$, then the property $\cP$ is said to be \emph{label-invariant}.
\end{defi}

\begin{defi}[Distance between two distributions]
The distance between two distributions $D_1$ and $D_2$ over $\Omega$ is the standard $\ell_1$ distance between them, which is defined as $||D_1-D_2||_1:=\sum\limits_{i \in \Omega}\size{D_1(i)-D_2(i)}$. {For $\eta \in [0,2]$, $D_1$ and $D_2$ are said to be $\eta$-close to each other if $||D_1-D_2||_1 \leq \eta$. Similarly, for $\eps \in [0,2]$, $D_1$ and $D_2$ are said to be $\eps$-far from each other if $||D_1-D_2||_1 \geq \eps$}.
\end{defi}

\begin{defi}[Distance of a distribution from a property]\label{defi:distprop}
The distance of a distribution $D$ from a property $\cP$ is the minimum $\ell_1$-distance between $D$ and any distribution in $\cP$. For $\eta \in [0,2]$, a distribution $D$ is said to be $\eta$-close to $\cP$ if the distance of  $D$ from $\cP$ is at most $\eta$. Analogously, for $\eps \in [0,2]$, a distribution $D$ is said to be $\eps$-far from $\cP$ if the distance of $D$ from $\cP$ is at least $\eps$.
\end{defi}

\begin{defi}[$(\eta, \eps)$-tester]
An \emph{$(\eta, \eps)$-tester} for a distribution property is a randomized algorithm that has sample access to the unknown distribution (upon query it can receive elements of $\Omega$, each drawn according to the unknown distribution, independently of any previous query or the algorithm's private coins), and distinguishes whether the distribution is $\eta$-close to the property or $\eps$-far from the property, with probability at least $\frac{2}{3}$, where $\eta$ and $\eps$ are proximity parameters such that $0 \leq \eta < \eps \leq 2$. The tester is said to be \emph{tolerant} when $\eta > 0$, and \emph{non-tolerant} when $\eta = 0$.
\end{defi}

Now we define the notions of non-concentrated distributions and non-concentrated properties.

\begin{defi}[Non-Concentrated distribution]\label{defi:non-conc-dist}
A distribution $D$ over the domain $\Omega=[n]$ is said to be \emph{$(\alpha, \beta)$-non-concentrated} if for any set $S\subseteq \Omega$ with size $\beta n$, the probability mass on $S$ is at least $\alpha$, where $\alpha$ and $\beta$ are two parameters such that $0 < \alpha \leq \beta < \frac{1}{2}$.
\end{defi}

\begin{defi}[Non-Concentrated property]\label{defi:non-conc-prop} Let $0 < \alpha \leq \beta < \frac{1}{2}$. 
A distribution property $\mathcal{P}$ is defined to be \emph{$(\alpha, \beta)$-non-concentrated}, if all distributions in $\cP$ are $(\alpha, \beta)$-non-concentrated. 
\end{defi}

Note that the uniform distribution is $(\alpha,\alpha)$-non-concentrated for every $\alpha$, and so is the property of being identical to the uniform distribution. Also, for any $0< \alpha < \frac{1}{2}$ such that $\alpha n$ is  an integer, the uniform distribution is the only $(\alpha,\alpha)$-non-concentrated one. Finally, observe that any arbitrary distribution is both $(0,\beta)$-non-concentrated and $(\alpha,1)$-non-concentrated, for any $\alpha, \beta \in (0, 1)$.

\subsection{Useful concentration bounds}\label{sec:prelim}

Now we state some large deviation results that will be used throughout the paper.

\begin{lem}[{\bf Multiplicative Chernoff bound}~\cite{DubhashiP09}]
\label{lem:cher_bound1}
Let $X_1, \ldots, X_n$ be independent random variables such that $X_i \in [0,1]$. For $X=\sum\limits_{i=1}^n X_i$ and $\mu=\E[X]$, the following holds for any $0\leq \delta \leq 1$.
$$ \pr(\size{X-\mu} \geq \delta\mu) \leq 2\exp{\left(-\frac{\mu \delta^2}{3}\right)}$$
\end{lem}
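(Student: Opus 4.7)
Since this lemma is a classical Chernoff-style concentration bound cited from a standard textbook reference, the plan is to reproduce the standard Chernoff--Cram\'er moment generating function (MGF) argument rather than to look for anything novel. I would handle the upper and lower tails separately and then combine them by a union bound, which accounts for the factor of $2$ on the right-hand side.

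For the upper tail $\pr[X \geq (1+\delta)\mu]$, I would apply Markov's inequality to $e^{tX}$ for a parameter $t > 0$ to be chosen later. By independence, $\E[e^{tX}] = \prod_{i=1}^n \E[e^{tX_i}]$. Since each $X_i \in [0,1]$, convexity of $x \mapsto e^{tx}$ on $[0,1]$ gives $e^{tX_i} \leq 1 + (e^t - 1)X_i$, so $\E[e^{tX_i}] \leq 1 + (e^t - 1)\E[X_i] \leq \exp\bigl((e^t - 1)\E[X_i]\bigr)$. Multiplying these and using $\mu = \sum_i \E[X_i]$ yields
\[
\pr[X \geq (1+\delta)\mu] \leq \exp\bigl((e^t - 1)\mu - t(1+\delta)\mu\bigr).
\]
Optimizing with $t = \ln(1+\delta)$ gives the classical estimate $\pr[X \geq (1+\delta)\mu] \leq \bigl(e^\delta / (1+\delta)^{1+\delta}\bigr)^\mu$. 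An entirely analogous argument applied to $e^{-tX}$ yields $\pr[X \leq (1-\delta)\mu] \leq \bigl(e^{-\delta} / (1-\delta)^{1-\delta}\bigr)^\mu$ for $\delta \in (0,1)$.

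The remaining step is to convert both raw MGF bounds into the clean $\exp(-\mu\delta^2/3)$ form. The upper-tail case reduces to the real-analytic inequality $(1+\delta)\ln(1+\delta) - \delta \geq \delta^2/3$ on $[0,1]$; I would verify this by setting $f(\delta)$ to be the difference of the two sides, noting $f(0) = 0$, and checking that $f'(\delta) = \ln(1+\delta) - 2\delta/3 \geq 0$ via a short Taylor expansion (or by differentiating once more and noting concavity on the relevant range). The lower-tail case similarly reduces to $(1-\delta)\ln(1-\delta) + \delta \geq \delta^2/2$, which is in fact stronger than required. Taking the weaker of the two tail bounds on both sides and applying a union bound over the events $X - \mu \geq \delta\mu$ and $\mu - X \geq \delta\mu$ produces the factor $2$ and completes the proof.

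The only real obstacle is the elementary calculus step of establishing $(1+\delta)\ln(1+\delta) - \delta \geq \delta^2/3$ on $[0,1]$; everything else is formal manipulation of MGFs. Since the lemma is explicitly cited from Dubhashi--Panconesi, the cleanest option in the paper is to simply invoke their statement directly, but the above is the self-contained proof I would write if pressed.
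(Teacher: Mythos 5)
Your proof is correct and follows the standard Chernoff--Cram\'er moment generating function argument; note that the paper itself gives no proof of this lemma, simply citing it from Dubhashi and Panconesi, and your argument is essentially the one found in that reference. One small imprecision: to verify $f'(\delta)=\ln(1+\delta)-2\delta/3\geq 0$ on $[0,1]$, concavity of $f'$ does not hold on all of $[0,1]$ (since $f''(\delta)=\frac{1}{1+\delta}-\frac{2}{3}$ changes sign at $\delta=\frac{1}{2}$); the clean way is to observe $f'(0)=0$, $f'$ is increasing on $[0,\frac{1}{2}]$, decreasing on $[\frac{1}{2},1]$, and $f'(1)=\ln 2-\frac{2}{3}>0$, so $f'\geq 0$ throughout.
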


\begin{lem}[{\bf Additive Chernoff bound}~\cite{DubhashiP09}]
\label{lem:cher_bound2}
Let $X_1, \ldots, X_n$ be independent random variables such that $X_i \in [0,1]$. For $X=\sum\limits_{i=1}^n X_i$ and $\mu_l \leq \E[X] \leq \mu_h$, the following hold for any $\delta >0$.
\begin{itemize}
\item[(i)] $\pr \left( X \geq \mu_h + \delta \right) \leq \exp{\left(-\frac{2\delta^2}{n}\right)}$.
\item[(ii)] $\pr \left( X \leq \mu_l - \delta \right) \leq \exp{\left(-\frac{2\delta^2} { n}\right)}$.
\end{itemize}

\end{lem}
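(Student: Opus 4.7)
The plan is to prove this classical concentration inequality---Hoeffding's bound in its one-sided additive form---via the standard Chernoff-style exponential moment argument, treating the two tails separately but symmetrically. First I would handle part (i) by applying Markov's inequality to the random variable $e^{tX}$ for a positive parameter $t$ to be chosen later, obtaining
$$\pr\left(X \geq \mu_h + \delta\right) \leq e^{-t(\mu_h + \delta)} \, \E\!\left[e^{tX}\right].$$
Then I would exploit independence of the $X_i$ to factorize the moment generating function as $\E[e^{tX}] = \prod_{i=1}^n \E[e^{tX_i}]$, reducing the task to controlling a single-variable moment generating function.

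The central ingredient is \emph{Hoeffding's lemma} applied to each summand: since $X_i \in [0,1]$ is bounded in an interval of length one, writing $Y_i := X_i - \E[X_i]$ and exploiting convexity of the exponential on $[0,1]$ yields $\E[e^{tY_i}] \leq e^{t^2/8}$. Multiplying over $i$ and using the hypothesis $\E[X] \leq \mu_h$ to absorb the $\mu_h$ contribution coming from the Markov step, the right-hand side collapses to $\exp(-t\delta + nt^2/8)$. A one-variable optimization over $t > 0$---with optimum $t = 4\delta/n$---produces exactly the desired exponent $-2\delta^2/n$.

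For part (ii), a symmetric argument goes through either by applying part (i) to $-X_i$ after shifting, or more cleanly to $1 - X_i \in [0,1]$, whose interval length is again one so that the constant in Hoeffding's lemma is preserved. The conclusion $\pr(X \leq \mu_l - \delta) \leq \exp(-2\delta^2/n)$ then follows with no additional work, this time using $\E[X] \geq \mu_l$ to absorb the $\mu_l$ term.

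The main obstacle, though a standard one rather than a genuinely new technical difficulty, is establishing Hoeffding's lemma itself: for a zero-mean random variable $Y$ supported in $[a,b]$, one must show $\E[e^{tY}] \leq \exp(t^2(b-a)^2/8)$. The usual route is to bound $e^{tY}$ above by the secant line through $(a, e^{ta})$ and $(b, e^{tb})$ via convexity, take expectations, and then show that the logarithm of the resulting function of $t$ has second derivative globally bounded by $(b-a)^2/4$, finishing with a Taylor expansion at $t = 0$. Once this lemma is in hand, the rest of the argument is a mechanical combination of Markov's inequality, independence, and single-variable calculus.
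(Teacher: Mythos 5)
Your proof is correct, but there is nothing in the paper to compare it against: the lemma is stated as a known result and cited directly to Dubhashi and Panconesi, with no proof supplied. Your argument is the standard derivation of Hoeffding's inequality---Markov's inequality applied to $e^{tX}$, factorization of the moment generating function via independence, Hoeffding's lemma $\E[e^{t(X_i - \E[X_i])}] \leq e^{t^2/8}$ for each $X_i \in [0,1]$, absorption of the $\E[X] \leq \mu_h$ (resp.\ $\E[X] \geq \mu_l$) hypothesis into the exponent, and optimization at $t = 4\delta/n$ to obtain $-2\delta^2/n$. The reduction of part (ii) to part (i) via the substitution $X_i \mapsto 1 - X_i$ is clean and preserves the interval length, hence the constant. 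This is exactly the proof one would find in the cited textbook, and it is complete once Hoeffding's lemma is established by the secant-line-plus-Taylor argument you outline.
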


\section{Overview of our results}\label{sec:overview}

In this section, we give an overview of our results as follows:

\subsection{Overview of proof of Theorem~\ref{thm:symtolerant} and Theorem~\ref{thm:inf-conv}}

We first show that for any label-invariant distribution property, the sample complexities of tolerant and non-tolerant testing are separated by at most a quadratic factor, ignoring poly-logarithmic terms. More specifically, in Theorem~\ref{thm:symtolerant}, we prove that for any label-invariant distribution property $\cP$ that has a non-tolerant tester with sample complexity $\Lambda$, there exists a tolerant tester for $\cP$ that uses $\widetilde{\Oh}(\Lambda^2)$ samples, ignoring poly-logarithmic factors. Since we can learn a distribution using $\Oh(n)$ samples, our proof is particularly useful when $\Lambda = o(\sqrt{n})$, where $n$ is the  size of the support of the distribution that is being tested.

To prove Theorem~\ref{thm:symtolerant} (restated as Theorem~\ref{thm:tvsnt}), we provide an algorithm for tolerant testing of $\cP$ with sample complexity $\tOh(\Lambda^2)$, based on the existence of a non-tolerant tester of $\cP$ with sample complexity $\Oh(\Lambda)$. Given the existence of such a non-tolerant tester with sample complexity $\Oh(\Lambda)$, one crucial observation that we use here is that there cannot be two distributions $D_1$ and $D_2$ that are identical on the elements with mass $\Omega(\frac{1}{\Lambda^2})$ (we call them \emph{high} elements), where $D_1$ is in the property $\cP$ while $D_2$ is far from $\cP$. This is formally stated as Lemma~\ref{lem:lbnon}. 

Given that the two distributions $D_1$ and $D_2$ are identical on all elements with mass  $\Omega(\frac{1}{\Lambda^2})$, by the birthday paradox, we can say that $\Oh(\Lambda)$ samples are not enough to obtain any \emph{low} elements, that is, elements with mass $o(\frac{1}{\Lambda^2})$, that appear more than once. Since the property $\cP$ is label-invariant, we can apply uniformly random permutations over the low elements of both $D_1$ and $D_2$, making the samples obtained from both $D_1$ and $D_2$ appear as two uniformly random sequences. Thus, from the view of any tester that takes only $\Oh(\Lambda)$ samples, $D_1$ and $D_2$ will appear the same, which would contradict the existence of a non-tolerant tester that distinguishes $D_1$ from $D_2$ using $\Oh(\Lambda)$ samples. At this point, we would like to point out that the proof of Lemma~\ref{lem:lbnon} only assumes the existence of a non-tolerant tester, and is oblivious to its internal details. Later, in Lemma~\ref{cor:lbnom}, we generalize this idea to show that when $D_1$ and $D_2$ are close with respect to the high elements, it cannot be the case that $D_1$ is in the property $\cP$, while $D_2$ is far from $\cP$. Although the proof follows a similar line to that of Lemma~\ref{lem:lbnon}, more careful analysis is required to prove Lemma~\ref{cor:lbnom}. Note that Lemma~\ref{cor:lbnom} is the main technical lemma required to prove Theorem~\ref{thm:symtolerant}.

Once we have Lemma~\ref{lem:lbnon} and Lemma~\ref{cor:lbnom}, we can describe the algorithm of Theorem~\ref{thm:symtolerant}. Broadly speaking, we show that \emph{partial learning} of the distribution is sufficient for constructing a tolerant tester for any label-invariant property, as opposed to the more familiar paradigm of \emph{testing by learning}~\cite{diakonikolas2007testing, servedio2010testing}. Using Lemma~\ref{cor:lbnom}, we show that estimating the masses of only the high elements is enough for us, along with the fact that the property $\cP$ that we are testing is label-invariant. Roughly, the algorithm has three steps. In the first step, we identify and measure the high elements of the unknown distribution $D$. In the second step, we construct a new distribution $\widetilde{D}$ that adheres to the high mass elements obtained from the first step. Finally, in the third step, we check whether there exists any distribution $D_1$ in $\cP$ that is close to $\widetilde{D}$. If such a distribution exists, we accept, and otherwise we reject. In the first step, we need $\tOh(\Lambda^2)$ many samples to correctly estimate the masses of the high elements, which dominates the sample complexity of our tolerant tester.

It is important to note that the \emph{computational efficiency} of the tolerant tester depends on how fast we can check whether the distribution $\widetilde{D}$ (constructed by the algorithm) is {\em close} to a known property $\cP$, where we have the complete description of $\widetilde{D}$. Later, in Theorem~\ref{thm:inf-conv} (restated as Theorem~\ref{theo:tol-conv}), we show that when the property $\cP$ can be expressed as a feasible solution to a set of linear inequalities, there exists an algorithm that tolerantly tests for $\cP$ in time polynomial in the support size of the distribution and the number of linear inequalities required to represent it. The algorithm is similar to that of Theorem~\ref{thm:symtolerant}, whereas its polynomial running time follows by using the Ellipsoid method.

\subsection{Overview of Theorem~\ref{thm:nonconclb} and Theorem~\ref{thm:tol-sym-non}}
In Theorem~\ref{thm:nonconclb}, we show that in order to non-tolerantly test any non-concentrated property (defined in Definition~\ref{defi:non-conc-prop}), $\Omega(\sqrt{n})$ samples are required, where $n$ denotes the support size of the distribution. Before directly proceeding to prove the result, as a warm-up, we first show an analogous result for label-invariant non-concentrated properties in Theorem~\ref{theo:sym}. To prove the theorem, for any distribution $D_{yes}$ in the label-invariant non-concentrated property $\cP$ that we are testing, we construct a new distribution $D_{no}$ that is far from $\cP$, whose support is a subset of the support of $D_{yes}$. The two distributions are identical over their high probability elements, and they only differ in their \emph{low} probability elements, where a low probability element is an element with mass $\Oh(\frac{1}{n})$. Since $D_{yes}$ and $D_{no}$ differ only on the elements with mass $\Oh(\frac{1}{n})$, by the birthday paradox and the fact that the property is label-invariant, any tester that takes $o(\sqrt{n})$ samples cannot distinguish between $D_{yes}$ and $D_{no}$, and the result follows. We note that the proof of Theorem~\ref{theo:sym} is a generalization of the lower bound proof for uniformity testing.

Though the proof of Theorem~\ref{thm:nonconclb} (restated as Theorem~\ref{thm:non-sym}) follows similarly to that of Theorem~\ref{theo:sym}, delicate analysis is required to take care of the fact that the properties are no longer label-invariant. We also discuss briefly the reason why the technique used to prove Theorem~\ref{theo:sym} does not work to prove Theorem~\ref{thm:nonconclb}, in the beginning of Section~\ref{sec:thm-noncon}.

As a step further, in Theorem~\ref{thm:tol-sym-non} (restated as Theorem~\ref{theo:tol-sym-lb}), we show that $\Omega(n^{1- o(1)})$ samples are necessary to tolerantly test any non-concentrated label-invariant property. This proof follows from an application of the \emph{low frequency blindness theorem} of Valiant~\cite{valiant2011testing}. The question of tolerant testing of general non-concentrated properties remains open.

\subsection{Overview of Theorem~\ref{thm:nonconcub}}

Finally, we consider the problem of learning a distribution $D$, where $D$ is \emph{concentrated} over a unknown set $S \subseteq \Omega$. In Theorem~\ref{thm:nonconcub} (restated as Theorem~\ref{theo:learn-main}), we give an algorithm that achieves this with $\Oh(\size{S})$ many samples, even when $|S|$ is also unknown. Note that this problem is reminiscent of the folklore result of learning a distribution over any set $S$ that takes $\Oh(\size{S})$ samples. However, the folklore result holds only for the case where the set $S$ is known~\footnote{There are also prior results where only $|S|$ is known, such as in the work of Acharya, Diakonikolas, Li and Schmidt~\cite{acharya2017sample}.}.

Broadly, the algorithm iterates over possible values of $|S|$. Starting from $s=1$, we first take $s$ many samples from the the unknown distribution $D$, and construct a new empirical distribution $D_s$ based upon the samples obtained. Once we have the distribution $D_s$, we apply the result of Valiant and Valiant~\cite{ValiantV11} to test whether the unknown distribution $D$ is close to the newly constructed distribution $D_s$, by using number of samples that is slightly smaller than $s$. If $D_s$ is close to $D$, we report the distribution $D_s$ as the output and terminate the algorithm. Otherwise, we double the value of $s$ and perform another iteration of the two steps as mentioned above. Finally, we show that when $s \geq \size{S}$, where $S$ is the unknown set on which $D$ is concentrated, $D_s$ will be close to $D$ with high probability, and we will output a distribution satisfying the statement of Theorem~\ref{thm:nonconcub}. To the best of our knowledge, this is the first result of a tester of this kind that adapts to an unknown support size $|S|$.

\section{Non-tolerant vs.\ tolerant sample complexities of label-invariant properties (Proof of Theorem~\ref{thm:symtolerant})}\label{sec:symresult}

We will prove that for any label-invariant property, the sample complexities of tolerant and non-tolerant testing are separated by at most a quadratic factor (ignoring some poly-logarithmic factors). Formally, the result is stated as follows:

\begin{theo}[Theorem~\ref{thm:symtolerant} formalized]\label{thm:tvsnt}
Let $\mathcal{P}$ be a label-invariant distribution property, for which there exists an  $(0, \eps)$-tester (non-tolerant tester) with sample complexity $\Lambda(n,\eps)$,  where $\Lambda \in \N$ and $0 < \eps \leq 2$. Then for any $ \gamma_1, \gamma_2$ with $\gamma_1<\gamma_2$ and $0<\gamma_2 + \eps < 2$, there exists a $(\gamma_1, \gamma_2 + \eps)$-tester (tolerant tester) that has sample complexity
$\Oh\left(\frac{1}{(\gamma_2 - \gamma_1)^3} \cdot \min\{{\Lambda^2 \log^2 \Lambda}, n\}\right)$, where $\Lambda=\Lambda(n,\epsilon)$, and $n$ is the size of the support of the distribution.
\end{theo}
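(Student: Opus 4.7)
I follow the partial-learning strategy outlined in Section~\ref{sec:overview}. The hypothesized non-tolerant $\Lambda$-sample tester will be used indirectly through the structural Lemma~\ref{cor:lbnom}, which states that two distributions that approximately agree on their \emph{heavy} coordinates --- those with mass at least a $\Theta(1/\Lambda^{2})$ threshold --- have comparable $\ell_{1}$-distances to $\mathcal{P}$, the comparison tightening as the estimation on the heavy part is sharpened. Granted this lemma, it suffices to learn only the heavy portion of $D$ accurately, replace the light part by a canonical flat tail, and compute the distance from the resulting surrogate distribution to $\mathcal{P}$.

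\textbf{Algorithm.} Set the heaviness threshold $\tau = c_{1}(\gamma_{2}-\gamma_{1})/\Lambda^{2}$ with a small absolute constant $c_{1}$. Draw $s = \widetilde{\Theta}(\Lambda^{2}/(\gamma_{2}-\gamma_{1})^{3})$ independent samples from $D$, compute empirical frequencies $\hat p_{i}$, and form the candidate heavy set $H = \{i : \hat p_{i} \ge \tau/2\}$. A multiplicative Chernoff argument (Lemma~\ref{lem:cher_bound1}) shows that, with probability at least $9/10$, the set $H$ contains every truly heavy index (those with $D(i) \ge \tau$) and every $i \in H$ satisfies $|\hat p_{i}-D(i)| \le c_{2}(\gamma_{2}-\gamma_{1})\tau$; in particular $|H| \le 2/\tau = \widetilde{O}(\Lambda^{2})$. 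Now define the surrogate $\widetilde D$ on $[n]$ by $\widetilde D(i) = \hat p_{i}$ for $i \in H$ and $\widetilde D(i) = (1 - \sum_{j \in H}\hat p_{j})/(n-|H|)$ for $i \notin H$. Accept iff $d(\widetilde D,\mathcal{P}) := \min_{D' \in \mathcal{P}}\|\widetilde D - D'\|_{1} \le (\gamma_{1}+\gamma_{2}+\varepsilon)/2$; this last step is information-theoretic and need not be efficient for the theorem as stated (efficiency is deferred to Theorem~\ref{thm:inf-conv}).

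\textbf{Correctness.} By construction $\widetilde D$ shares $D$'s heavy profile up to tiny additive error on each heavy coordinate and has a fully flat light profile. Feeding this into Lemma~\ref{cor:lbnom} yields $|d(\widetilde D,\mathcal{P}) - d(D,\mathcal{P})| \le (\gamma_{2}-\gamma_{1})/4$, with the constants $c_{1},c_{2}$ tuned so that the per-coordinate estimation accuracy meets the lemma's hypothesis. In the completeness case $d(D,\mathcal{P}) \le \gamma_{1}$, so $d(\widetilde D,\mathcal{P}) \le \gamma_{1} + (\gamma_{2}-\gamma_{1})/4 < (\gamma_{1}+\gamma_{2}+\varepsilon)/2$. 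In the soundness case $d(D,\mathcal{P}) \ge \gamma_{2}+\varepsilon$, so $d(\widetilde D,\mathcal{P}) \ge \gamma_{2}+\varepsilon - (\gamma_{2}-\gamma_{1})/4 > (\gamma_{1}+\gamma_{2}+\varepsilon)/2$. The $\min\{\Lambda^{2}\log^{2}\Lambda, n\}$ cap in the claimed bound is handled separately: whenever $\Lambda^{2}\log^{2}\Lambda > n$, a straightforward empirical estimator learns $D$ to $\ell_{1}$-precision $(\gamma_{2}-\gamma_{1})/4$ using $\widetilde{O}(n/(\gamma_{2}-\gamma_{1})^{2})$ samples and the test reduces to direct distance computation.

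\textbf{Main obstacle.} The entire weight of the proof sits on Lemma~\ref{cor:lbnom}. The base case Lemma~\ref{lem:lbnon}, where the heavy coordinates of $D_{1}$ and $D_{2}$ agree exactly, is handled by a birthday-paradox coupling: since any tester reads only $\Lambda$ samples and each light coordinate carries mass $o(1/\Lambda^{2})$, a uniformly random relabelling of the light coordinates (which preserves $\mathcal{P}$ by label-invariance) produces two statistically identical sample sequences, yielding a contradiction with the existence of a non-tolerant $\varepsilon$-tester. Generalizing from exact to approximate heavy-coordinate agreement is the delicate step: one must couple the two permutation-randomized samplers while absorbing the small per-coordinate discrepancies arising from estimation error, and it is this total-variation accounting --- combined with the accuracy required on each heavy coordinate to keep the coupling viable --- that produces the $(\gamma_{2}-\gamma_{1})^{-3}$ factor in the final sample complexity. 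I expect this TV-coupling argument, rather than the algorithm or its top-level correctness, to be the principal technical burden of the proof.
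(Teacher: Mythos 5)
Your algorithm deviates from the paper's in a way that breaks correctness, and the gap is not cosmetic. You propose to accept iff $d(\widetilde D,\mathcal{P}) := \min_{D' \in \mathcal{P}}\|\widetilde D - D'\|_{1} \le (\gamma_{1}+\gamma_{2}+\varepsilon)/2$, and you justify this by asserting that Lemma~\ref{cor:lbnom} yields $|d(\widetilde D,\mathcal{P}) - d(D,\mathcal{P})| \le (\gamma_{2}-\gamma_{1})/4$. Lemma~\ref{cor:lbnom} does not say this, and the asserted inequality is generally false. The lemma controls a \emph{truncated} distance, namely
\[
\sum_{x\in H}\bigl|D_1(x)-\widetilde D(x)\bigr| \;+\; \bigl|D_1(\Omega\setminus H)-\widetilde D(\Omega\setminus H)\bigr|,
\]
in which the contribution of $\Omega\setminus H$ enters only through the single aggregate-mass term, not coordinate-wise. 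The full $\ell_{1}$ distance $\|\widetilde D - D_1\|_1$ also includes $\sum_{x\in\Omega\setminus H}|D_1(x)-\widetilde D(x)|$, and this can be close to $2$ even when the aggregate masses on $\Omega\setminus H$ agree: your surrogate has a perfectly flat tail, whereas the nearest $D_1\in\mathcal{P}$ has whatever light-coordinate profile the property forces, and no permutation (label-invariance notwithstanding) can flatten a non-flat tail. Consequently $d(\widetilde D,\mathcal{P})$ can be bounded away from zero even when $D\in\mathcal{P}$, so your threshold test fails completeness. The paper avoids this precisely by making the acceptance criterion \textbf{(A)} a test on the truncated distance together with the support-containment condition \textbf{(B)} $\high_{1/q^2}(D_1)\subseteq H$, which is exactly the object Lemma~\ref{cor:lbnom} certifies in both directions. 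Without the support condition and without replacing the light-tail coordinate-wise sum by the aggregate term, the reduction to Lemma~\ref{cor:lbnom} simply does not go through.

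A second, smaller mismatch: you set $H = \{i : \hat p_i \ge \tau/2\}$, a purely empirical heavy set, but Lemma~\ref{cor:lbnom} is stated for a set $H$ that contains the top $q^2$ elements of the relevant comparison distribution $\overline D$ and (implicitly in the proof of part~(2), via the construction of $\widehat D$ on a $2q^2$-element subset $T\subset\Omega\setminus H$) has enough room on both sides. The paper therefore pads $H$ by unioning the observed samples with $q^2$ arbitrary unseen elements, and then performs the swap argument of Lemma~\ref{lem:lem4} to produce a distribution $\overline D$ that genuinely has its $q^2$ largest coordinates inside $H$. Your construction does not guarantee $|H|\ge q^2$ (e.g., a near point mass yields a tiny $H$), so the preconditions of the lemma can fail. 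Both issues --- swap the $\ell_1$ criterion for the truncated one with the support constraint, and pad $H$ and argue via $\overline D$ --- need to be fixed before the reduction to Lemma~\ref{cor:lbnom} is valid.
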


Let us assume that $D$ is the unknown distribution and $\Lambda(n,\epsilon) \geq \Omega(\frac{1}{\eps})$~\footnote{This is a reasonable assumption for any non-trivial property.}. First note that if $\Lambda=\Omega{(\sqrt{n})}$, then we can construct a distribution $\widehat{D}$ such that $||D-\widehat{D}||_1 < \frac{\gamma_2 - \gamma_1 + \eps}{2}$, by using $\Oh\left(\frac{n }{(\gamma_2 - \gamma_1 + \eps)^2}\right)$ samples from $D$. Thereafter we can report $D$ to be $\gamma_1$-close to the property if and only if $\widehat{D}$ is $\frac{\gamma_2+ \gamma_1 + \eps}{2}$-close to the property. In what follows, we discuss an algorithm with sample complexity $\tOh(\Lambda^2)$ when $\Lambda=o(\sqrt{n})$. Also, we assume that $n$ and $\Lambda$ are larger than some suitable constant. Otherwise, the theorem becomes trivial.

The idea behind the proof is to classify the elements of $\Omega$ with respect to their masses in $D$ into \emph{high} and \emph{low}, as formally defined below in Definition~\ref{defi:highlow}. We argue that since $\mathcal{P}$ is $(0, \eps)$-testable using  $\Lambda(n,\eps) = \Oh(q)$ samples, there cannot be two distributions $D_1$ and $D_2$ that are identical on all elements whose probability mass is at least $\frac{1}{q^2}$, for $q=\theta(\Lambda)$ (the set $\high_{{1}/{q^2}}$ defined below), where $D_1 \in \mathcal{P}$ but $D_2$ is $\eps$-far from $\mathcal{P}$. We will formally show this in Lemma~\ref{lem:lbnon}, where we will use the fact that $\cP$ is label-invariant. Using Lemma~\ref{lem:lbnon}, we prove Lemma~\ref{cor:lbnom}, that (informally) says that if two distributions are close with respect to the high mass elements, then it is not possible that one distribution is close to $\cP$ while the other one is far from it.  In our algorithm, we intend to approximate the masses of the set
$\high_{{1}/{q^2}}$, and the term $\Lambda^2$ in the query complexity of our algorithm corresponds to that.

\begin{defi}\label{defi:highlow}
For a distribution $D$ over $\Omega$ and $0< \kappa < 1$, we define 
$$\high_{\kappa}(D) = \{x\in \Omega\ \mid\ D(x)\geq \kappa  \} 
$$
\end{defi}

Now we define a quantity $q \in \N$ where $q=\Theta(\Lambda)$~\footnote{Note that $q$ and $\Lambda$ are of the same order of magnitude. We have introduced $q$ for writing proofs more rigorously.}. Assume that $c^*$ is a suitable large constant (independent of $\Lambda$) such that, if we take $\Lambda$ many samples from a distribution, then with probability at least $\frac{3}{4}$, we will not get any sample $x$ whose mass is at most $(\frac{c^*}{\Lambda})^2$ more than once. We define 
\begin{equation}\label{eqn:lambda}
q:= \frac{\Lambda}{c^*}.
\end{equation}


We will complete the proof of Theorem~\ref{thm:tvsnt} by using the following two lemmas which we will prove later.

\begin{lem}\label{lem:lbnon}
Let $\mathcal{P}$ be  a label-invariant property that is $(0, \eps)$-testable using  $\Lambda(n,\eps)$ samples and consider $q$ as defined in Equation~\ref{eqn:lambda}. Let $D_1$ and $D_2$ be two distributions such that $\high_{{1}/{q^2}}(D_1) = \high_{{1}/{q^2}}(D_2)$, and for all $x\in \high_{{1}/{q^2}}(D_1)$, the probability of $x$ is the same for both distributions, that is, $D_1(x) = D_2(x)$. Then it is not possible that $D_1$ satisfies $\mathcal{P}$ while $D_2$ is $\eps$-far from satisfying $\mathcal{P}$.
\end{lem}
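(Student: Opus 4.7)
The plan is to argue by contradiction: assume $D_1 \in \mathcal{P}$ and $D_2$ is $\eps$-far from $\mathcal{P}$, and use the hypothesised $(0,\eps)$-tester $T$ (with sample complexity $\Lambda$) to derive an impossibility. Specifically, I intend to show that the statistical distance between the tester's view on $\Lambda$ samples from $D_1$ and from $D_2$ is strictly smaller than the $\ge 1/3$ acceptance gap that $T$ must exhibit between a yes-instance and a no-instance, yielding the desired contradiction. Crucially, this argument does not look inside $T$; it just exploits label-invariance together with the birthday bound built into the choice of~$c^*$.

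The next step is to set up the comparison. Let $H = \high_{1/q^2}(D_1) = \high_{1/q^2}(D_2)$ and let $L = \Omega \setminus H$. By hypothesis $D_1$ and $D_2$ agree on $H$, both in support and in mass, but on $L$ they may differ arbitrarily, both in support and in mass. To wash out this residual difference, I introduce a uniformly random permutation $\sigma$ of $\Omega$ that fixes every element of $H$ pointwise and permutes $L$ uniformly. By label-invariance of $\mathcal{P}$, for every realisation of $\sigma$ we have $D_1^\sigma \in \mathcal{P}$ and $D_2^\sigma$ is $\eps$-far from $\mathcal{P}$; therefore $T$ accepts $D_1^\sigma$ with probability at least $2/3$ and accepts $D_2^\sigma$ with probability at most $1/3$, and averaging over $\sigma$ preserves a gap of at least $1/3$.

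The heart of the proof is a coupling showing that this gap must actually be much smaller. Partition the $\Lambda$ samples drawn from $D_i$ into positions that land in $H$ versus $L$. Because $D_1$ and $D_2$ agree on $H$, the joint distribution of the positions and the identities of the high samples is identical for $i = 1, 2$; conditional on those positions, the remaining $k$ low samples are i.i.d.\ from the normalised restriction of $D_i$ to $L$. Let $E_i$ be the event that these $k$ low samples are pairwise distinct. By the defining property of $c^*$ in Equation~\ref{eqn:lambda}, applied via the birthday bound to elements of mass at most $1/q^2 = (c^*/\Lambda)^2$, $\Pr[\neg E_i]$ can be made smaller than any prescribed constant $\delta > 0$. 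Conditional on $E_i$ with distinct low samples $x_1,\ldots,x_k$, a direct symmetry calculation gives $\Pr_\sigma[\sigma(x_1)=y_1,\ldots,\sigma(x_k)=y_k] = (|L|-k)!/|L|!$ for every fixed distinct targets $y_1,\ldots,y_k \in L$, independently of the realisation of the $x_j$'s; so the images $\sigma(x_1),\ldots,\sigma(x_k)$ are uniformly distributed over ordered $k$-tuples of distinct elements of $L$.

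Consequently, conditional on $E_i$ the tester's full view from $D_i^\sigma$ has the \emph{same} explicit distribution for $i=1,2$, and the total variation distance between the two views is at most $\Pr[\neg E_1] + \Pr[\neg E_2] \le 2\delta$. Choosing $c^*$ so that $2\delta < 1/3$ (or, if one prefers to stick to the $3/4$ guarantee written in the definition of $c^*$, by first amplifying $T$'s success probability via a constant number of independent repetitions and a majority vote) contradicts the $\ge 1/3$ gap obtained above, completing the proof. The main obstacle I expect is executing the coupling cleanly despite $D_1|_L$ and $D_2|_L$ possibly having different supports inside $L$: the coupling has to be carried out at the level of the sequence of labelled samples (recording, at each position, whether it was a high- or a low-element sample) rather than at the level of the underlying low-part distributions themselves, with the random permutation then used to erase the identities of the otherwise-distinct low samples.
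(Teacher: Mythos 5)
Your proposal is correct and follows essentially the same strategy as the paper's proof: assume the contradiction, apply a uniformly random permutation of the low elements to exploit label-invariance, invoke the birthday bound from the choice of $c^*$ to argue that the low samples are collision-free with high probability, and conclude that the two sample sequences have small total variation distance, contradicting the assumed tester. The only stylistic difference is that the paper hands the tester the values on the high set (so it may ignore high samples), whereas you couple the high-sample positions and identities directly — this is arguably a cleaner way of making the same point, and your remark about amplification is unnecessary since conditioning on the shared collision-free event gives a TV bound of $\max(\Pr[\neg E_1],\Pr[\neg E_2]) \le 1/4 < 1/3$ rather than the sum.
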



\begin{lem}\label{cor:lbnom}
Let $\mathcal{P}$ be a label-invariant property that is $(0, \eps)$-testable using  $\Lambda(n,\eps)$ samples,  and consider $q$ as defined in Equation~\eqref{eqn:lambda}. Let $\overline{D}$ and $\widetilde{D}$ be two distributions over $\Omega$, where $\size{\Omega} > 4 q^2$, and let $H$ contain the top $q^2$ elements of $\overline{D}$. Also, assume that $\size{\widetilde{D}(\Omega\setminus H) -\overline{D}(\Omega\setminus H)} \leq \gamma$.
If \begin{equation}\label{eq:corlbnon}
    \sum\limits_{x\in H}\left|\overline{D}(x) - \widetilde{D}(x) \right| \leq \alpha,
    \end{equation} 
then the following hold:

\begin{enumerate}
    \item   If $\overline{D}$ is $\beta$-close to $\mathcal{P}$, then there exists a distribution $D_1$ in $\mathcal{P}$ such that  $\high_{1/q^2}(D_1) \subseteq H$ and
    \begin{equation}\label{eq:corlbnon1}
    \sum\limits_{ x\in H}\left|D_1(x) - \widetilde{D}(x) \right|  + \left|{D_1}(\Omega \setminus H) - \widetilde{D}(\Omega \setminus H) \right|\leq (\alpha + \beta + \gamma).
    \end{equation} 

    \item If $\overline{D}$ is $(\eps + 3\alpha + \beta + 2\gamma)$-far from $\mathcal{P}$ and $D_1$ is a distribution such that $\high_{1/q^2}(D_1) \subseteq H$ and
    \begin{equation}\label{eq:corlbnon2}
    \sum\limits_{x\in H}\left|D_1(x) - \widetilde{D}(x) \right| +  \left|{D_1}(\Omega \setminus H) - \widetilde{D}(\Omega \setminus H) \right|\leq (\alpha + \beta + \gamma),
    \end{equation} 
    then the distribution $D_1$ does not satisfy the property $\mathcal{P}$.
\end{enumerate}

\end{lem}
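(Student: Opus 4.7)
For Part 1, I plan to exploit label-invariance via a sorting permutation. Pick a witness $D_0 \in \mathcal{P}$ with $\|D_0 - \overline{D}\|_1 \leq \beta$ (guaranteed by $\beta$-closeness), and let $\sigma$ be the permutation sending the $k$-th largest value of $D_0$ to the position of the $k$-th largest mass of $\overline{D}$. Setting $D_1 := (D_0)_\sigma$, label-invariance gives $D_1 \in \mathcal{P}$, and the Hardy--Littlewood rearrangement inequality yields $\|D_1 - \overline{D}\|_1 \leq \|D_0 - \overline{D}\|_1 \leq \beta$. Since the top $q^2$ positions of $\overline{D}$ form $H$ and at most $q^2$ of $D_1$'s masses can be $\geq 1/q^2$, we get $\high_{1/q^2}(D_1) \subseteq H$. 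Equation~\eqref{eq:corlbnon1} then follows from a single triangle inequality comparing $D_1$ and $\widetilde{D}$ through $\overline{D}$, combining Equation~\eqref{eq:corlbnon}, the hypothesis $|\widetilde{D}(\Omega \setminus H) - \overline{D}(\Omega \setminus H)| \leq \gamma$, and the bound $\|D_1 - \overline{D}\|_1 \leq \beta$.

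For Part 2 I plan a contradiction argument driven by Lemma~\ref{lem:lbnon}. Assuming $D_1 \in \mathcal{P}$, I would construct a companion distribution $D_2$ that agrees with $D_1$ on all of $H$, and on $\Omega \setminus H$ distributes the remaining mass $D_1(\Omega \setminus H)$ with every entry strictly below $1/q^2$ while minimizing $\sum_{x \in \Omega \setminus H}|D_2(x) - \overline{D}(x)|$. Because $H_1 \subseteq H$ and $D_1(x) < 1/q^2$ for every $x \in H \setminus H_1$, this forces $\high_{1/q^2}(D_2) = H_1 = \high_{1/q^2}(D_1)$ with exact mass agreement on $H_1$, so the hypothesis of Lemma~\ref{lem:lbnon} holds. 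Feasibility of the construction on $\Omega \setminus H$ (mass balance plus the strict cap) is guaranteed by $|\Omega \setminus H| > 3q^2$, which follows from $|\Omega| > 4q^2$; any boundary tightness induced by the strict inequality is absorbed in an arbitrarily small perturbation. The optimal second sum then equals $|D_1(\Omega \setminus H) - \overline{D}(\Omega \setminus H)|$.

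Lemma~\ref{lem:lbnon} now forces the existence of some $D' \in \mathcal{P}$ with $\|D_2 - D'\|_1 < \eps$. To bound $\|D_2 - \overline{D}\|_1$, I write
\[
\|D_2 - \overline{D}\|_1 = \sum_{x \in H} |D_1(x) - \overline{D}(x)| + \sum_{x \in \Omega \setminus H} |D_2(x) - \overline{D}(x)|,
\]
upper-bound the second sum by $|D_1(\Omega \setminus H) - \overline{D}(\Omega \setminus H)|$, and route both terms through $\widetilde{D}$: combining $\sum_{x \in H} |D_1(x) - \widetilde{D}(x)| + |D_1(\Omega \setminus H) - \widetilde{D}(\Omega \setminus H)| \leq \alpha + \beta + \gamma$ (Equation~\eqref{eq:corlbnon2}) with $\sum_{x \in H} |\widetilde{D}(x) - \overline{D}(x)| + |\widetilde{D}(\Omega \setminus H) - \overline{D}(\Omega \setminus H)| \leq \alpha + \gamma$ (Equation~\eqref{eq:corlbnon} and the $\gamma$-hypothesis) gives $\|D_2 - \overline{D}\|_1 \leq 2\alpha + \beta + 2\gamma \leq 3\alpha + \beta + 2\gamma$. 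A final triangle inequality then yields $\|\overline{D} - D'\|_1 < \eps + 3\alpha + \beta + 2\gamma$, contradicting the hypothesis that $\overline{D}$ is $(\eps + 3\alpha + \beta + 2\gamma)$-far from $\mathcal{P}$. The main obstacle is the delicate construction of $D_2$ in Part 2: we must match $D_1$ exactly on its high-mass profile while keeping every value strictly below $1/q^2$ off $H_1$ and staying as close to $\overline{D}$ as possible. The assumption $|\Omega| > 4q^2$ is precisely what gives enough flexibility on $\Omega \setminus H$ to accomplish all three objectives simultaneously.
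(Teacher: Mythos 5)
Your proof matches the paper's on both parts: Part~1 uses exactly the paper's Claim~\ref{lem:summin} (which is your Hardy--Littlewood--type rearrangement) to sort a witness of $\beta$-closeness into alignment with $\overline{D}$, and Part~2 builds a companion of $D_1$ agreeing with it on $H$ while rerouting the tail mass on $\Omega\setminus H$ to stay below $1/q^2$ and as close as possible to $\overline{D}$, then invokes Lemma~\ref{lem:lbnon} and a triangle inequality; the paper's $\widehat{D}$ is exactly an explicit instance of your abstract minimizer and achieves the same value $\left|D_1(\Omega\setminus H)-\overline{D}(\Omega\setminus H)\right|$ for the off-$H$ contribution, yielding the identical bound $2\alpha+\beta+2\gamma$. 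One small caveat worth flagging, though it mirrors an implicit assumption already in the paper: the step ``$|\Omega\setminus H|>3q^2$ follows from $|\Omega|>4q^2$'' tacitly requires $|H|=q^2$, whereas the lemma only says $H$ \emph{contains} the top $q^2$ elements and could be larger; the feasibility you actually need is only $|\Omega\setminus H|>q^2$ (the paper needs $\geq 2q^2$), which in the algorithm's regime is comfortably true but is not strictly forced by $|\Omega|>4q^2$ alone.
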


Using the above two lemmas,  we will prove Theorem~\ref{thm:tvsnt} in Section~\ref{sec:thm-3.1}.
We present
the proofs of Lemma~\ref{lem:lbnon} and Lemma~\ref{cor:lbnom} in Section~\ref{subsec:lbnon}.

\subsection{Proof of Theorem~\ref{thm:tvsnt}}\label{sec:thm-sym}\label{sec:thm-3.1}
Let $D$ be the unknown distribution that we need to test, and assume that  $\zeta = \gamma_1$, $\eta = {\gamma_2 - \gamma_1}$, and $\eta'=\frac{\eta}{64}$.
We now provide a tolerant 
$(\gamma_1, \gamma_2 + \eps)$-tester, that is,  a $(\zeta, \zeta + \eps + \eta)$-tester for the property $\mathcal{P}$, as follows:

\begin{enumerate}

    \item Draw $W = \Oh\left(\frac{q^2}{\eta'}\log q \right)$ many samples from the distribution $D$. Let $S\subseteq \Omega$ be the set of (distinct) samples obtained.
    
    \item  Draw additional $\Oh \left(\frac{W}{\eta'^2}\log W \right)$ many samples $Z$ to estimate the value of $D(x)$ for all $x\in S$~\footnote{Instead of two sets of random samples (where the first one is to generate the set $S$ and the other one is the multi-set $Z$), one can work with only one set of random samples. But in that case, the sample complexity becomes $\Oh(q^2\log n)$, as opposed to $\Oh(q^2 \log q)$ that we are going to prove.}.
    
     \item Construct a set $H$ as the union of $S$ and arbitrary $q^2$ many elements from $\Omega \setminus (S \cup Z)$. 
    
    \item Define a distribution $\widetilde{D}$ such that, for $x \in H$, 
    $$\widetilde{D}(x)=\frac{\#~ x~\mbox{in the multi-set}~ Z}{|Z|}.$$
    
   And for each $x\in \Omega\setminus H$, 
    $$\widetilde{D}(x) =\frac{1-\sum\limits_{x \in H}\widetilde{D}(x)}{\size{\Omega}-\size{H}}.$$

    \item
    
    If there exists a distribution $D_1$ in $\mathcal{P}$ that satisfies both the following conditions:
    
    \begin{enumerate}
        \item[\textbf{(A)}] $\sum\limits_{ x \in H}\left|D_1(x) - \widetilde{D}(x) \right|  + |D_1(\Omega\setminus H) - \widetilde{D}(\Omega\setminus H)| \leq  26 \eta'+ \zeta $.
        \item[\textbf{(B)}] $\high_{1/q^2}(D_1) \subseteq H$\label{eq:corlbnon4}.
    \end{enumerate}
    
    then ACCEPT $D$.

    \color{black}

    \item If there does not exist any $D_1$ in $\mathcal{P}$ that satisfies both Conditions \textbf{(A)} and \textbf{(B)} above, then REJECT $D$. 
    
    \color{black}

\end{enumerate}

Note that Step $5$ as mentioned above is not completely constructive in a computational sense. In Section~\ref{sec:consttoltest}, we give a constructive variant of the tester where the property $\cP$ can be expressed as a set of linear inequalities. We also give an example of a natural property that can be expressed as a set of linear inequalities.

\paragraph*{Sample Complexity.} The sample complexity of the tester is  $\Oh(\frac{q^2}{\eta^3}\log^2 q) = \Oh(\frac{{\Lambda^2 \log^2 \Lambda}}{(\gamma_2 - \gamma_1)^3})$, which follows from the above description.

\paragraph*{Correctness of the algorithm.}

The proof of correctness of our algorithm is divided into a sequence of lemmas.

\begin{lem}\label{lem:cher} The set $H$ and the distribution $\widetilde{D}$ satisfies the following three properties:
\begin{description}
\item[(i)]  With probability at least $1-\frac{1}{q}$, $\high_{\eta'/q^2}(D)\subseteq S \subseteq H$. 
\item[(ii)] For any $x \in H$, if $D(x) \geq \frac{\eta'}{10W}$,
 $(1 - \eta')D(x) \leq \widetilde{D}(x) \leq (1 + \eta')D(x)$ holds with probability at least $1- \frac{1}{q^4}$.

\item[(iii)] For any $x \in \Omega$ with $D(x)\leq \frac{\eta'}{10W}$, either $x \notin H$, or $\widetilde{D}(x)\leq (1+\eta')\frac{\eta'}{10W}$ holds with probability at least $1- \frac{1}{q^4}$.
\end{description} 
\end{lem}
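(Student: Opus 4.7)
The plan is to prove the three parts separately, each following from standard concentration inequalities applied to the binomial counts induced by the independent sample batches forming $S$ and $Z$. Since $S$ and $Z$ are drawn independently from $D$, the count of any fixed $x$ in $S$ is independent of its count in $Z$, so the analyses of the different parts decouple cleanly.

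For part (i), I would first observe that at most $q^{2}/\eta'$ distinct elements can satisfy $D(x) \geq \eta'/q^{2}$, since the total mass is $1$. For any such $x$, the probability it fails to appear in the $W$ samples forming $S$ is $(1-D(x))^{W} \leq \exp(-W\eta'/q^{2})$. With $W = \Theta((q^{2}/\eta')\log q)$ and a sufficiently large hidden constant, this is at most $q^{-c}$ for any desired $c$, and a union bound over the at most $q^{2}/\eta'$ heavy elements yields failure probability at most $1/q$. Part (i) then follows because $S \subseteq H$ by construction.

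For parts (ii) and (iii), the key random variable is the count $X$ of occurrences of a fixed $x$ in the multi-set $Z$, which is binomially distributed with mean $\mu = \size{Z} D(x)$, and for $x \in H$ we have $\widetilde{D}(x) = X/\size{Z}$. For part (ii), the hypothesis $D(x) \geq \eta'/(10W)$ forces $\mu \geq \size{Z}\eta'/(10W)$, so the multiplicative Chernoff bound (Lemma~\ref{lem:cher_bound1}) gives $\Pr[\size{X-\mu} \geq \eta'\mu] \leq 2\exp(-\mu\eta'^{2}/3)$, and with $\size{Z}$ of order $(W/\eta'^{2})\log W$ and a large enough constant, the exponent dominates $4\log q$, yielding the desired $1/q^{4}$ failure bound. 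For part (iii), I would split by how $x$ entered $H$: if $x \in H \setminus S$, then by construction $x \in \Omega \setminus (S \cup Z)$, so $X=0$ and $\widetilde{D}(x)=0$ deterministically; if instead $x \in S$, then since $D(x) \leq \eta'/(10W)$ the count $X$ is stochastically dominated by a binomial variable $Y$ with mean $\mu_{\max} = \size{Z}\eta'/(10W)$, and the same Chernoff bound gives $\Pr[Y \geq (1+\eta')\mu_{\max}] \leq \exp(-\mu_{\max}\eta'^{2}/3) \leq 1/q^{4}$, which translates to $\widetilde{D}(x) \leq (1+\eta')\eta'/(10W)$ with the required probability.

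The main obstacle is purely numerical rather than conceptual: one has to check that the hidden constants in $W$ and $\size{Z}$ are large enough to push every Chernoff exponent past $4\log q$ while keeping the overall sample count at $\widetilde{\Oh}(q^{2}/\eta^{3})$ as advertised in Theorem~\ref{thm:tvsnt}. No structural information about $D$ or $\cP$ is used — the lemma is entirely a statement about empirical concentration of the two independent sample batches.
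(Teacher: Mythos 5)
Your proposal is correct and follows essentially the same approach as the paper: part (i) via a union bound over the at most $q^2/\eta'$ heavy elements using $(1-D(x))^W$, part (ii) via the multiplicative Chernoff bound on the count of $x$ in $Z$, and part (iii) via a one-sided Chernoff bound with stochastic domination. Your explicit case split in (iii) between $x \in H\setminus S$ (where $\widetilde{D}(x)=0$ by the construction of $H$ from $\Omega\setminus(S\cup Z)$) and $x \in S$ is slightly more careful than the paper's terse ``if $x$ is in $H$ \dots applying Chernoff bound,'' but it is the same underlying argument, not a different route.
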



\begin{proof} Let us prove the three parts one by one:
\begin{itemize}
    \item \textbf{(i)} Consider any $x \in \high_{\eta'/q^2}(D)$, that is, $D(x)\geq \frac{\eta'}{q^2}$. Then the probability that $x \notin H$ is at most $(1- \frac{\eta'}{q^2})^{\size{H}}\leq \frac{1}{q^4}$. Applying the union bound over all the elements in $\high_{\eta'/q^2}(D)$ (at most $\frac{q^2}{\eta'}= \Oh(q^3)$~\footnote{This follows from the assumption that $\Lambda(n,\epsilon)$ is at least $\Omega(1/\epsilon)$.} many elements), the claim follows.
    
    \item \textbf{(ii)} Since $\size{Z} = \Oh(\frac{W}{\eta'^2} \log W)$, applying Chernoff bound, we see that  $(1 - \eta')D(x) \leq \widetilde{D}(x) \leq (1 + \eta')D(x)$ does not hold with probability at most $\frac{1}{q^4}$.
    
    \item \textbf{(iii)} 
    Since $\size{Z} = \Oh(\frac{W}{\eta'^2}\log W)$, if $x$ is in $H$ (otherwise, we are already done), applying Chernoff bound (only on one side), the bound follows.

\end{itemize}
\end{proof}

We now bound the $\ell_1$-distance between $D$ and $\widetilde{D}$ with respect to $H$. 

\begin{lem}\label{lem:lem3}
$\sum\limits_{x \in H}\size{{D}(x)-\widetilde{D}(x)} \leq 5\eta'(1 + \eta') \leq 10\eta'$ holds with probability at least $1-\frac{3}{q}$. 
\end{lem}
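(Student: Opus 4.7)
The plan is to condition on the good event from Lemma~\ref{lem:cher}, partition $H$ into three natural pieces according to the cases covered there, and bound the $\ell_1$-contribution of each piece separately.

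First, I would establish that parts (i), (ii), and (iii) of Lemma~\ref{lem:cher} hold simultaneously with probability at least $1-3/q$. Part (i) is guaranteed directly with probability at least $1-1/q$. For parts (ii) and (iii), each fails at a fixed $x \in H$ with probability at most $1/q^4$; since $|H| \leq |S| + q^2 \leq W + q^2 = O(q^2 \log q / \eta')$, a union bound over $H$ contributes further failure probability $O(\log q / (q^2 \eta'))$, which is at most $2/q$ once $q$ is large enough. From here on, I would work deterministically on the good event.

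Next, I would split $H$ into disjoint pieces $H_a$, $H_b$, $H_c$, where $H_a = \{x \in S : D(x) \geq \eta'/(10W)\}$, $H_b = \{x \in S : D(x) < \eta'/(10W)\}$, and $H_c = H \setminus S$ is the arbitrary padding of $q^2$ elements drawn from $\Omega \setminus (S \cup Z)$. For $x \in H_a$, Lemma~\ref{lem:cher}(ii) provides the multiplicative bound $|D(x) - \widetilde{D}(x)| \leq \eta' D(x)$, and therefore $\sum_{x \in H_a} |D(x) - \widetilde{D}(x)| \leq \eta' \cdot D(H_a) \leq \eta'$. For $x \in H_b$, Lemma~\ref{lem:cher}(iii) gives $\widetilde{D}(x) \leq (1+\eta')\eta'/(10W)$ while $D(x) < \eta'/(10W)$ by definition, so the triangle inequality bounds each term by $(2+\eta')\eta'/(10W)$; since $|H_b| \leq |S| \leq W$, the contribution of $H_b$ is at most $(2+\eta')\eta'/10$. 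Finally for $x \in H_c$, we have $x \notin S$, so Lemma~\ref{lem:cher}(i) yields $D(x) < \eta'/q^2$, and $\widetilde{D}(x) = 0$ since $x \notin Z$; with $|H_c| \leq q^2$ this contributes at most $\eta'$.

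Summing these three contributions yields $\eta' + (2+\eta')\eta'/10 + \eta' = 2\eta' + (2+\eta')\eta'/10$, which sits comfortably below $5\eta'(1+\eta')$ for $\eta' \leq 1$, and in particular below the claimed $10\eta'$. The main obstacle I anticipate is the union-bound accounting in the first paragraph: one has to verify that $|H|/q^4 = o(1/q)$ given the specific choices of $W$ and $|H|$, and for that one must invoke the assumption that $q$ is at least some sufficiently large constant (consistent with the convention stated after Theorem~\ref{thm:tvsnt}). Once the partition of $H$ is set up correctly and the good event is in place, the remainder is a routine term-by-term estimate that never needs Lemma~\ref{lem:cher} beyond the three bounds already highlighted.
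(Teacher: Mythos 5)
Your proof is correct and follows essentially the same approach as the paper: condition on the high-probability event of Lemma~\ref{lem:cher}, partition $H$ according to which case of that lemma applies, and bound the $\ell_1$-contribution of each piece. The only difference is that the paper uses a two-way split of $H$ (into $\high_{\eta'/10W}(D) \cap H$ and its complement in $H$), whereas you use a three-way split that additionally carves out $H \setminus S$; your handling of that third piece (observing $\widetilde{D}(x) = 0$ there directly, rather than routing it through Lemma~\ref{lem:cher}(ii) or (iii)) is arguably a bit cleaner, since it sidesteps any concern about applying the multiplicative estimate of part (ii) to elements that were deliberately chosen from outside $Z$.
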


\begin{proof}
 Recall the definition of $\high_{\eta'/10W}({D})$. Note that 
\begin{align*}
    \sum\limits_{x \in H}\size{D(x)-\widetilde{D}(x)} 
    &=\sum\limits_{x \in  \high_{\eta'/10W}({D}) }\size{D(x)-\widetilde{D}(x)} + \sum\limits_{x \in  H \setminus \high_{\eta'/10W}({D}) }\size{D(x)-\widetilde{D}(x)} 
\end{align*}

Applying Lemma~\ref{lem:cher} $(ii)$ for each $x \in \high_{\eta'/10W}({D})$, and then using union bound over all such $x\in \high_{\eta'/10W}({D})$, the first term is bounded by $\eta'$ with probability at least $1-\frac{1}{q}$. 

Now the second term, notice that for each $x \in H \setminus  \high_{\eta'/10W}({D})$, $D(x)\leq \frac{\eta'}{10W}$. 
By Lemma~\ref{lem:cher} (iii), and using the union bound over all elements in $H \setminus \high_{\eta'/10W}({D})$ (note that $|H| \leq 2W=\Oh(q^3))$, with probability at least $1-\frac{2}{q}$, $\widetilde{D}(x) \leq \eta'(1 + \eta')/10W$ for all $x \in H \setminus \high_{\eta'/10W}({D})$. Since $|H|\leq 2W$, the second term is bounded by $4\eta'(1 + \eta')$ with probability at least $1-\frac{2}{q}$.
\end{proof}

Now we prove a lemma that shows that for every distribution $D$, there is a another distribution $\overline{D}$ that is ``similar'' to $D$, and for which $H$ contains the top $q^2$ elements of $\overline{D}$.

\color{black}

\begin{lem}\label{lem:lem4}
There exists a distribution $\overline{D}$ such that $H$ contains the top $q^2$ elements of $\overline{D}$. Moreover, the following hold:
\begin{itemize}
\item[(i)]    $||D-\overline{D}||_1 \leq 2\eta'$, with probability at least $1-\frac{2}{q}$. 
    \item[(ii)] $\sum\limits_{x \in H}\size{\overline{D}(x)-\widetilde{D}(x)} \leq 12\eta'$, with probability at least $1-\frac{5}{q}$. 
    \item[(iii)] $|\overline{D}(\Omega \setminus H) - \widetilde{D}(\Omega \setminus H)| \leq 12\eta'$, with probability at least $1-\frac{5}{q}$. 
\end{itemize}

\end{lem}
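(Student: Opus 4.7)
The plan is to construct $\overline{D}$ as a small perturbation of $D$ that artificially forces the top $q^2$ elements of $\overline{D}$ to lie in $H$, and then derive (ii) and (iii) from (i), Lemma~\ref{lem:lem3}, and the triangle inequality.

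Throughout I condition on the event of Lemma~\ref{lem:cher}(i), which holds with probability at least $1-1/q$ and gives $\high_{\eta'/q^2}(D)\subseteq H$. Write $H^{*}:=\high_{\eta'/q^2}(D)$, so that every $x\in\Omega\setminus H$ satisfies $D(x)<\eta'/q^2$. If $|H^{*}|\geq q^2$, then the top $q^2$ elements of $D$ already lie in $H^{*}\subseteq H$, and I simply set $\overline{D}:=D$, which makes (i) trivial. Otherwise $|H^{*}|<q^2$; since $|H|\geq q^2$, I pick an arbitrary $R\subseteq H\setminus H^{*}$ with $|R|=q^2-|H^{*}|$ and define
\[
\overline{D}(x)\;:=\;\begin{cases}\eta'/q^2,&x\in R,\\ c\cdot D(x),&x\notin R,\end{cases}
\]
where the scalar $c=(1-|R|\,\eta'/q^2)/(1-D(R))\in(0,1]$ is chosen to keep $\overline{D}$ a probability distribution.

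For (i), each element of $R$ is boosted from $D(x)<\eta'/q^2$ up to $\eta'/q^2$, for a total boost $M:=|R|\,\eta'/q^2-D(R)\leq\eta'$, while the scaling by $c$ on $\Omega\setminus R$ removes exactly the same total $M$; since the two contributions are on disjoint sets, $||D-\overline{D}||_1=2M\leq 2\eta'$. To verify that $H^{*}\cup R$ is a valid top-$q^2$ set for $\overline{D}$, observe that the uniform scaling preserves strict inequalities: elements of $H^{*}$ still satisfy $\overline{D}(x)\geq c\,\eta'/q^2$; elements of $R$ sit at exactly $\eta'/q^2>c\,\eta'/q^2$; but elements of $\Omega\setminus(H^{*}\cup R)$ --- which includes all of $\Omega\setminus H$ --- satisfy $\overline{D}(x)<c\,\eta'/q^2$. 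Hence $H^{*}\cup R$ consists of $q^2$ elements strictly heavier than anything outside, and this set is contained in $H$.

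Part (ii) follows by combining (i) with Lemma~\ref{lem:lem3} via the triangle inequality:
\[
\sum_{x\in H}|\overline{D}(x)-\widetilde{D}(x)|\;\leq\;||D-\overline{D}||_1\;+\;\sum_{x\in H}|D(x)-\widetilde{D}(x)|\;\leq\;2\eta'+10\eta'=12\eta',
\]
and union-bounding the failure events of Lemma~\ref{lem:cher}(i) and Lemma~\ref{lem:lem3} yields the claimed probability. Part (iii) follows immediately from (ii) because both $\overline{D}$ and $\widetilde{D}$ sum to $1$, giving $|\overline{D}(\Omega\setminus H)-\widetilde{D}(\Omega\setminus H)|=|\overline{D}(H)-\widetilde{D}(H)|\leq\sum_{x\in H}|\overline{D}(x)-\widetilde{D}(x)|$. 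The most delicate point is the strict separation used in the top-$q^2$ argument: uniformly scaling all of $\Omega\setminus R$ by the single factor $c$ (rather than subtracting mass selectively from some chosen subset) is what preserves the strict inequality inherited from the definition of $H^{*}$ without any further case analysis on where the mass of $D$ sits.
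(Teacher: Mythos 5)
Your proof is correct, but your construction of $\overline{D}$ is genuinely different from the paper's. The paper sets $T$ to be the top-$q^2$ set of $D$, lets $U=T\setminus H$, and \emph{swaps} the masses of the elements of $U$ with the masses of $|U|$ arbitrary elements of $H\setminus(T\cup S)$; since every element involved has mass below $\eta'/q^2$ (conditioned on Lemma~\ref{lem:cher}(i)) and $|U|\leq q^2$, each swapped pair contributes at most $2\eta'/q^2$ to the $\ell_1$ distance, giving the $2\eta'$ bound, and the swap manifestly pushes the top-$q^2$ set inside $H$. Your approach instead \emph{boosts} a set $R\subseteq H\setminus\high_{\eta'/q^2}(D)$ of size $q^2-|\high_{\eta'/q^2}(D)|$ up to the threshold $\eta'/q^2$ and uniformly rescales everything else by $c<1$ to preserve total mass; the key computation $||D-\overline{D}||_1 = 2M$ with $M=|R|\eta'/q^2 - D(R)\leq\eta'$ is clean, and the uniform rescaling preserves the strict separation $c\,\eta'/q^2$ between $\high_{\eta'/q^2}(D)\cup R$ and its complement without any case analysis. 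A notable difference in flavor: the paper's $\overline{D}$ is a relabeling of $D$ (same multiset of probability values), while yours changes the values themselves. For the purposes of this lemma that distinction does not matter, since Lemma~\ref{cor:lbnom} only requires the top-$q^2$ containment and the $\ell_1$ bounds; your derivations of (ii) via the triangle inequality with Lemma~\ref{lem:lem3} and of (iii) from (ii) via conservation of mass match the paper's. One small point worth making explicit: your construction requires $|H\setminus\high_{\eta'/q^2}(D)|\geq |R|$, which holds because $|H|=|S|+q^2\geq q^2$ by Step~3 of the algorithm and $\high_{\eta'/q^2}(D)\subseteq S\subseteq H$ under the conditioned event.
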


\begin{proof}

Let $T$ be the set of $q^2$ largest elements of $D$. If $T \subseteq S$,  $H$ (as $S \subset H$) contains the largest $q^2$ elements of $D$. In that case, setting $\overline{D}$ to be $D$ gives us the above results.


Now, let us consider the case where $T \not\subseteq S$. By Lemma~\ref{lem:cher} (part (i)), with probability at least $1-\frac{2}{q}$, $\high_{\eta'/q^2}(D)\subseteq S$. Thus for any $x \in H\setminus S$, $D(x) < \frac{\eta'}{q^2}$. Consider the set $U = T\setminus H$. Notice that since $|H\setminus S|=q^2$ and $|T|=q^2$, $|U| \leq |H\setminus (T\cup S)|$. Let $U=\{y_1, \dots, y_{|U|}\}\subset \Omega \setminus H$, and let $z_1. \dots, z_{|U|}$ be some $|U|$ elements of $H\setminus (T\cup S)$. Note, by definition of $T$ and $U$, the set $\{z_1. \dots, z_{|U|}\}$ and the set $\{y_1. \dots, y_{|U|}\}$ are disjoint.

Consider the distribution $\overline{D}$ defined as follows:
\begin{itemize}
    \item For elements in $\{z_1. \dots, z_{|U|}\}$, we define $\overline{D}(z_i) = D(y_i)$.
    \item  For elements in $\{y_1. \dots, y_{|U|}\}$, we define $\overline{D}(y_i) = D(z_i)$.
    \item For all other $x$, we define $\overline{D}(x) = D(x)$.
\end{itemize}

Note that since all the elements in the sets $\{z_1. \dots, z_{|U|}\}$
and $\{y_1. \dots, y_{|U|}\}$ were from $\Omega\setminus S$, from
Lemma~\ref{lem:cher} (part (i)), with probability at least $1- \frac{2}{q}$,
$D(y_i) \leq \frac{\eta'}{q^2}$ and $D(z_i) \leq \frac{\eta'}{q^2}$, for all $i\in \{1, \ldots , \size{U}\}$. Moreover, as $|U|\leq q^2$, we have condition (i) as well. Furthermore, $H$ contains the largest $q^2$ elements of $\overline{D}$ due to its construction.

Using the triangle inequality (relative to $H$) along with Lemma~\ref{lem:lem3} and the above expression, we can say that, with probability at least $1 - \frac{5}{q}$, (ii) follows.


Let us now prove (iii). Since $\overline{D}$ and $\widetilde{D}$ are distributions,
$\sum\limits_{x \in H}\overline{D}(x)+ \sum\limits_{x \in \Omega \setminus H}\overline{D}(x)= \sum\limits_{x \in H}\widetilde{D}(x)+ \sum\limits_{x \in \Omega \setminus H} \widetilde{D}(x).$
Thus,
\begin{eqnarray*}
\size{\overline{D}(\Omega \setminus H) - \widetilde{D}(\Omega \setminus H)} &=& \size{\sum_{x \in H}\widetilde{D}(x) - \sum_{x \in H}\overline{D}(x)} \leq \sum_{x \in H}\size{\widetilde{D}(x) - \overline{D}(x)} \leq 12 \eta'
\end{eqnarray*}
The last inequality follows from (ii).
\end{proof}

Now we finally establish the correctness of the algorithm. 


\begin{proof}[Proof of correctness of the algorithm]
For completeness, consider the case where $D$ is $\zeta$-close to $\mathcal{P}$. By Lemma~\ref{lem:lem4} (i) and the triangle inequality, we know that there exists a distribution $\overline{D}$ that is $(\zeta + 2 \eta')$-close to $\cP$ and $H$ contain the largest $q^2$ elements of $\overline{D}$. 
Since $\sum\limits_{x \in H}\size{\overline{D}(x)-\widetilde{D}(x)} \leq 12\eta'$ and $\size{\overline{D}(\Omega \setminus H)-\widetilde{D}( \Omega \setminus H)} \leq 12\eta'$ hold from Lemma~\ref{lem:lem4} (ii) and (iii), following Lemma~\ref{cor:lbnom} for $ \alpha = 12 \eta'$, $\beta= \zeta + 2 \eta'$ and $\gamma= 12 \eta'$, we can say that there exists a distribution $D_1$ in $\mathcal{P}$ satisfying Equation~\eqref{eq:corlbnon1} (which is same as satisfying \textbf{Condition (A)} and \textbf{Condition (B)} in Step $5$ of the algorithm). Hence, our algorithm accepts $D$ in Step $5$.

For soundness, consider a distribution $D$ that is $(\eps + \zeta + \eta)$-far from $\cP$. Then following Lemma~\ref{lem:lem4} (i), we know that there exists a distribution $\overline{D}$ that is $(\eps + \zeta + \eta- 2\eta')$-far from $\cP$, that is, $(\eps+3 \alpha+ \beta + 2\gamma)$-far from $\cP$, where $\alpha=12\eta'$, $\beta=\zeta + 2 \eta'$. Here, we are using that $\eta=64\eta'$ and $\gamma = 12 \eta'$. Also Lemma~\ref{lem:lem4} guarantees that $H$ contains the top $q^2$ elements of $\overline{D}$.
Following Lemma~\ref{cor:lbnom}, we know that there does not exist any such distribution $D_1$ in $\mathcal{P}$ that satisfies both \textbf{Condition (A)} and \textbf{Condition (B)} of Step $5$ of the algorithm. 
Thus the algorithm will REJECT the distribution $D$ in Step $6$.

Note that the total failure probability of the algorithm is bounded by the probability that Lemma~\ref{lem:lem4} does not hold, which is at most $\frac{12}{q}$.
\end{proof}


\subsection{Proof of Lemma~\ref{lem:lbnon} and Lemma~\ref{cor:lbnom}}\label{subsec:lbnon}

\begin{proof}[Proof of Lemma~\ref{lem:lbnon}]
We will prove this by contradiction. Let us assume that there are two distributions $D_{yes}$ and $D_{no}$ such that 
 \begin{itemize}
     \item $D_{yes} \in \mathcal{P}$;
     
     \item $D_{no}$ is $\eps$-far from $\mathcal{P}$;

     \item $\high_{1/q^2}(D_{yes}) = \high_{1/q^2}(D_{no})=A$;
     
     \item For all $x\in A$, $D_{yes}(x) = D_{no}(x)$.
     
     \color{black}
     
 \end{itemize}

Now, we argue that any $(0,\eps)$-non-tolerant tester requires more than $\Lambda(n,\eps)$ samples from the unknown distribution $D$ to distinguish whether $D$ is in the property or $\eps$-far from it.
 
Let $D_Y$ be a distribution obtained from $D_{yes}$ by permuting the labels of $\Omega \setminus A$ using a uniformly random permutation. Specifically, consider a random permutation $\pi:\Omega \setminus A \rightarrow \Omega \setminus A$. The distribution $D_Y$ is as follows:
 \begin{itemize}
     \item  $D_Y(x)=D_{yes}(x)$ for each $x \in A$ and 
     \item $D_{Y}(\pi(x))=D_{yes}(x)$ for each $x \in \Omega \setminus A$.
 \end{itemize}
 
Similarly, consider the distribution $D_N$ obtained from $D_{no}$ by permuting the labels of $\Omega \setminus A$ using a uniformly random permutation. Note that $D_Y$ is in $\cP$, whereas $D_N$ is $\eps$-far from $\mathcal{P}$, which follows from $\cP$ being label-invariant. 

We will now prove that $D_{Y}$ and $D_{N}$ provide similar distributions over sample sequences. More formally, we will prove that any algorithm that takes at most  $\Lambda(n,\eps)$ many samples, cannot distinguish $D_{Y}$ from $D_{N}$ with probability at least $\frac{2}{3}$. We argue that this claim holds even if the algorithm is provided with additional information about the input: Namely, for all $x\in A$, the algorithm is told the value of $D_Y(x)$ (which is the same as $D_N(x)$). When the algorithm is provided with this information, it can ignore all samples obtained from ${A}$. 

By the definition of $A$, for all $x\in \Omega \setminus A$, both $D_Y(x)$ and $D_N(x)$ are at most $\frac{1}{q^2}$.
Let $S_Y$ be a sequence of samples drawn according to $D_Y$. If $|S_Y| \leq \Lambda(n,\eps)$, then with probability at least $\frac{3}{4}$, the sequence $(\Omega \setminus A)\cap S_Y$ has no element that appears twice. In other words, the set $(\Omega \setminus A)\cap S_Y$ is a set of at most $\Lambda(n,\eps)$ distinct elements from $\Omega \setminus A$. Since the elements of $\Omega \setminus A$ were permuted using a uniformly random permutation, with probability at least $\frac{3}{4}$, the sequence $( \Omega \setminus A)\cap S_Y$ is a uniformly random sequence of distinct elements from $\Omega \setminus A$.
Similarly, if $S_N$ is a sequence of samples drawn according to $D_N$, then with probability at least $\frac{3}{4}$, the sequence $(\Omega \setminus A)\cap S_N$ is a uniformly random sequence of distinct elements from $\Omega \setminus A$. Thus, the distributions over the received sample sequence obtained from $D_Y$ or $D_N$ are of distance $\frac{1}{4}$ of each other, which is strictly less than $\frac{1}{3}$.

Hence, if the algorithm obtains at most $\Lambda(n,\eps)$ many samples from the unknown distribution $D$, it cannot distinguish, with probability at least $\frac{2}{3}$, whether the samples are coming from $D_Y$ or $D_N$.
\end{proof}

For the proof of Lemma~\ref{cor:lbnom}, we will need the following simple claim.

\begin{cl}\label{lem:summin}
Let $\sigma: [n] \rightarrow [n]$ be a permutation and let $a_1, a_2, \ldots, a_n$ and $b_1, b_2, \ldots, b_n$ be two sets of $n$ positive real numbers. If $a_1\geq a_2\geq \dots \geq a_n$ and $b_1\geq b_2 \geq \dots \geq b_n$, then the sum $\sum\limits_{i \in [n]} \size{a_i - b_{\sigma(i)}}$ is minimized when $\sigma$ is the identity permutation.
\end{cl}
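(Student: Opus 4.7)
The plan is to establish the result via a standard exchange (bubble-sort) argument, reducing the whole claim to a two-element swap lemma.

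The first step is to prove the following exchange inequality: for any reals $a \geq a'$ and $b \geq b'$,
\[
|a - b| + |a' - b'| \;\leq\; |a - b'| + |a' - b|.
\]
I would prove this by fixing $a, a', b'$ and treating the difference
\[
\phi(b) \;=\; |a-b| + |a'-b'| - |a-b'| - |a'-b|
\]
as a function of $b \geq b'$. At $b = b'$ the expression evaluates to $0$. Where it is differentiable, its derivative is $\mathrm{sgn}(a'-b) - \mathrm{sgn}(a-b)$, and since $a \geq a'$ this quantity is always $\leq 0$ (considering the three cases $b \leq a'$, $a' < b < a$, $b \geq a$). Hence $\phi$ is non-increasing on $[b', \infty)$, so $\phi(b) \leq 0$ for all $b \geq b'$, giving the exchange inequality. (Alternatively, one can do it by case analysis on the relative ordering of $a, a', b, b'$, but the monotonicity argument is cleaner.)

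The second step is the exchange argument itself. Suppose $\sigma$ is not the identity, so there exist indices $i < j$ with $\sigma(i) > \sigma(j)$. Let $\sigma'$ be the permutation obtained from $\sigma$ by swapping the values $\sigma(i)$ and $\sigma(j)$. Since $i < j$, we have $a_i \geq a_j$, and since $\sigma(i) > \sigma(j)$, the non-increasing order of the $b$'s gives $b_{\sigma(j)} \geq b_{\sigma(i)}$. Applying the exchange inequality with $a := a_i$, $a' := a_j$, $b := b_{\sigma(j)}$, $b' := b_{\sigma(i)}$ yields
\[
|a_i - b_{\sigma(j)}| + |a_j - b_{\sigma(i)}| \;\leq\; |a_i - b_{\sigma(i)}| + |a_j - b_{\sigma(j)}|,
\]
so $\sum_i |a_i - b_{\sigma'(i)}| \leq \sum_i |a_i - b_{\sigma(i)}|$. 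Thus swapping any inversion does not increase the objective.

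The final step is to observe that any permutation can be sorted into the identity through a finite sequence of such inversion-removing swaps (for example, bubble sort). Each swap weakly decreases the sum $\sum_i |a_i - b_{\sigma(i)}|$, so the identity permutation attains the minimum, as claimed. No real obstacle is anticipated; the only subtle point is the exchange inequality, and even there the case-analysis or monotonicity argument is elementary.
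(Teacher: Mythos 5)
Your proof is correct and follows essentially the same route as the paper: you establish the same two-element exchange inequality (the paper proves it by exhaustive case analysis on the ordering of the four numbers, while you use a monotonicity argument on the function $\phi$; you also note the case analysis as an alternative), and then apply it to inversions of $\sigma$. The only packaging difference is that you run an explicit bubble-sort descent to the identity, whereas the paper uses an extremal argument (pick a minimizer with the fewest inversions and derive a contradiction by swapping an adjacent inversion); these are two standard, equivalent ways to finish the same exchange argument.
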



\begin{proof} First observe that if $a, b, c, d$ are four real numbers with $a\geq b$ and $c\geq d$, then the following holds:
\begin{equation}\label{eq:trivial}
    \size{a-c} + \size{b-d} \leq \size{a-d} + \size{b-c}.
\end{equation} 
The above can be proved by checking all possible orderings of the numbers $a,b,c,d$.

Once we have the above observation, we can now proceed to prove the claim. Let us consider the set of permutations that minimize $\sum\limits_{i \in [n]} \size{a_i - b_{\sigma(i)}}$. Let $\sigma$ be one such minimizing permutation that also minimizes the size for the following set $S$: 
$$S= \{(i,j) : i<j \ \mbox{and} \ \sigma(i)> \sigma(j)\}$$
Let $i$ be an index such that $\sigma(i)<\sigma(i+1)$ (such an index $i$ exists unless $\sigma$ is the identity permutation). Let $\sigma'$ be the permutation obtained from $\sigma$ by swapping $\sigma(i)$ and $\sigma(i+1)$. Then the sum $\sum\limits_{i \in [n]} \size{a_i - b_{\sigma'(i)}}$ does not increase from $\sum\limits_{i \in [n]} \size{a_i - b_{\sigma(i)}}$, because of Equation~\ref{eq:trivial}. However, the size of the set $S$ with respect to the permutation $\sigma'$ strictly decreases, and we have a contradiction.
\end{proof}

Now we present the proof of Lemma~\ref{cor:lbnom}.

\begin{proof}[Proof of Lemma~\ref{cor:lbnom}] We consider the two cases separately.

\textbf{(1) } If $\overline{D}$ is $\beta$-close to $\cP$, there exists a distribution $D_1$ in $\cP$ such that $\sum\limits_{x} \size{\overline{D}(x)- D_1(x)} \leq \beta$.
Since $\cP$ is label-invariant, any permutation of $D_1$ is also in $\cP$. Without loss of generality, let us assume that the domain $\Omega$ is a subset of $\{1, \dots, n\}$.

By Claim~\ref{lem:summin}, the permutation $\sigma$ that minimizes 
$\sum\limits_{x} \size{\overline{D}(x)- D_1(\sigma(x))} \leq \beta$ is the one that orders the $i$-th largest element of $D_1$ with the $i$-th largest element of $\overline{D}$, that is, if $x$ is the element with the $i$-th largest probability mass in $D_1$, then $\sigma(x)$ has the $i$-th largest probability mass in $\overline{D}$. 
Consider the distribution $D_1^{\sigma}$ that is defined by $D_1^{\sigma}(x) = D_1(\sigma(x))$.
Clearly, $H$ contains the largest $q^2$ elements of $D_1^{\sigma}$, and hence also $\high_{1/q^2}(D_1^{\sigma}) \subseteq H$.

As $\sum\limits_{x \in \Omega} \size{D_1^{\sigma}(x)-\overline{D}(x)} \leq \beta$, $\sum\limits_{x \in H} \size{\overline{D}(x)-\widetilde{D}(x)} \leq \alpha$ and $|\overline{D}(\Omega\setminus H)-\widetilde{D}(\Omega\setminus H)| \leq \gamma$, by the triangle inequality, we obtain
\begin{eqnarray*}
 &\sum\limits_{x \in H} \size{D_1^{\sigma}(x)-\widetilde{D}(x)} + & \size{D_1^{\sigma}(\Omega \setminus H) - \widetilde{D}(\Omega \setminus H)} \\
 \leq & \sum\limits_{x \in H} |D_1^{\sigma}(x)-\overline{D}(x)|
     + & \sum\limits_{x \in H} |\overline{D}(x)-\widetilde{D}(x)|\\
    &  &+|D_1^{\sigma}(\Omega\setminus H)-\overline{D}(\Omega\setminus H)|
       +|\overline{D}(\Omega\setminus H)-\widetilde{D}(\Omega\setminus H)|\\
 \leq & \sum\limits_{x \in H} |D_1^{\sigma}(x)-\overline{D}(x)|
       + & \sum_{x \in H} |\overline{D}(x)-\widetilde{D}(x)|\\
    &   & + \sum\limits_{x\in \Omega\setminus H} |D_1^{\sigma}(x)-\overline{D}(x)|
       +|\overline{D}(\Omega\setminus H)-\widetilde{D}(\Omega\setminus H)|\\
 = &  \sum\limits_{x\in \Omega}  |D_1^{\sigma}(x)-\overline{D}(x)|
       + & \sum\limits_{x \in H} |\overline{D}(x)-\widetilde{D}(x)|
       +|\overline{D}(\Omega\setminus H)-\widetilde{D}(\Omega\setminus H)|\\
 \leq & \alpha + \beta + \gamma &
\end{eqnarray*}

\vspace{5 pt}

\textbf{(2) } We will prove this case by contradiction. 
Let $D_1 \in \cP$ be a distribution such that  $\high_{1/q^2}(D_1) \subseteq H $ and
$\sum\limits_{x \in H}\left|D_1(x) - \widetilde{D}(x) \right| +  \left|D_1(\Omega \setminus H) - \widetilde{D}(\Omega \setminus H) \right|\leq \alpha + \beta +\gamma $. Then, as $\sum\limits_{x \in H} \size{\overline{D}(x)-\widetilde{D}(x)} \leq \alpha$, by the triangle inequality, we have \begin{equation}\label{eqn:inter}
\sum\limits_{x \in H}|D_1(x)- \overline{D}(x)| + \left|D_1(\Omega \setminus H) - \widetilde{D}(\Omega \setminus H) \right|\leq 2\alpha+ \beta + \gamma.
\end{equation}

\color{black}

Consider the distribution $\widehat{D}$ defined as follows: 
\begin{itemize}
    \item For all $x \in H$, $\widehat{D}(x) = D_1(x)$.
    \item If $D_1(H)  \geq  \overline{D}(H)$,  then for all $x \in \Omega \setminus H$, 
$$\widehat{D}(x)=\overline{D}(x) \cdot \phi, $$
where $\phi = \frac{1-D_1(H)}{1- \overline{D}(H)}.$ Notice that in this case $\phi \leq 1$.
    \item If $D_1(H)  \leq  \overline{D}(H)$, then pick the set $T\subset \Omega\setminus H$ 
    with  $|T| = 2q^2$ that minimizes $\overline{D}(T)$. Then for all $x \in T$, 
    $$\widehat{D}(x) = \overline{D}(x)+ \frac{\overline{D}(H) - D_1(H)}{2q^2}$$
    and for all $x\in \Omega\setminus (T\cup H)$, $\widehat{D}(x) = \overline{D}(x)$
\end{itemize}

Let us first prove that $\high_{1/q^2}(\widehat{D}) \subseteq H$.
In the case where $D_1(H)  \geq  \overline{D}(H)$, for all $x\in \Omega\setminus H$, $\widehat{D}(x) \leq \overline{D}(x)$. Since $\high_{1/q^2}(\overline{D})\subseteq H$, $\high_{1/q^2}(\widehat{D}) \subseteq H$.
Now, in the case where $D_1(H)  \leq  \overline{D}(H)$,
the only $x \in \Omega\setminus H$ for which $\widehat{D}(x) > \overline{D}(x)$ are those in $T$. 
Since $|\Omega| > 4q^2$, the lowest $2q^2$ elements on $\overline{D}$ must each have mass less than $\frac{1}{2q^2}$. So even if we add  $\frac{1}{2q^2}$ for any element  $x\in T$,  $\widehat{D}(x) < 1/q^2$. Hence in this case also $\high_{1/q^2}(\widehat{D}) \subseteq H$ since $\high_{1/q^2}(\overline{D})\subseteq H$ and $\high_{1/q^2}(D_1)\subseteq H$.

Now let us bound the $\ell_1$ distance between $\widehat{D}$ and $\overline{D}$. Observe that 
$$\sum\limits_{x \in \Omega \setminus H}\size{\widehat{D}(x)- \overline{D}(x)} = \left|\widehat{D}(\Omega \setminus H) - \overline{D}(\Omega \setminus H) \right|.$$  This is because, in the case where $\widehat{D}(H)  \geq  \overline{D}(H)$, we have $\widehat{D}(x) =\phi \cdot \overline{D}(x) \leq \overline{D}(x)$ for all $x \in \Omega \setminus H$.  
On the other hand, in the case where $\widehat{D}(H)  \leq  \overline{D}(H)$ then for all $x \in \Omega \setminus H$, $\widehat{D}(x) \geq \overline{D}(x)$. Thus,

\begin{eqnarray*}
\sum\limits_{x \in \Omega \setminus H}\size{\widehat{D}(x)- \overline{D}(x)} &=& \left|\widehat{D}(\Omega \setminus H) - \overline{D}(\Omega \setminus H) \right| \\
&\leq& \left|\widehat{D}(\Omega \setminus H) - \widetilde{D}(\Omega \setminus H) \right| + \left|\overline{D}(\Omega \setminus H) - \widetilde{D}(\Omega \setminus H) \right| \\
&\leq& \left|\widehat{D}(\Omega \setminus H) - \widetilde{D}(\Omega \setminus H) \right| + \gamma
\end{eqnarray*}

Also note that, from the construction of $\widehat{D}$, we have for all $x\in H$, $\widehat{D}(x) = D_1(x)$ and thus $\widehat{D}(\Omega\setminus H) = D_1(\Omega\setminus H)$. Thus,
\begin{eqnarray*}
||\widehat{D}- \overline{D}||_1 &=& \sum_{x \in H}|\widehat{D}(x) - \overline{D}(x)|  + \sum_{x \in \Omega \setminus H}|\widehat{D}(x) - \overline{D}(x)| \\
&\leq& \sum_{x \in H}|\widehat{D}(x) - \overline{D}(x)|  + \left|\widehat{D}(\Omega \setminus H) - \widetilde{D}(\Omega \setminus H) \right| +\gamma\\
&=& \left(\sum_{x \in H}|{D_1}(x) - \overline{D}(x)|  + \left|{D_1}(\Omega \setminus H) - \widetilde{D}(\Omega \setminus H) \right|\right) +\gamma \\
&&~~~~~~~~~~~~~~~~(\mbox{From the construction of $\widehat{D}$}) \\
&\leq& 2 \alpha+ \beta + 2 \gamma ~~~(\mbox{By Equation~\eqref{eqn:inter}})
\end{eqnarray*}

Moreover, as $\high_{1/q^2}(D_1) \subseteq H$ and by the construction of $\widehat{D}$, we have $\high_{1/q^2}(D_1) = \high_{1/q^2}(\widehat{D})$ and for all $x \in \high_{1/q^2}(D_1)$, $D_1(x) = \widehat{D}(x)$. 
Since we assumed that $D_1$ is in $\cP$, using Lemma~\ref{lem:lbnon}, $\widehat{D}$ is $\eps$-close to $\cP$. 
And since  $||\widehat{D}- \overline{D}||_1 \leq 2 \alpha+ \beta + 2\gamma$, we conclude that $\overline{D}$ is 
$(\eps + 2\alpha +\beta + 2\gamma)$-close to $\cP$, which is a contradiction.
\end{proof}

\color{black}

\section{Computationally efficient tolerant testers}\label{sec:consttoltest}

In this section we present a constructive variant of the tolerant tester studied in Section~\ref{sec:thm-3.1}. Here, for any two vectors $a, b \in \R^N$, we say that $a \leq b$ if $a_i \leq b_i$ holds for every $i \in [N]$. Now let us recall the definitions of {\em polyhedron} and {\em projection map}.

\begin{defi}[Polyhedron]
Let $A$ be a $M \times N$ real matrix, $b \in \R^M$ be a real vector and $Ax \leq b$ be a system of linear inequalities. The solution set $\{x \in \R^N \mid Ax \leq b \}$ of the system of inequalities is called a polyhedron. The \emph{complexity} of a polyhedron is defined as $MN$.
\end{defi}

\begin{defi}[Projection map]
Let $n$ be an integer. For all integers $N \geq n$, a \emph{projection map} is denoted as $\pi_{n} : \R^{N} \rightarrow \R^{n}$ and is defined as the projection of the points in $\R^{N}$ on the first $n$ coordinates.
\end{defi}

Before directly proceeding to our results, we first define two variants of distribution properties.

\begin{defi}[Linear property]\label{defi:linearp}
Without loss of generality, let us assume that $\Omega = [n]$.
A distribution property $\cP$ is said to be a \emph{linear property} if 
there exists a polyhedron 
$\mathcal{LP} = \left\{x \in \R^{N} \;\mid\; Ax \leq b \right\}$, where 
$A$ is a $M \times N$ real matrix and $b \in \R^{M}$ be a real vector, and $\pi_{n}\left(\mathcal{LP}\right)$ 
\footnote{Note that $\pi_{n}\left(\mathcal{LP}\right)$ will also be a polyhedron in $\R^{n}$, see, e.g., Corollary~2.5 in Chapter~2 from the book by Bertsimas and Tsitsiklis~\cite{bertsimas1997introduction}. However, the number of linear inequalities defining the property, which affects the running time of the tester, can sometimes be greatly reduced by using a projection.}
is
the set of distributions satisfying the property $\cP$, that is, for every $z := (z_{1}, \dots, z_{n}, \dots, z_{N}) \in \mathcal{LP}$, the distribution $D_{z}$, defined as
$$
    D_{z}(i) = z_{i}, \quad \forall i \in [n]
$$
satisfies the property $\cP$, and conversely, for every distribution $D$ that
satisfies $\cP$, there exists some $z \in \mathcal{LP}$ such that $D=D_z$ as defined above.
The {\em complexity} of $\cP$ is defined as $M\times N$.
\end{defi}

Similar to linear properties, we can also define properties that are feasible solutions to a system of convex constraints.

\begin{defi}[Convex property]
A distribution property $\cP$ is said to be a \emph{convex property} if $\cP$ is the set of all feasible solutions to a system of convex constraints over $D(i)$ for $i \in \Omega$, where $\Omega$ is the sample space of $D$. In other words, the set $\cP$ forms a convex set. 
\end{defi}

Now we show that some well studied label-invariant distribution properties can be represented as linear or convex properties.

\begin{rem}[An example of a linear property: Approximate uniformity property] 
A distribution $D$ over $[n]$ is said to be uniform if $D(i) = \frac{1}{n}$ for all $i \in [n]$. Let the property $\cP_{u, \eps}$ denote the set of all distributions that are $\eps$-close to the uniform distribution, where $\eps \in (0,1)$ is a parameter. 
Consider the following polyhedron $\mathcal{LP}_{u,\eps}$ in $\R^{2n}$:
\begin{align*}
    &\sum\limits_{ i \in [n]} z_{n+i} \leq \eps & \\
    & z_{i} \geq 0 &\forall i \in [2n] \\
    & - z_{n+i} \leq z_{i} - \frac{1}{n} \leq z_{n+i} &\forall i \in [n] 
\end{align*}
Now, observe that $\pi_{n}\left(\mathcal{LP}_{u,\eps}\right)$ will give us the set of distributions that are $\eps$-close to uniform, i.e., the set $\cP_{u, \eps}$ (this would serve as the linear transformation mentioned in Definition~\ref{defi:linearp}).
Also, note that approximate uniformity
property has complexity $\Oh\left(n\right)$.
\end{rem}

Now we present a property that can be expressed as a feasible solution to a system of convex inequalities, but that cannot be expressed as feasible solution to a system of linear inequalities.

\begin{rem}[An example of a convex property: Entropy property]
Let $D$ be a distribution supported on $[n]$. Given a parameter $k \in \R$, let $\cP_{E,k}$ denote the set of all distributions with entropy at least $k$. $\cP_{E,k}$ can be expressed as a convex inequality as follows:
\begin{eqnarray*}
     &&\sum\limits_{i \in [n]} D(i) \log\frac{1}{D(i)} \geq k.
\end{eqnarray*}
\end{rem}

For a distribution property $\cP$, let 
$\mathcal{CP} \subset \R^{n}$ denote the {\em geometric representation} of the set of probability distributions over the set $[n]$ that satisfy $\cP$ by considering each distribution over $[n]$ as a point in $\R^{n}$.

For all $\beta \in [0,1]$, $k \leq n$ and $a \in \R^{n}$, we define the following convex set:
$$
    \Delta \left(k,q,a, \beta\right) := \left\{ x\in \R^{d} \; : \;  \sum\limits_{i=1}^{k} \size{x_{i} - a_{i}} + \left|\sum_{j>k} x_{j} - \sum_{j > k} a_{j}\right| \leq \beta, \;\mbox{and}\; \forall i > k\; \mbox{we have}\; x_i < \frac{1}{q^2} \right\}
$$


Before proceeding to prove Theorem~\ref{thm:inf-conv}, we will first prove a more general result. We will show that if we have access to an emptiness oracle that takes a specific convex set (defined in the following statement) as input, and decides whether the convex set is empty or not, then we can design a tolerant tester for any label-invariant distribution property that takes $\widetilde{O}(\Lambda^2)$ samples and performs a single emptiness query to the oracle. The result is formally stated and proved below.

\color{black}

\begin{theo}\label{theo:tol-conv}
Let $\cP$ be a label-invariant distribution property. If there is a $(0,\eps)$-tester (non-tolerant tester) with sample complexity $\Lambda(n, \eps)$, then for any $\gamma_1$, $\gamma_2$ with $\gamma_1 < \gamma_2$ and $0 < \gamma_1 < \gamma_2 + \eps < 2$, there exists a $(\gamma_1, \gamma_2 + \eps)$-tester (tolerant tester) that takes $s = \widetilde{\Oh}(\Lambda^2)$ many samples and makes a single emptiness query to the set $\mathcal{CP} \cap \Delta(\tOh(s),\Lambda,\widetilde{D}, \beta)$, where $\widetilde{D}$ is a known probability distribution and $\beta = \gamma_1 + \frac{\gamma_2 - \gamma_1}{3}$. 
\end{theo}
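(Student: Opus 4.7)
My plan is to reuse the tolerant tester constructed in the proof of Theorem~\ref{thm:tvsnt}, and to show that the only non-constructive step in that algorithm, namely Step~$5$, is exactly a single emptiness query of the desired form. All the sampling-based steps (drawing $W$ samples to form $S$, drawing an additional set $Z$ to estimate masses, forming the set $H$, and building the empirical distribution $\widetilde{D}$) can be executed as-is, since they are already computationally trivial. Their correctness analysis, culminating in Lemma~\ref{lem:lem4} and the correctness argument of Section~\ref{sec:thm-3.1}, is inherited verbatim, and already gives the sample complexity $\tOh(\Lambda^2)$.

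The core observation is that Step~$5$ asks whether there exists $D_1 \in \cP$ satisfying \textbf{(A)} $\sum_{x \in H}|D_1(x)-\widetilde D(x)| + |D_1(\Omega\setminus H)-\widetilde D(\Omega\setminus H)| \le \beta$ and \textbf{(B)} $\high_{1/q^2}(D_1) \subseteq H$. By relabelling $\Omega$ so that the elements of $H$ are the first $k = |H| = \tOh(s)$ coordinates, condition \textbf{(A)} translates exactly into the first constraint defining $\Delta(k,\Lambda,\widetilde D,\beta)$, while condition \textbf{(B)} is precisely the pointwise constraint $x_i < 1/q^2$ for all $i > k$ that appears in the definition of $\Delta$. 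Since $\mathcal{CP}$ is the geometric representation of $\cP$ inside $\R^{n}$, the existence of such a $D_1$ is equivalent to the non-emptiness of $\mathcal{CP} \cap \Delta(k,\Lambda,\widetilde D,\beta)$. The tester then accepts if and only if the oracle declares this intersection non-empty; this is the single emptiness query promised by the theorem.

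The only parameter bookkeeping to do is to match the threshold in Condition~\textbf{(A)} with the advertised $\beta = \gamma_1 + \tfrac{\gamma_2-\gamma_1}{3}$. In Section~\ref{sec:thm-3.1}, the threshold in \textbf{(A)} is $26\eta' + \zeta$, with $\zeta = \gamma_1$ and $\eta' = \eta/64$ for $\eta = \gamma_2-\gamma_1$. Re-running the argument of Theorem~\ref{thm:tvsnt} with $\eta'$ chosen so that the constant multiple of $\eta'$ appearing in the final application of Lemma~\ref{cor:lbnom} equals $\eta/3$ (e.g.\ $\eta' = \eta/78$) yields exactly the value $\beta = \gamma_1 + (\gamma_2-\gamma_1)/3$, while only changing hidden constants; in particular, the soundness bound $(\eps + 3\alpha + \beta + 2\gamma)$-far from $\cP$ used via Lemma~\ref{cor:lbnom} still falls strictly below the promised farness threshold $\gamma_2 + \eps$, so completeness and soundness go through unchanged.

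The main obstacle, modulo this parameter tuning, is in verifying that the pointwise condition $x_i < 1/q^2$ for $i>k$ in the definition of $\Delta$ captures Condition~\textbf{(B)} correctly. This is essentially by construction of $H$: by Lemma~\ref{lem:cher}(i), $\high_{\eta'/q^2}(D) \subseteq H$ with high probability, and the distribution $D_1$ we seek is required to have $\high_{1/q^2}(D_1) \subseteq H$, which is exactly the statement that for every index outside the first $k$ coordinates (i.e., outside $H$) the mass is less than $1/q^2$. Once this correspondence is made precise, Theorem~\ref{theo:tol-conv} follows directly from the correctness and sample-complexity analysis of Theorem~\ref{thm:tvsnt} by replacing Step~$5$ with a single emptiness query on $\mathcal{CP} \cap \Delta(k,\Lambda,\widetilde D,\beta)$.
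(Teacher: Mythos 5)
Your proposal matches the paper's own argument: both observe that Step~5 of the tolerant tester from Theorem~\ref{thm:tvsnt} is, after relabelling so that $H = \{1,\ldots,|H|\}$, exactly a single emptiness query to $\mathcal{CP} \cap \Delta(|H|, q, \widetilde D, \beta)$, with the sampling steps, the $\tOh(\Lambda^2)$ sample complexity, and the correctness analysis all inherited verbatim. Your adjustment to $\eta' = \eta/78$ so that the Condition~\textbf{(A)} threshold $26\eta' + \zeta$ equals the advertised $\beta = \gamma_1 + (\gamma_2-\gamma_1)/3$ exactly (while still keeping $\eta \ge 64\eta'$, so soundness is unaffected) is a valid constant-factor fix of a small mismatch between the theorem statement and the paper's own proof, which uses $\eta' = \eta/64$ and hence actually produces threshold $\gamma_1 + \tfrac{13}{32}(\gamma_2-\gamma_1)$.
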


\color{black}

\begin{proof}

Recall that in Step $5$ of the tolerant tester presented in Section~\ref{sec:thm-3.1}, the tester checks whether there is any distribution $D_1 \in \cP$ that satisfies the following two conditions:
$$
    \sum\limits_{x\in H}\left|D_{1}(x) - \widetilde{D}(x) \right|  + 
    \left| D_{1}(\Omega\setminus H) - \widetilde{D}(\Omega\setminus H)\right|
    \leq  26\eta' + \zeta
$$
and 
$$
    \high_{1/q^2}(D_1) \subseteq H
$$
where $\zeta = \gamma_1$, $\eta = {\gamma_2 - \gamma_1}$, $\eta = {\gamma_2 - \gamma_1}$ and $\eta'=\frac{\eta}{64}$.
The set $H$ and the distribution $\widetilde{D}$ are defined in the tolerant tester presented in Section~\ref{sec:thm-3.1}.

Without loss of generality, we can assume that $H = \left\{1, \ldots, \size{H}\right\}$. 
Therefore, in order to perform Step $5$ of the tolerant tester, the following equations are needed to be satisfied:
\begin{eqnarray}
    && D_1 \in \mathcal{CP}\\
    && D_1 \in \Delta \left(\size{H},q,\widetilde{D}, 26 \eta' + \zeta\right) 
\end{eqnarray}

\color{black}

We now present the tolerant 
$(\gamma_1, \gamma_2 + \eps)$-tester in its entirety, that is,  a $(\zeta, \zeta + \eps + \eta)$-tester for the property $\mathcal{P}$, where $\zeta = \gamma_1$, $\eta = {\gamma_2 - \gamma_1}$, and $\eta'=\frac{\eta}{64}$.

\begin{enumerate}

    \item Draw $W = \Oh\left(\frac{q^2}{\eta'}\log q \right)$ many samples from the distribution $D$. Let $S\subseteq \Omega$ be the set of (distinct) samples obtained.
    
    \item  Draw additional $\Oh \left(\frac{W}{\eta'^2}\log W \right)$ many samples $Z$ to estimate the value of $D(x)$ for all $x\in S$.
    
    \item Construct a set $H$ as the union of $S$ and $q^2$ many arbitrary elements from $\Omega \setminus (S \cup Z)$. 
    
    \item Define a distribution $\widetilde{D}$ such that, for $x \in H$, 
    $$\widetilde{D}(x)=\frac{\#~ x~\mbox{in the multi-set}~ Z}{|Z|}.$$
    
   And for each $x\in \Omega\setminus H$, 
    $$\widetilde{D}(x) =\frac{1-\sum\limits_{x \in H}\widetilde{D}(x)}{\size{\Omega}-\size{H}}.$$

    \item
    
    If there exists a distribution $D_1 \in \mathcal{CP} \cap \Delta \left(\size{H},q,\widetilde{D}, 26\eta' + \zeta\right)$, 
     then ACCEPT $D$.

    \color{black}
    
    \item If there does not exist any distribution $D_1$ that passes Step $5$, then REJECT $D$. 
    
    \color{black}
\end{enumerate}

Observe that the sample complexity of the tester is  $\Oh\left(\frac{q^2}{\eta^2}\log^2 q\right) = \widetilde{\Oh}(\Lambda^2)$ in addition to a single emptiness query to the set $\mathcal{P} \in \mathcal{CP} \cap \Delta \left(\size{H},q,\widetilde{D}, 26 \eta' + \zeta\right)$ in Step $5$. The correctness proof of the above tester follows from the correctness argument presented in Section~\ref{sec:thm-3.1}.
\end{proof}

\subsection{Emptiness checking when \texorpdfstring{$\mathcal{P}$}~~is a linear property: Proof of Theorem~\ref{thm:inf-conv}}

Now we proceed to analyze the time complexity of the $(\gamma_1, \gamma_2 + \eps)$-tester described in Theorem~\ref{theo:tol-conv} when $\mathcal{P}$ is also a linear property. Recall 
that as $\cP$ is a linear property, 
there exists a polyhedron 
$\mathcal{LP} = \left\{x \in \R^{N} \;\mid\; Ax \leq b \right\}$, where 
$A$ is a $M \times N$ real matrix and $b \in \R^{M}$ be a real vector, and $\pi_{n}\left(\mathcal{LP}\right)$ is
the set of distributions satisfying the property $\cP$.
(See Definition~\ref{defi:linearp})

Now in Observation~\ref{obs:obs1}, we show that checking emptiness of $\pi_n(\mathcal{LP}) \cap \Delta\left(\size{H},q, \widetilde{D}, 26 \eta' + \zeta\right)$ is equivalent to testing the feasibility of a family of inequalities.

\begin{obs}\label{obs:obs1}
Without loss of generality, assume that $H=\{1, \ldots, |H|\}$ and $\Omega = \{1, \ldots, n\}$. 
Checking emptiness of $\pi_n(\mathcal{LP}) \cap \Delta(\size{H}, q,\widetilde{D}, 26\eta' + \zeta)$ is equivalent to testing the feasibility of the following set of inequalities:
\begin{align}
     & Az \leq b & \\
    &\sum\limits_{i=1}^{|H|}\left|z_i - \widetilde{D}(i) \right|  + 
    \left|\sum_{i= |H|+1}^{n} z_{i} - \sum_{i= |H| + 1}^{n} \widetilde{D}(i)\right| \leq  26\eta' + \zeta & \label{eqn:testhighlinear} \\
    & z_i < \frac{1}{q^2} &\forall i \in [n] \setminus \{1, \ldots, \size{H}\}
\end{align}
\end{obs}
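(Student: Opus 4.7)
The plan is to prove the equivalence by directly unpacking the definitions of the projection map $\pi_{n}$ and of the set $\Delta(\size{H}, q, \widetilde{D}, 26\eta' + \zeta)$. By definition, a point $x \in \R^{n}$ lies in $\pi_{n}(\mathcal{LP}) \cap \Delta(\size{H}, q, \widetilde{D}, 26\eta' + \zeta)$ if and only if (i) there exists some $z \in \R^{N}$ with $Az \leq b$ whose first $n$ coordinates are exactly $x$, that is, $(z_{1}, \ldots, z_{n}) = x$, and (ii) the point $x$ itself satisfies both $\sum_{i=1}^{\size{H}} |x_{i} - \widetilde{D}(i)| + |\sum_{i > \size{H}} x_{i} - \sum_{i > \size{H}} \widetilde{D}(i)| \leq 26\eta' + \zeta$ and $x_{i} < 1/q^{2}$ for every $i \in \{\size{H}+1, \ldots, n\}$. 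After substituting $x_{i} = z_{i}$ for every $i \in [n]$, these are precisely the last two conditions in the system listed in the observation.

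For the forward direction, given any $z \in \R^{N}$ feasible for the listed system, set $x := \pi_{n}(z) = (z_{1}, \ldots, z_{n})$. Then $x \in \pi_{n}(\mathcal{LP})$ because $z \in \mathcal{LP}$ by the constraint $Az \leq b$, and $x \in \Delta(\size{H}, q, \widetilde{D}, 26\eta' + \zeta)$ because the remaining two groups of inequalities in the system are exactly the constraints defining membership in $\Delta$, rewritten under the identification $x_{i} = z_{i}$. Conversely, if the intersection is nonempty and $x$ is any witness, then by definition of $\pi_{n}(\mathcal{LP})$ there exists $z \in \R^{N}$ with $Az \leq b$ and $\pi_{n}(z) = x$; setting the remaining coordinates of $z$ freely (they appear only in $Az \leq b$), this $z$ is feasible for the listed system. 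Hence the two feasibility problems have the same answer.

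The argument is essentially definition chasing and presents no real obstacle; the only point requiring care is to notice that the coordinates $z_{n+1}, \ldots, z_{N}$ play no role in the $\Delta$-constraints (which involve only the first $n$ coordinates), so extending any point of $\pi_{n}(\mathcal{LP})$ to a full pre-image in $\mathcal{LP}$ is immediate. In effect, the observation records that checking emptiness of the projection of a polyhedron intersected with a polyhedral set can be lifted without loss to a feasibility check in the ambient space $\R^{N}$, which is precisely what one needs to then appeal to a polynomial-time feasibility algorithm such as the ellipsoid method in the subsequent proof of Theorem~\ref{thm:inf-conv}.
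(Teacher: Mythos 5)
Your proof is correct and follows the same route the paper implicitly takes: the observation is a straightforward consequence of the definitions of $\pi_n$ and $\Delta$, and the paper itself states it without an explicit proof before moving on to eliminate the absolute values with slack variables. The one point worth tightening in your write-up is the phrase ``setting the remaining coordinates of $z$ freely'' in the converse direction -- you cannot choose $z_{n+1},\ldots,z_N$ arbitrarily, but rather the existence of a compatible extension satisfying $Az\leq b$ is exactly what membership of $x$ in $\pi_n(\mathcal{LP})$ guarantees (which you do state in the preceding clause, so the logic is sound, just loosely worded).
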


Note that the inequality in Equation~\eqref{eqn:testhighlinear} can be expressed as the following set of linear inequalities using slack variables $z_{N+i}$ for all $i \in \left[\size{H}+1\right]$:
\begin{align*}
    &\sum\limits_{i=1}^{|H|} z_{N+i} \; + \; z_{N+\size{H}+1}\leq 26 \eta' + \zeta & \\
    & z_{N+i} \geq 0 &\forall i \in \left[\size{H}+1\right] \\
    & - z_{N+i} \leq z_i - \widetilde{D}(i) \leq z_{N+i} &\forall i \in \left[\size{H}\right]\\
    & - z_{N+\size{H}+1} \leq \sum\limits_{i = \size{H}+1}^{n} z_i - \sum\limits_{i= \size{H}+1}^{n} \widetilde{D}(i) \leq z_{N+\size{H}+1}
\end{align*}

Therefore checking the emptiness of $\pi_n(\mathcal{LP}) \cap \Delta\left(\size{H},q,\widetilde{D}, 26 \eta' + \zeta\right)$ is equivalent to checking the feasibility of the following set of linear inequalities:
\begin{align*}
     & Az \leq b \\
     &\sum\limits_{i=1}^{|H|} z_{N+i} + z_{N+\size{H}+1}\leq 26 \eta' + \zeta \\
    & z_{N+i} \geq 0 &\forall i \in [\size{H}+1] \\
    & - z_{N+i} \leq z_i - \widetilde{D}(i) \leq z_{N+i}  &~\forall i \in [\size{H}] \\
    & - z_{N+\size{H}+1} \leq \sum\limits_{i=\size{H}+1}^{n} z_i - \sum\limits_{i =\size{H}+1}^{n}\widetilde{D}(i) \leq z_{N+\size{H}+1}\\
    & z_i < \frac{1}{q^2} &\forall i \in [n] \setminus \{1, \ldots, \size{H}\}
\end{align*}
The feasibility of the above set of linear inequalities can be solved in a polynomial time in the complexity of the polyhedron, that is, in a polynomial time in $N$ and $M$, using the Ellipsoid Method, where recall that $A$ is a $M\times N$ real matrix (see, e.g., \cite{bertsimas1997introduction, matousek2007understanding}). Thus, we have an efficient $(\gamma_1, \gamma_2 + \eps)$-tester for $\cP$, that runs in time polynomial in the complexity of the label-invariant linear property
$\cP$. This concludes the proof of Theorem~\ref{thm:inf-conv}.

\color{black}

\section{Sample complexity of testing non-concentrated label-invariant properties}\label{sec:nonconclb}

In this section we first prove a lower bound of $\Omega(\sqrt{n})$ on the sample complexity of non-tolerant testing of any non-concentrated label-invariant property. Then we proceed to prove a tolerant lower bound of $\Omega(n^{1-o(1)})$ samples for such properties in Section~\ref{sec:sym-tol}.

\subsection{Non-tolerant lower bound (Proof of Theorem~\ref{thm:nonconclb} for label-invariant properties)}\label{sec:sym-ntol}

Here we first prove a lower bound result analogous to Theorem~\ref{thm:nonconclb}, where the properties are non-concentrated and  label-invariant. In Section~\ref{sec:thm-noncon}, we discuss why the proof of Theorem~\ref{theo:sym} does not directly work for Theorem~\ref{thm:nonconclb}, and then prove  Theorem~\ref{thm:nonconclb} using a different argument. 

\begin{theo}[Analogous result of Theorem~\ref{thm:nonconclb} for non-concentrated label-invariant properties]\label{theo:sym}
Let $\mathcal{P}$ be any $(\alpha, \beta)$-non-concentrated label-invariant distribution property, where $0< \alpha \leq \beta < \frac{1}{2}$. For $\eps$ with $0< \eps < \alpha$,  any $(0,\eps)$-tester for property $\cP$ requires $\Omega(\sqrt{n})$ many samples, where $n$ is the size of the support of the distribution.
\end{theo}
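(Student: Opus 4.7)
The plan is to generalize the classical $\Omega(\sqrt{n})$ lower bound for uniformity testing. For any fixed $D_{yes} \in \cP$, I will exhibit a companion distribution $D_{no}$ that is $\eps$-far from $\cP$ but indistinguishable from $D_{yes}$ using $o(\sqrt{n})$ samples. Let $A \subseteq \Omega$ consist of the $\beta n$ elements of smallest mass in $D_{yes}$, and let $B$ consist of the next $\beta n$ smallest. Since $\sum_i D_{yes}(i) = 1$, every element of $A \cup B$ has mass at most $\frac{1}{(1-2\beta)n} = O(1/n)$ under $D_{yes}$. Fix an arbitrary bijection $\sigma : A \to B$ and define $D_{no}(a) = 0$ and $D_{no}(\sigma(a)) = D_{yes}(a) + D_{yes}(\sigma(a))$ for every $a \in A$, leaving all other masses unchanged. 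Then $D_{no}(A) = 0$, while every $D' \in \cP$ satisfies $D'(A) \geq \alpha$ by $(\alpha,\beta)$-non-concentration, since $|A| = \beta n$. Hence $\|D_{no} - D'\|_1 \geq \sum_{i \in A} D'(i) = D'(A) \geq \alpha > \eps$, proving that $D_{no}$ is $\eps$-far from $\cP$.

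Since $\cP$ is label-invariant, replacing $D_{yes}$ and $D_{no}$ by uniformly random relabelings $D_Y$ and $D_N$ preserves membership in (resp.\ $\eps$-distance from) $\cP$, so it suffices to argue that the $k$-sample distributions induced by $D_Y$ and $D_N$ have total-variation distance $o(1)$ whenever $k = o(\sqrt{n})$. After the random relabeling, any tester's view is captured by the \emph{fingerprint} of the samples, i.e., the multiset of multiplicities of distinct elements. I would then Poissonize the sample budget and treat the counts $N_i \sim \mathrm{Poisson}(k \cdot p_i)$ as independent. Two structural facts control the analysis: (i) $D_{yes}$ and $D_{no}$ agree on $\Omega \setminus (A \cup B)$, so the contribution of the unchanged elements to the fingerprint can be coupled identically under both; and (ii) $D_{yes}(A \cup B) = D_{no}(A \cup B)$, since mass is only rearranged within $A \cup B$, which causes the first-order (singletons) fingerprint difference to vanish.

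A direct computation of the fingerprint moments then yields, for $X_j$ denoting the number of elements sampled exactly $j$ times,
\[
\bigl|\E[X_j^{yes}] - \E[X_j^{no}]\bigr| \;\leq\; O\!\left(k^j \sum_{a \in A} p_a\, p_{\sigma(a)}\right) \;=\; O\!\left(\frac{k^j}{n}\right),
\]
for every $j \geq 2$, using that $p_a, p_{\sigma(a)} = O(1/n)$ and $\sum_{a \in A} p_a \leq 1$; for $k = o(\sqrt{n})$ each of these differences is $o(1)$. A standard moment-matching / Wasserstein argument in the spirit of Valiant's ``low-frequency blindness'' theorem then upgrades these pointwise bounds into an $o(1)$ bound on the total-variation distance between the two fingerprint distributions, contradicting the existence of a $(0,\eps)$-tester with sample complexity $o(\sqrt{n})$. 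The main technical obstacle is executing this last step rigorously: one must handle the joint fingerprint contributions of the ``heavy'' elements (identical under both distributions but correlated with the counts coming from $A \cup B$) by a careful coupling or conditioning, and then invoke an appropriate fingerprint closeness inequality to convert the per-multiplicity moment matching into a genuine TV bound. The construction itself and the $\eps$-farness verification are straightforward consequences of non-concentration.
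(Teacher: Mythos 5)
Your construction is essentially the paper's: you take the $2\beta n$ smallest-mass elements (your $A \cup B$, the paper's $L$), bound their individual masses by $O(1/n)$ via the same averaging argument, and collapse pairs so that $D_{no}$ places zero mass on a set of size $\beta n$. The $\eps$-farness derivation from $D_{no}(A)=0$ versus $D'(A)\geq\alpha$ for $D' \in \cP$ is clean and correct, and matches the paper's reasoning via support size. Where you diverge — and where there is a genuine gap — is the indistinguishability step.

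You propose a Poissonized fingerprint moment-matching argument followed by ``a standard moment-matching / Wasserstein argument in the spirit of Valiant's low-frequency blindness theorem,'' and you explicitly flag that executing that final step rigorously is the main obstacle. That obstacle is real: bounding expected fingerprint differences does not immediately bound total variation, and the machinery needed to upgrade per-multiplicity expectation bounds to a TV bound (Valiant's multinomial CLT / wishful-thinking theorems) is heavy. Moreover, one of your intermediate claims is off: the singleton count difference does \emph{not} vanish to first order; a Taylor expansion of $p e^{-kp}$ gives
\[
\sum_{i\in A\cup B} k p_i e^{-kp_i} \;-\; \sum_{a\in A} k\,(p_a+p_{\sigma(a)})\,e^{-k(p_a+p_{\sigma(a)})} \;=\; 2k^2\sum_{a\in A} p_a\,p_{\sigma(a)} \;+\; \text{lower order} \;=\; \Theta\!\left(\frac{k^2}{n}\right),
\]
which is the same order as your $j=2$ term, not zero. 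This is still $o(1)$ when $k=o(\sqrt{n})$, so it does not break the plan, but it means the structure you were counting on (vanishing singletons) is not there.

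The paper avoids all of this with a much lighter argument, and that is the idea you are missing. Because $D_{yes}$ and $D_{no}$ agree on $\Omega \setminus L$ and $\supp(D_{no}) \cap L \subset \supp(D_{yes}) \cap L$, and because $\cP$ is label-invariant so both inputs may be pushed through the \emph{same} uniformly random permutation, one can condition on the event that no element of $L$ is sampled twice. Under that conditioning, the observed subsequence of $L$-samples is a uniformly random sequence of distinct elements — \emph{identically} for the yes-instance and the no-instance, so the two conditional distributions over the algorithm's entire view coincide exactly. The birthday bound, using $D_{yes}(i), D_{no}(i) = O(1/n)$ on $L$, shows the conditioning event fails with probability $o(1)$ for $o(\sqrt{n})$ samples, yielding a TV distance of $o(1)$ without any fingerprint or moment analysis. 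You should replace your moment-matching sketch with this exact-coupling-after-conditioning argument; it closes the gap and is substantially simpler.
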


\begin{proof}
     Let us first consider a distribution $D_{yes}$ that satisfies the property. Since $\mathcal{P}$ is an $(\alpha, \beta)$-non-concentrated property, by Definition~\ref{defi:non-conc-prop}, $D_{yes}$ is an $(\alpha, \beta)$-non-concentrated distribution. From $D_{yes}$, we generate a distribution $D_{no}$ such that the support of $D_{no}$ is a subset of that of $D_{yes}$, and $D_{no}$ is $\eps$-far from $\cP$. Hence, if we apply a random permutation over the elements of $\Omega$, we show that $D_{yes}$ and $D_{no}$ are indistinguishable, unless we query for  $\Omega(\sqrt{n})$ many samples. Below we formally prove this idea.
    
We will partition the domain $\Omega$ into two parts, depending on the probability mass of $D_{yes}$ on the elements of $\Omega$. 
Given the distribution $D_{yes}$, let us first order the elements of $\Omega$ according to their probability masses. In this ordering, let $L$ be the smallest $2\beta n$ elements of $\Omega$. We denote $\Omega \setminus L$ by $H$. Before proceeding further, note that the following observation gives an upper bound on the probabilities of the elements in $L$.

\begin{obs}\label{obs:LH}
For all $x \in L$, $D_{yes}(x) \leq \frac{1-2\alpha}{1-2\beta}\frac{1}{n}$. 
\end{obs}

\begin{proof}[Proof of Observation~\ref{obs:LH}]
By contradiction, assume that there exists $x \in L$ such that  $D_{yes}(x)>\frac{1-2\alpha}{1-2\beta}\frac{1}{n}$. This implies, for every $y \in H$, that $D_{yes}(y) > \frac{1-2\alpha}{1-2\beta}\frac{1}{n}$. So, $$ 1= \sum _{x\in \Omega} D_{yes}(x)=\sum_{x \in L} D_{yes}(x) +\sum_{y \in H}D_{yes}(y)> D_{yes}(L)+\size{H}\frac{1-2\alpha}{1-2\beta}\frac{1}{n}.$$
As $\size{L}=2\beta n$ and $D_{yes}$ is an $(\alpha,\beta)$-non-concentrated distribution, $D_{yes}(L)\geq 2\alpha$. Also, $\size{H}=(1-2\beta)n$. Plugging these into the above inequality, we get a contradiction.
\end{proof}

Note that Observation~\ref{obs:LH} implies that if $S$ is a multi-set of  $o\left(\sqrt{\frac{1-2\beta}{1-2\alpha}n}\right)$ samples from $D_{yes}$, then with probability $1-o(1)$, no element from $L$ appears in $S$ more than once. 
Now using the distribution $D_{yes}$ and the set $L$, let us define a distribution $D_{no}$ such that $D_{no}$ is $\eps$-far from $\mathcal{P}$. Note that $D_{no}$ is a distribution that comes from a distribution over a set of distributions, all of which are not $(\alpha, \beta)$-non-concentrated. The distribution $D_{no}$ is generated using the following random process:

\begin{itemize}
    \item  We partition $L$ randomly into two equal sets of size $\beta n$. Let the sets be $\{x_1, \dots, x_{\beta n}\}$ and $\{y_1, \dots, y_{\beta n}\}$. 
    We pair the elements of $L$ randomly into $\beta n$ pairs. Let $(x_1,y_1),\ldots,(x_{\beta n},y_{\beta n})$ be a random pairing of the elements in $L$, which is represented as $P_L$, that is, $P_L = \{(x_1,y_1),\ldots,(x_{\beta n},y_{\beta n})\}$. 
   
    \item The probability mass of $D_{no}$ at $z$  is defined as follows:
    \begin{itemize}
        \item If $z \not\in L$, then $D_{no}(z)=D_{yes}(z)$.
        
        \item For every pair $(x_i, y_i) \in P_L$, $D_{no}(x_i)=D_{yes}(x_i)+D_{yes}(y_i)$, and $D_{no}(y_i)=0$.
    \end{itemize}
\end{itemize}

We start by observing that the distribution $D_{no}$ constructed above is supported on a set of at most $(1-\beta)n$ elements. So, any distribution $D_{no}$ constructed using the above procedure is $\eps$-far from satisfying the property $\mathcal{P}$ for any $\eps < \alpha$.

We will now prove that $D_{yes}$ and $D_{no}$ both have similar distributions over the sequences of samples. More formally, we will prove that any algorithm that takes $o(\sqrt{n})$ many samples, cannot distinguish between $D_{yes}$ from $D_{no}$ with probability at least $\frac{2}{3}$.

Since any $D_{no}$ produced using the above procedure has exactly the same probability mass on the elements in $H$ as $D_{yes}$, any tester that distinguishes between $D_{yes}$ and $D_{no}$ must rely on samples obtained from $L$. Recall that the algorithm is given a uniformly random permutation of the distribution. Since $\supp(D_{no})  \subset \supp(D_{yes})$ (particularly, $\supp(D_{no}) \cap L \subset \supp(D_{yes}) \cap L$), it is not possible to distinguish between $D_{yes}$ and $D_{no}$, unless an element of $L$ appears at least twice. Otherwise, as in the proof of Lemma~\ref{lem:lbnon}, the elements drawn from $L$ are distributed identically to a uniformly random non-repeating sequence. But
observe that $D_{yes}(i)=\Oh(\frac{1}{n}) $ and $D_{no}(i)=\Oh(\frac{1}{n})$ when $i$ is in $L$. Thus any sequence of $o(\sqrt{n})$ samples will provide only a distance of $o(1)$ between the two distributions, completing the proof.
\end{proof}


\subsection{Tolerant lower bound (Proof of Theorem~\ref{thm:tol-sym-non})}\label{sec:sym-tol}
\begin{theo}[Theorem~\ref{thm:tol-sym-non} formalized]\label{theo:tol-sym-lb}
Let $\mathcal{P}$ be any  $(\alpha, \beta)$-non-concentrated label-invariant distribution property, where $0< \alpha \leq \beta < \frac{1}{2}$. For any constant $\eps_1$ and $\eps_2$ with $0< \eps_1 < \eps_2 < \alpha $, any $(\eps_1,\eps_2)$-tester for $\cP$ requires $\Omega({n}^{1-o(1)})$ samples, where $n$ is the size of the support of the distribution.
\end{theo}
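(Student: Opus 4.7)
The plan is to upgrade the argument of Theorem~\ref{theo:sym} by replacing the birthday-paradox step with Valiant's low-frequency blindness theorem~\cite{valiant2011testing}, which gives indistinguishability up to $n^{1-o(1)}$ samples rather than merely $\sqrt{n}$. Concretely, I would reuse the same pair of distributions constructed in the proof of Theorem~\ref{theo:sym}: start with any $D_{yes} \in \cP$, use the $(\alpha,\beta)$-non-concentration property together with the analog of Observation~\ref{obs:LH} to identify a set $L \subseteq \Omega$ of size $2\beta n$ on which every element has mass at most $\frac{1-2\alpha}{1-2\beta}\cdot\frac{1}{n} = \Oh(1/n)$, and then build $D_{no}$ by randomly pairing the elements of $L$ and collapsing the mass of each pair onto one endpoint. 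This leaves $|\supp(D_{no})| \leq (1-\beta)n$, so any $D' \in \cP$ (being $(\alpha,\beta)$-non-concentrated) must place mass at least $\alpha$ on $\Omega \setminus \supp(D_{no})$, forcing $\|D_{no}-D'\|_1 \geq \alpha > \eps_2$. Hence $D_{yes}$ is $\eps_1$-close (indeed, $0$-close) to $\cP$ while $D_{no}$ is $\eps_2$-far from $\cP$.

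Next, I would verify the hypothesis of the low-frequency blindness theorem. By construction $D_{yes}$ and $D_{no}$ agree pointwise on $H = \Omega \setminus L$, which contains all "heavy" elements; the only discrepancy is in the low-frequency regime where every element of $L$ has mass $\Oh(1/n)$ in $D_{yes}$ and mass either $0$ or $\Oh(1/n)$ in $D_{no}$. Valiant's theorem states that two distributions agreeing on the high-frequency part and differing only in their low-frequency fingerprints in a bounded way cannot be distinguished by any label-invariant tester using $o(n/\log n) = o(n^{1-o(1)})$ samples. Since $\cP$ is label-invariant, symmetrizing the tester by a uniformly random permutation of $\Omega \setminus H$ (exactly as in the proof of Lemma~\ref{lem:lbnon}) is without loss of generality, so the lower bound transfers to testers for $\cP$. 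Combining indistinguishability with the fact that the tester must accept $D_{yes}$ and reject $D_{no}$ gives the desired $\Omega(n^{1-o(1)})$ bound.

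The main technical obstacle will be checking that the pair $(D_{yes},D_{no})$ genuinely lies within the regime to which Valiant's blindness theorem applies: concretely, bounding the relative-earthmover distance (or the equivalent fingerprint distance used in his framework) between the low-frequency histograms of the two distributions, which is what controls the $o(n/\log n)$ conclusion. If the crude pairing construction of Theorem~\ref{theo:sym} does not yield histograms close enough in that metric, I would smooth the construction, for example by pairing only a $\Theta(1)$-fraction of $L$ chosen so that the mass transferred matches on average across buckets of the form $[\tau,2\tau]$ with $\tau = \Oh(1/n)$. This preserves both the $\ell_1$-farness of $D_{no}$ from $\cP$ (which only required significantly reducing the support) and keeps the low-frequency histograms close in the relative earthmover sense, so Valiant's theorem applies and gives the $\Omega(n^{1-o(1)})$ lower bound.
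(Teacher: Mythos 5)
Your proposal takes essentially the same route as the paper: both reuse the $D_{yes},D_{no}$ pair from Theorem~\ref{theo:sym} (agreeing on heavy elements, differing only on masses of order $1/n$) and invoke Valiant's low-frequency blindness theorem, applied to the distance-to-$\cP$ function $\Pi_\cP$ which is label-invariant and weakly continuous. Your worry about relative-earthmover bounds is unnecessary for the version of the blindness theorem the paper quotes (Theorem~\ref{theo:lowfreq}), which only requires the two distributions to agree on indices of mass at least $\Oh(1/n)$ (handled by padding with dummy elements) --- a condition the construction already satisfies.
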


To prove the above theorem, we recall some notions and a theorem from Valiant's paper on a lower bound for the sample complexity of tolerant testing of symmetric properties ~\cite{valiant2011testing}. These definitions refer to invariants of distributions, which are essentially a generalization of properties.


\begin{defi}\label{defi:weakcont}
Let $\Pi:\cD_n \rightarrow \R$ denote a real-valued function over the set $\cD_n$ of all distributions over $[n]$. 
\begin{enumerate}
    \item $\Pi$ is said to be \emph{label-invariant} if for any $D \in \cD_n$ the following holds: $\Pi(D)=\Pi(D_\sigma)$ for any permutation $\sigma:[n]\rightarrow [n]$. 
    
    \item  For any $\gamma ,\delta$ with $\gamma \geq 0$ and $\delta \in [0,2]$, $\Pi$ is said to be \emph{$(\gamma,\delta)$-weakly-continuous} if for all distributions $p^+,p^-$ satisfying $||p^+ - p^-||_1 \leq \delta$, we have $|\Pi(p^+) - \Pi(p^-)| \leq \gamma$.
\end{enumerate}
\end{defi}

For a property $\cP$ of distributions, we define
$\Pi_{\cP}:\cD_n \rightarrow [0,2]$ with respect to property $\cP$ as follows:
$$\mbox{For}~D \in \cD_n, \Pi_{\cP}(D):=\mbox{the distance of $D$ from $\cP$.
} $$

From the triangle inequality property of $\ell_1$ distances, $\Pi_{\cP}$ (which refers to the distance function from the property $\cP$) is $(\gamma,\gamma)$-weakly continuous, for any $\gamma \in [0,2]$.

\begin{theo}[Low Frequency Blindness~\cite{valiant2011testing}]\label{theo:lowfreq}
Consider a function $\Pi:\cD_n \rightarrow \cR$ that is label-invariant and $(\gamma, \delta)$-weakly-continuous, where $\gamma \geq 0$ and $\delta\in [0,2]$. Let there exist two distributions $p^+$ and $p^-$ in $\cD_n$ with  $n$ being the size of their supports, such that $\Pi(p^+)> b$, $\Pi(p^-)< a$, and they are identical  for any index occurring  with  probability  at  least $\frac{1}{n}$ in either distribution, where $a,b \in \R$. Then any tester that has sample access to an unknown distribution $D$ and distinguishes between $\Pi(D) > b - \gamma$ and $\Pi(D) < a + \gamma$, requires $\Omega(n^{1-o_{\delta}(1)})$ many samples from $D$~\footnote{$o_\delta(\cdot)$ suppresses a term in $\delta$.}.
\end{theo}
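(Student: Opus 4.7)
The plan is to prove the theorem along the standard fingerprint / moment-matching route, which is the technique pioneered by Valiant for label-invariant lower bounds of this type.

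Because $\Pi$ is label-invariant, the acceptance probability of any tester depends on its sample only through the \emph{fingerprint} $F(S) = (F_1(S), F_2(S), \ldots)$, where $F_i(S)$ is the number of domain elements that appear exactly $i$ times in $S$. Thus the problem reduces to exhibiting two input distributions whose induced fingerprint distributions are within $o(1)$ total variation when the sample size is $k = n^{1 - o_\delta(1)}$, and arranging that these inputs still separate the values of $\Pi$. First I would use the $\delta$-weak continuity of $\Pi$ to give myself a perturbation budget of $\delta$ in $\ell_1$ around each of $p^+$ and $p^-$, producing candidate distributions $q^+$ and $q^-$ that still satisfy $\Pi(q^+) > b - \gamma$ and $\Pi(q^-) < a + \gamma$. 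Since $p^+$ and $p^-$ already agree on every index of mass at least $1/n$, only the ``low'' portion (indices of mass below $1/n$, of which there are $\Theta(n)$) needs to be modified.

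Next I would Poissonize: replace the sample size $k$ by an independent $\mathrm{Poi}(k)$ at a constant-factor cost, so that the count at each index $j$ becomes an independent $\mathrm{Poi}(kq(j))$. In this model, the expected fingerprint $\E[F_i]$ and the covariance structure of $(F_i)_i$ are controlled by the power-sum polynomials $\sum_j (kq(j))^r$ of the low portion (the matching high portions cancel). The crucial step is to design the perturbations $q^+, q^-$, each within $\ell_1$-distance $\delta$ of $p^\pm$, whose low portions match these power sums up to an order $R = R(\delta, n)$ with $R \to \infty$ as $n \to \infty$. Existence follows from a linear-algebraic counting argument: the space of mean-zero $\ell_1$-bounded perturbations of the low portion has dimension comparable to $n$, while matching $R$ power sums imposes only $R$ linear constraints, so small-$\ell_1$ solutions exist whenever $R$ is $n^{o(1)}$; a direct construction uses Chebyshev-type polynomials tailored to the low-mass regime.

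With matching low-order moments in place, standard Poisson variance bounds (Chebyshev applied to each $F_i$, whose variance is likewise controlled by low-order moments) show that $F^{q^+}$ and $F^{q^-}$ agree in distribution up to $o(1)$ total variation for $k = n^{1 - o_\delta(1)}$; contributions from $F_i$ with $i > R$ are negligible because at this sample rate almost no index is sampled $R$ or more times. Combined with the weak-continuity step, any tester distinguishing $\Pi(D) > b - \gamma$ from $\Pi(D) < a + \gamma$ would, when fed $q^+$ or $q^-$, also distinguish their fingerprint distributions, contradicting the just-established indistinguishability. The main obstacle is the quantitative design of $q^+, q^-$: one has to trade the perturbation budget $\delta$ against the attainable matching order $R$ so that the final exponent is genuinely $1 - o_\delta(1)$, and it is exactly here that Valiant's Chebyshev-polynomial / ``Wishful Thinking'' machinery earns its keep; the other steps are comparatively routine once that construction is in hand.
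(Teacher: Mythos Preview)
The paper does not prove this theorem at all: it is quoted verbatim as an external result from Valiant~\cite{valiant2011testing} and used as a black box to derive Theorem~\ref{theo:tol-sym-lb}. There is therefore no ``paper's own proof'' to compare your proposal against.

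That said, your sketch is a fair high-level summary of the route Valiant's original paper takes (fingerprint sufficiency for label-invariant testers, Poissonization, moment matching on the low-mass portion, and a Chebyshev-style polynomial construction to trade the $\ell_1$ perturbation budget $\delta$ against the matching order). As you yourself note, the only substantive step is the quantitative moment-matching construction; everything else is standard once that is in hand. If you intend this as a self-contained proof rather than a pointer to \cite{valiant2011testing}, you would still need to actually carry out that construction and the variance bounds, since the sketch as written asserts their existence without supplying them.
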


\color{black}

Note that in Theorem~\ref{theo:lowfreq}, we have assumed that $p^+$ and $p^-$ are identical for any index that has probability mass at least $\frac{1}{n}$. We can actually replace this condition to $\Oh(\frac{1}{n})$ by adding $\Oh(n)$ many ``dummy elements'' to the support of $p^+$ and $p^-$. Now we are ready to prove Theorem~\ref{theo:tol-sym-lb}.

\begin{proof}[Proof of Theorem~\ref{theo:tol-sym-lb}]

Consider $\Pi_{\cP}$ as defined above. As $\cP$ is a label-invariant property, the function $\Pi_{\cP}$ is also label-invariant. We have already noted that $\Pi_{\cP}$ is $(\gamma,\gamma)$-weakly continuous as ``\emph{distance from a property}'' satisfies the triangle inequality, for any $\gamma \in [0,2]$. Now recall that the distributions $D_{yes}$ and $D_{no}$ considered in the proof of Theorem~\ref{theo:sym}. The probability mass of each element in the support of $D_{yes}$ and $D_{no}$ is $\Oh(\frac{1}{n})$. Note that $D_{yes}$ is in $\cP$ and $D_{no}$ is $\eps$-far from $\cP$, for any $\eps<\alpha$, and both of them have a support size of $\Theta(n)$. Here we take $\eps >\eps_2$. Now, we apply Theorem~\ref{theo:lowfreq} with $a=0$, some $b <\eps$ and $\gamma$ with $\gamma <\min\{\eps_1, \eps -\eps_2\}$. Observe that this completes the proof of Theorem~\ref{theo:tol-sym-lb}.
\end{proof}

\section{Sample complexity of non-concentrated properties (Proof of Theorem~\ref{thm:nonconclb})} \label{sec:thm-noncon}


\begin{theo}[Theorem~\ref{thm:nonconclb} formalized]\label{thm:non-sym}
Let $\mathcal{P}$ be any $(\alpha, \beta)$-non-concentrated distribution property for $0 < \alpha < \beta < \frac{1}{2}$. For any $\eps$ with $0 < \eps < \alpha$,  any $(0,\eps)$-tester for $\cP$ requires $\Omega(\sqrt{n})$ many samples, where $n$ is the size of the support of the distribution.
\end{theo}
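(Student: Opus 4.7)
The plan is to adapt the proof of Theorem~\ref{theo:sym} by replacing the random permutation (unavailable once label invariance is dropped) with a carefully randomized construction of $D_{no}$, and to argue indistinguishability from $D_{yes}$ via a Poissonization argument that tensorizes the analysis over the small-mass elements.

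Fix any $D_{yes} \in \cP$. Since $\cP$ is $(\alpha,\beta)$-non-concentrated, so is $D_{yes}$, and in particular $\size{\supp(D_{yes})} > (1-\beta)n$ (otherwise the complement of the support has size at least $\beta n$ but mass zero, contradicting non-concentration). Order $\Omega$ by increasing mass under $D_{yes}$, let $L$ be the $2\beta n$ smallest elements, and let $H = \Omega \setminus L$. The counting argument of Observation~\ref{obs:LH} goes through verbatim, giving $D_{yes}(x) \leq \frac{1-2\alpha}{(1-2\beta)n} = \Oh(1/n)$ for every $x \in L$. Fix a perfect matching $M$ of $L$ into $\beta n$ pairs that pairs positive-mass elements together whenever possible (well-defined since fewer than $\beta n$ elements of $L$ have zero mass). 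For each pair $(x,y) \in M$ independently, sample $Z_{xy} \in \{x,y\}$ with $\Pr[Z_{xy} = x] = p_x/(p_x + p_y)$, where $p_x := D_{yes}(x)$ (the degenerate pair with $p_x=p_y=0$ is handled trivially), and define the random distribution $D_{no,Z}$ by $D_{no,Z}(Z_{xy}) = p_x + p_y$, assigning mass zero to the other element of each pair in $M$, and keeping $D_{no,Z}(w) = D_{yes}(w)$ for $w \in H$. A direct check gives $\E_Z[D_{no,Z}(x)] = D_{yes}(x)$ for every $x \in \Omega$, and a short case analysis on the number of zero-mass elements in $L$ shows that every realization of $D_{no,Z}$ has at least $\beta n$ elements of mass zero. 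Since $\cP$ is $(\alpha,\beta)$-non-concentrated, any $D' \in \cP$ places mass at least $\alpha$ on this zero set, so $||D_{no,Z} - D'||_1 \geq \alpha$; every realization of $D_{no,Z}$ is therefore $\eps$-far from $\cP$ whenever $\eps < \alpha$.

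Let $\cD_N$ denote the mixture of $D_{no,Z}$ over the randomness of $Z$. By the standard two-distribution argument, it suffices to show that the distribution of length-$k$ sample sequences under $D_{yes}$ and under $\cD_N$ (first draw $D_{no,Z}$, then take $k$ independent samples from it) have total variation $o(1)$ when $k = o(\sqrt{n})$. To prove this I would Poissonize and replace the exact $k$-sample experiment by a $\mathrm{Poi}(k)$-sample one: the multiplicities $X_w$ become independent Poissons, and since the $Z_{xy}$ are mutually independent the joint distribution factorizes over $H$ (on which the two distributions agree, contributing nothing to TV) and over the $\beta n$ pairs of $M$. By sub-additivity of total variation on product distributions, it suffices to bound, for each pair $(x,y) \in M$, the TV distance between the joint $(X_x,X_y)$ under $D_{yes}$ (two independent Poissons of rates $kp_x,kp_y$) and under $\cD_N$ (a mixture in which one coordinate is $0$ and the other is $\mathrm{Poi}(k(p_x+p_y))$, weighted $p_x/(p_x+p_y)$ vs.\ $p_y/(p_x+p_y)$). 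A short direct calculation shows this per-pair TV is $\Oh(k^2 p_x p_y)$, dominated by the double-hit event $\{X_x, X_y > 0\}$, which has probability $\Theta(k^2 p_x p_y)$ under $D_{yes}$ but probability $0$ under $\cD_N$. Summing and using $\sum_{(x,y) \in M} p_x p_y \leq (\max_{x \in L} p_x) \cdot \sum_{y \in L} p_y = \Oh(1/n)$ (this is where $D_{yes}(x) = \Oh(1/n)$ for $x \in L$ is used), the total Poissonized TV is $\Oh(k^2/n) = o(1)$; standard de-Poissonization transfers the bound to the exact-$k$ regime.

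The main obstacle compared to the label-invariant proof is exactly this total-variation bound: without random permutation of labels, and with any single fixed $D_{no}$ having $\chi^2$ divergence from $D_{yes}$ that is too large, randomization over $Z$ is essential. The biased choice $\Pr[Z_{xy}=x] = p_x/(p_x+p_y)$ is what forces $\E_Z[D_{no,Z}] = D_{yes}$, which in turn is what makes the per-pair Poissonized TV scale quadratically in $kp_x p_y$ rather than linearly in $k$. The resulting indistinguishability contradicts the existence of any $(0,\eps)$-tester using $o(\sqrt{n})$ samples, proving the theorem.
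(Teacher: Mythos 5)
Your construction of $D_{no,Z}$ — pairing $L$ into $\beta n$ pairs and flipping a biased coin per pair with probability $p_x/(p_x+p_y)$, so that $\E_Z[D_{no,Z}] = D_{yes}$ — is exactly the construction the paper uses, and so is the soundness argument that every realization has $\geq \beta n$ zeros and hence is $\eps$-far. Where you diverge is the indistinguishability argument. The paper conditions on the no-collision event $\cE$ (no element of $L$ appearing twice and no two elements from the same pair appearing), shows that under this conditioning the sequence probabilities agree exactly (via an explicit induction over the sample index, using the identity $D_{yes}(y)+D_{yes}(\pi(y))=D_{no}(y)+D_{no}(\pi(y))$), and then notes $\Pr[\lnot\cE]=o(1)$ for $o(\sqrt n)$ samples. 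You instead Poissonize, use independence of the $Z_{xy}$ and Poisson multiplicities to factor the sampling distribution over $H$ and over the pairs, bound the per-pair total variation by $\Oh(k^2 p_x p_y)$ (correctly noting that the marginal probabilities of $(a,0)$ for $a\le 1$ and of $(0,0)$ coincide, so everything is driven by double-hit-like events of order $k^2 p_x p_y$), and sum via sub-additivity of TV for products, getting $\Oh(k^2/n)=o(1)$. Both arguments are correct. The paper's route is more elementary and yields an \emph{exact} agreement on collision-free sequences without de-Poissonization; your route is more modular (once you know $\E_Z[D_{no,Z}]=D_{yes}$, the per-pair bound and summation are mechanical), at the cost of the Poissonization/de-Poissonization overhead and of needing to check the per-pair TV computation carefully. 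A minor plus of your write-up: you observe $\size{\supp(D_{yes})}>(1-\beta)n$ and use it to pair positive-mass elements together, cleanly avoiding the $0/0$ issue in the biased-coin probability, which the paper leaves implicit.
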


\paragraph*{Why does the proof of Theorem~\ref{theo:sym} work only for label-invariant properties?}
Note that the proof of Theorem~\ref{theo:sym} crucially uses the fact that the property $\cP$ is label-invariant. Recall that, while constructing $D_{no}$ from $D_{yes}$,  for each $i \in [\beta n]$, moving the masses of both $x_i$ and $y_i$ in $D_{yes}$ to $x_i$ to produce $D_{no}$ is possible as the property $\cP$ is label-invariant. Due to this feature, we can apply a random permutation over $\Omega$, and still the permuted distribution will behave identically with respect to $\cP$. After applying the random permutation, the samples coming from $D_{yes}$ and $D_{no}$ are indistinguishable as long as there are no collisions among the elements in $L$, which is the case when we take $o(\sqrt{n})$ samples. However, this technique does not work when the property is not label-invariant, as the value of the distribution with respect to $\cP$ may not be invariant under the random permutation over $\Omega$. This requires a new argument; although the proof is similar in spirit to the proof of Theorem~\ref{theo:sym}, there are some crucial differences, and we present the proof next. In order to prove Theorem~\ref{thm:non-sym}, instead of moving the masses of both $x_i$ and $y_i$ in $D_{yes}$ to $x_i$ to produce $D_{no}$, we randomly move the sum to either $x_i$ or $y_i$, with probability proportional to the masses of $x_i$ and $y_i$.
    
\subsection{Proof of Theorem~\ref{thm:non-sym}}\label{sec:genlb}

The proof of Theorem~\ref{thm:non-sym} starts off identically to the proof of Theorem~\ref{theo:sym}, but there is a departure in the construction of $D_{yes}$ and $D_{no}$.


Let us first consider $D_{yes}$, $L$ and $P_L$ as discussed in the proof of Theorem~\ref{theo:sym}, only here we cannot and will not pass $D_{yes}$ through a random permutation. The difference starts from the description of the distribution $D_{no}$. In fact, $D_{no}$ will be randomly chosen according to a distribution over a set of distributions, all of which are not $(\alpha, \beta)$-non-concentrated. 
The distribution $D_{no}$ is generated using the following random process:

\begin{itemize}
    \item  We partition $L$ arbitrarily into two equal sets of size $\beta n$. Let the sets be $\{x_1, \dots, x_{\beta n}\}$ and $\{y_1, \dots, y_{\beta n}\}$. 
    We pair the elements of $L$ arbitrarily into $\beta n$ pairs. Let $(x_1,y_1),\ldots,(x_{\beta n},y_{\beta n})$ be an arbitrary pairing of the elements in $L$. Let $P_L$ be the set of pairs. So $P_L = \{(x_1,y_1),\ldots,(x_{\beta n},y_{\beta n})\}$. We refer to $x_i$ and $y_i$ as the elements corresponding to each other with respect to $P_L$, and denote $\pi(x_i)=y_i$ and $\pi(y_i)=x_i$.
    
   
    \item The probability mass of $D_{no}$ at $z$  is defined as follows:
    \begin{itemize}
        \item If $z \not\in L$, then $D_{no}(z)=D_{yes}(z)$.
        
        \item For every pair $(x_i, y_i) \in P_L$, use independent random coins and  
        \begin{itemize}
            \item With probability $\frac{D_{yes}(x_i)}{D_{yes}(x_i) + D_{yes}(y_i)}$, set $D_{no}(x_i) = D_{yes}(x_i) + D_{yes}(y_i)$ and $D_{no}(y_i)= 0$.
            
            \item With the remaining probability, that is, with probability $\frac{D_{yes}(y_i)}{D_{yes}(x_i) + D_{yes}(y_i)}$, set $D_{no}(x_i)\\ = 0$ and $D_{no}(y_i) = D_{yes}(x_i) + D_{yes}(y_i)$.
        \end{itemize}
    \end{itemize}
\end{itemize}

Observe that any $D_{no}$ constructed by the above procedure is supported over a set of at most $(1-\beta)n$ elements. So any distribution $D_{no}$ constructed using the above procedure is $\eps$-far from satisfying the property $\mathcal{P}$, for any $\eps < \alpha$. But since any $D_{no}$ produced using the above procedure has exactly the same probability mass on elements in $H$ as $D_{yes}$, any tester that distinguishes between $D_{yes}$ and $D_{no}$ must rely on samples obtained from $L$. However, we can prove that unless we receive two samples from the same pair in $L$ (which occurs with low probability), the sample sequence cannot distinguish $D_{yes}$ from $D_{no}$.

Note that there is an upper bound of $\Oh(\frac{1}{n})$ on the probability mass of any element in $L$.  In fact, for any pair $(x_i, y_i) \in P_L$, the total probability mass of the pair is at most $\Oh(\frac{1}{n})$.

\begin{obs}[Follows immediately from Observation~\ref{obs:LH}]\label{obs:LH-no}
For all pairs $(x_i, y_i)  \in P_L$, $D_{no}(x_i) + D_{no}(y_i) \leq 2\frac{1-2\alpha}{1-2\beta}\frac{1}{n}$. 
Also note that $D_{no}(x_i) + D_{no}(y_i) = D_{yes}(x_i) + D_{yes}(y_i)$ with probability $1$ over the construction of $D_{no}$.
\end{obs}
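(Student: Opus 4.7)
The plan is to verify the two parts of the observation in sequence, both following directly from the construction of $D_{no}$ together with the pointwise bound supplied by Observation~\ref{obs:LH}. Since the statement is flagged as immediate, the proposal is correspondingly short; the role of the proof is mainly to make the two ingredients explicit for later use in the indistinguishability argument.

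First I would handle the equality $D_{no}(x_i) + D_{no}(y_i) = D_{yes}(x_i) + D_{yes}(y_i)$. Looking at the random construction applied to a pair $(x_i, y_i) \in P_L$, there are exactly two possible outcomes of the independent coin: either all the mass is placed on $x_i$, so that $D_{no}(x_i) = D_{yes}(x_i) + D_{yes}(y_i)$ and $D_{no}(y_i) = 0$, or it is placed on $y_i$, yielding the symmetric situation. In both outcomes, $D_{no}(x_i) + D_{no}(y_i) = D_{yes}(x_i) + D_{yes}(y_i)$, so the equality holds with probability $1$ over the randomness used to build $D_{no}$. This is essentially a bookkeeping step, with no real obstacle.

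For the quantitative upper bound, I would use that both $x_i$ and $y_i$ belong to $L$ by the definition of the pairing $P_L$. Applying Observation~\ref{obs:LH} separately to $x_i$ and $y_i$ gives
\[
D_{yes}(x_i) \leq \frac{1-2\alpha}{1-2\beta}\cdot \frac{1}{n} \qquad \text{and} \qquad D_{yes}(y_i) \leq \frac{1-2\alpha}{1-2\beta}\cdot \frac{1}{n}.
\]
Summing these two inequalities and combining with the almost-sure equality established above yields
\[
D_{no}(x_i) + D_{no}(y_i) \;=\; D_{yes}(x_i) + D_{yes}(y_i) \;\leq\; 2\cdot \frac{1-2\alpha}{1-2\beta}\cdot \frac{1}{n},
\]
which is exactly the claim. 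There is no genuine difficulty in this argument: the only ingredients are the definition of $P_L \subseteq L \times L$ and Observation~\ref{obs:LH}, and the observation is stated only to record, for later sampling arguments, that every pair inside $L$ carries at most $\Theta(1/n)$ total mass in both $D_{yes}$ and $D_{no}$.
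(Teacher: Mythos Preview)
Your proof is correct and is exactly the immediate argument the paper has in mind: the equality is by construction (each branch of the coin moves all the pair's mass to one endpoint), and the bound follows by applying Observation~\ref{obs:LH} to each of $x_i, y_i \in L$ and summing. The paper does not spell out a proof at all, so your write-up simply makes explicit what the parenthetical ``Follows immediately from Observation~\ref{obs:LH}'' asserts.
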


From Observation~\ref{obs:LH-no}, observe that if $S$ is a multi-set of  $o\left(\sqrt{\frac{1-2\beta}{1-2\alpha}n}\right)$ samples from $D_{yes}$, then with probability $1-o(1)$, no two elements in $S$ (identical or not) are from the same pair in $P_L$. The same holds for $D_{no}$ as well. Given that no two elements in $S$ are from the same pair in $P_L$, we will now prove that $D_{yes}$ and $D_{no}$ have the same distributions over sample sequences. This implies that, for a sequence of $o\left(\sqrt{\frac{1-2\beta}{1-2\alpha}n}\right)$ many samples, $D_{yes}$ and $D_{no}$ induce
distributions over samples sequences that have $o(1)$ variation distance
from each other.

Note that under the condition that at most one element is drawn from any pair $(x_i, y_i) \in P_L$, the probability that the sample is $x_i$ instead of $y_i$ is equal to $\frac{D_{yes}(x_i)}{D_{yes}(x_i) + D_{yes}(y_i)}$, irrespective of whether the distribution is $D_{yes}$ or $D_{no}$. So, we have the following lemma.

\begin{lem}\label{lem:imp}
Let $a_1,...,a_q$ be a sequence of $q$  elements, where no element of $L$ appears twice, additionally containing no two elements from the same pair in $P_L$ (elements of $H$ can appear freely).
Then
$$
    \Pr_{s_1, \dots, s_q \sim D_{yes}}[(s_1, \ldots, s_q)=(a_1, \ldots, a_q)] = \Pr_{s_1, \dots, s_q \sim D_{no}}[(s_1, \ldots, s_q)=(a_1, \ldots, a_q)].
$$    
\end{lem}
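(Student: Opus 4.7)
The plan is to expand both sides as explicit products in $D_{yes}$ and in the randomized masses of $D_{no}$, and exploit the independence of the coin flips used across different pairs of $P_L$ in constructing $D_{no}$. The left-hand side is immediate: since the samples $s_1,\ldots,s_q$ from $D_{yes}$ are independent,
$$\Pr_{s_1,\dots,s_q \sim D_{yes}}[(s_1,\ldots,s_q)=(a_1,\ldots,a_q)] = \prod_{j=1}^{q}D_{yes}(a_j).$$
For the right-hand side, because $D_{no}$ is itself random, the probability marginalizes over both the construction of $D_{no}$ and the subsequent sampling, giving
$$\Pr_{s_1,\dots,s_q \sim D_{no}}[(s_1,\ldots,s_q)=(a_1,\ldots,a_q)] = \mathbb{E}_{D_{no}}\!\left[\prod_{j=1}^{q} D_{no}(a_j)\right].$$

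Next, I would use the hypothesis that no two of the $a_j$ belong to the same pair of $P_L$ to factorize this expectation. For each $a_j \in H$, $D_{no}(a_j) = D_{yes}(a_j)$ deterministically, so these factors pull out. For each $a_j \in L$, it belongs to a unique pair $(x_i,y_i) \in P_L$, and by the hypothesis each such pair contributes exactly one factor. Since the coin flips used in constructing $D_{no}$ are independent across different pairs of $P_L$, the remaining expectation factorizes into independent per-pair expectations.

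Finally, a direct computation handles each per-pair factor. If $a_j = x_i$ for some pair $(x_i,y_i) \in P_L$, then
$$\mathbb{E}[D_{no}(x_i)] \;=\; \frac{D_{yes}(x_i)}{D_{yes}(x_i)+D_{yes}(y_i)}\cdot(D_{yes}(x_i)+D_{yes}(y_i)) \;+\; \frac{D_{yes}(y_i)}{D_{yes}(x_i)+D_{yes}(y_i)}\cdot 0 \;=\; D_{yes}(x_i),$$
and symmetrically when $a_j = y_i$. Substituting these back, the factorized expectation collapses exactly to $\prod_{j=1}^{q} D_{yes}(a_j)$, matching the left-hand side.

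The main (and only) subtlety is bookkeeping: one must be careful to verify that under the non-collision hypothesis each random coin truly enters at most one factor of the product, so that the expectation of the product equals the product of expectations; without that hypothesis the factors $D_{no}(x_i)$ and $D_{no}(y_i)$ are highly correlated (they are never both positive), and the equality would fail. Beyond this, the argument is mechanical.
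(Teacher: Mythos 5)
Your proof is correct and takes a genuinely different route from the paper's. The paper proceeds by induction over the sample index $k$: it conditions on the event $\cE$ that no element of $L$ appears twice and no two samples land in the same pair of $P_L$, defines the conditional next-sample distributions $D^{k+1}_{yes}$ and $D^{k+1}_{no}$, proves via a three-case analysis that these coincide, and then chains the equalities via the conditional chain rule. Your argument instead directly expands the unconditional probability under the random $D_{no}$ as $\mathbb{E}_{D_{no}}\bigl[\prod_j D_{no}(a_j)\bigr]$, pulls out the deterministic $H$-factors, uses independence of the per-pair coins to factorize (valid precisely because the non-collision hypothesis ensures each coin enters at most one factor), and closes with $\mathbb{E}[D_{no}(z_i)]=D_{yes}(z_i)$. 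This is shorter and avoids one subtlety in the paper's route: the paper establishes only the \emph{conditional} equality $\Pr[(s)=(a)\mid\cE]$, and passing to the unconditional statement of the lemma implicitly also requires $\Pr_{D_{yes}}[\cE]=\Pr_{D_{no}}[\cE]$ (which does hold, since the per-pair total masses agree under both constructions, but is not spelled out). Your factorization proves the unconditional equality directly and makes the ``one coin per factor'' bookkeeping, which you correctly flag, the only substantive content of the argument.
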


\begin{proof}
Let us begin by defining an event $\cE$ as follows: 
$$\cE := \mbox{no element of L appears twice, and no two elements from the same pair appear}.
$$
Observe that we will be done by proving 
\begin{equation}\label{eqn:ind}
    \Pr_{s_1, \dots, s_q \sim D_{yes}}[s_i=a_i~\mbox{for each $i \in [q]$}~|~\cE] = \Pr_{s_1, \dots, s_q \sim D_{no}}[s_i=a_i~\mbox{for each $i \in [q]$}~|~\cE].
\end{equation}

We will prove this by using induction over $q$. Let us assume that we have generated samples $s_1=a_1, \ldots, s_k=a_k$ from the unknown distribution, where $1\leq k <q$. Let $X_k = \{s_1, \ldots, s_k\} \cap L$ be the samples we have seen until now from $L$, and $X_k'=\{\pi (x) : x \in X_k\}$. By the induction hypothesis, assume that Equation~\eqref{eqn:ind} holds for each $q$ with $q \leq k$. We will show that Equation~\eqref{eqn:ind} holds for $q=k+1$.

To do so, let us now define two distributions $D_{yes}^{k+ 1}$ and $D_{no}^{k+1}$ as follows, and consider a claim (Claim~\ref{cl:yesno}) about them.

$$D_{yes}^{k+1} (x) = \Pr _{{s_1, \dots, s_q \sim D_{yes}}}\left[s_{k+1}=x \mid \cE \ \mbox{and} \ s_i=a_i \ \mbox{for} \ i \leq k\right].$$

Similarly, 
$$D_{no}^{k+1} (x) = \Pr_{{s_1, \dots, s_q \sim D_{no}}}[s_{k+1}=x \mid \cE \ \mbox{and} \ s_i=a_i \ \mbox{for} \ i \leq k].$$

\begin{cl}\label{cl:yesno}
$D_{yes}^{k+1} (x) = D_{no}^{k+1} (x)~\mbox{for every $x \in \Omega$}.$ 
\end{cl} 
\begin{proof}

We prove the claim separately when $x \in X_k \cup X_k'\subseteq L$, $x \in L \setminus( X_k \cup X_k')$, and $x \notin L$.
 \begin{description}
 \item[(i) $x \in  X_k \cup X_k'$:] $D_{yes}^{k+1}(x)=D_{no}^{k+1}(x)=0$. This follows from the condition that no element of $L$ appears twice, additionally containing no two elements of the same pair.
 \item[(ii) $x \in L \setminus(X_k \cup X_k')$:]  As $D_{yes}^{k+1}(x)=D_{no}^{k+1}(x)=0$ for every $x \in X_k \cup X_k'$, we have the followings for each $x \in L \setminus( X_k \cup X_k')$.

Assume that $x=x_i \in L \setminus (X_k \cup X_k')$ for some $i\in [\beta n]$ (using the notation defined for the partition of $L$ into pairs while we have described the random process for generating $D_{no}$). The argument for the case where $x=y_j$ for some $j \in [\beta n]$ is analogous to this.

Under $D_{yes}$, a direct calculation gives the probability for obtaining $x=x_i \in L \setminus (X_k \cup X_k')$ as the $(k+1)$-th sample $s_{k+1}$.
 \begin{eqnarray*}
    D_{yes}^{k+1}(x) &=& D_{yes}(x~|~x \notin X_k \cup X_k')\\
    &=& \frac{D_{yes}{(x)}}{1-\sum\limits_{y \in X_k \cup X_k'}D_{yes}(y)}\\
    &=&
    \frac{D_{yes}{(x)}}{1-\sum\limits_{y \in X_k }\left(D_{yes}(y)+ D_{yes}(\pi (y))\right)},
\end{eqnarray*}

Let us now consider $D_{no}$. Note that $x_i \in L \setminus (X_k \cup X_k')$, and neither $x_i$ nor $y_i$ is present in the set of first $k$ samples $\{s_1,\ldots,s_k\}$. So, the probability of getting $s_1,\ldots,s_k$ as the sequence of first $k$ samples is completely independent of how $D_{no}(x_i)$ and $D_{no}(y_i)$ are assigned while generating $D_{no}$, that is, whether we chose $D_{no}(x_i)$ to be $D_{yes}(x_i)+D_{yes}(y_i)$, or chose it to be zero (and made $D_{no}(y_i)$ equal to $D_{yes}(x_i)+D_{yes}(y_i)$ instead). That is, even when conditioned on the event that $s_1,\ldots,s_k$ is the sequence of first $k$ samples, the probability that $D_{no}(x_i)$ is $D_{yes}(x_i)+D_{yes}(y_i)$ is $\frac{D_{yes}(x_i)}{D_{yes}(x_i)+D_{yes}(y_i)}$.  Note that $D_{no}(x_i)$ is $0$ with probability $\frac{D_{yes}(y_i)}{D_{yes}(x_i)+D_{yes}(y_i)}$.

Now we can calculate the probability of obtaining $x=x_i \in L$ as the $(k+1)$-th sample $s_k$ from the corresponding conditional probabilities.
\begin{eqnarray*}
    D_{no}^{k+1}(x) &=&
    D_{no}\left(x~|~x \notin X_k \cup X_k'\right)\\
    &=& \frac{D_{yes}(x_i)+D_{yes}{(y_i)}}{1-\sum\limits_{y \in X_k \cup X_k'}D_{no}(y)} \cdot \frac{D_{yes}(x_i)}{D_{yes}(x_i)+D_{yes}{(y_i)}}\\
    &=& \frac{D_{yes}{(x)}}{1-\sum\limits_{y \in X_k }\left(D_{no}(y)+ D_{no}(\pi (y))\right)}.
\end{eqnarray*}

From the construction of $D_{yes}$ and $D_{no}$, for each $y \in L$, $D_{yes}(y)+D_{yes}(\pi (y))=D_{no}(y)+D_{no}(\pi (y))$. As $X_k \subseteq L$, 
$$ \sum\limits_{y \in X_k }\left(D_{yes}(y)+ D_{yes}(\pi (y))\right)=\sum\limits_{y \in X_k }\left(D_{no}(y)+ D_{no}(\pi (y))\right).$$ 




Hence, we have $D^{k+1}_{yes}(x)=D^{k+1}_{no}(x)$. 
\item[(iii) $x \notin L$:] Recall that for any $x \notin L$, $D_{yes}(x)=D_{no}(x)$. Proceeding in similar fashion to $D_{yes}^{k+1}(x)$ in Case $(ii)$, we conclude that $D_{yes}^{k+1}(x)=D_{no}^{k+1}$.
\end{description}
\end{proof}

\remove{To do so, we need another observation about $X_k$.

\begin{obs}\label{obs:two}
$\sum\limits_{x \in X_k}D_{yes}^k(x)+D_{yes}^k(\pi (x)) =\sum\limits_{x \in X_k}D_{no}^k(x)+D_{no}^k(\pi (x))$.
\end{obs}  
Recall that $P_L=\{(x_1,y_1),\ldots,(x_{\beta n},y_{\beta n})\}$. Note that $\pi(x_i)=y_i$ and $\pi(y_i)=x_i$. By the construction of $D_{no}$ from $D_{yes}$, $ D_{yes}^k(x_i) + D_{yes}^k(y_i)= D_{no}^k(x_i) + D_{no}^k(y_i)$ for each $i \in [\beta n]$. So, Observation~\ref{obs:two} follows.  

Now we argue that $D_{yes}^k$ and $D_{no}^k$ are identical. Note that $D_{no}^k$ is a distribution generated from $D_{yes}^k$ in a probabilistic way. For each $x_i$, 
$D_{no}^k(x_i)=\frac{D_{yes}^k(x_i)}{D_{yes}^k(x_i)+ D_{yes}^k(y_i)} \cdot (D_{yes}^k(x_i)+ D_{yes}^k(y_i))={D_{yes}^k(x_i)}$. Similarly,  $D_{no}^k(y_i)=D_{yes}^k(y_i)$. 

By Observations~\ref{obs:one} and~\ref{obs:two} along with the fact that  $D_{yes}^k$ and $D_{no}^k$ are identical, $D_{yes}^k(x)=D_{no}^k(x)$ for each $x \notin X_k \cup X_k'$. 
}

Now we are ready to prove Equation~\eqref{eqn:ind} for $q=k+1$. 
\begin{eqnarray*}
    &&\Pr_{s_1, \dots, s_{k+1} \sim D_{yes}}[s_i=a_i~\mbox{for each $i \in [k+1]$}~|~\cE]\\
    &=&\Pr_{s_1, \dots, s_{k+1} \sim D_{yes}}[s_i=a_i~\mbox{for each $i \in [k]$}~|~\cE] \cdot \Pr[s_{k+1}=a_{k+1}~|~\cE~\mbox{and}~s_i=a_i~\mbox{for each $i \in [k]$}]\\
    &=& \Pr_{s_1, \dots,s_k \sim D_{yes}}[s_i=a_i~\mbox{for each $i \in [k]$}~|~\cE]\cdot D_{yes}^{k+1}(a_{k+1})~~~~~~~~~~~~~~~~(\mbox{By the definition of $D_{yes}^{k+1}$}) \\
    &=& \Pr_{s_1, \dots, s_{k} \sim D_{no}}[s_i=a_i~\mbox{for each $i \in [k]$}~|~\cE]\cdot D_{no}^{k+1}(a_{k+1}) \\
    &&~~~~~~~~~~~~~~~~~~~~~~~~~~~~~~~~~~~~~~~~~~~~~(\mbox{By the induction hypothesis and Claim~\ref{cl:yesno}, respectively})\\
    &=&   \Pr_{s_1, \dots, s_k \sim D_{no}}[s_i=a_i~\mbox{for each $i \in [k]$}~|~\cE] \cdot \Pr[s_{k+1}=a_{k+1}~|~\cE~\mbox{and}~s_i=a_i~\mbox{for each $i \in [k]$}]\\
    &=& \Pr_{s_1, \dots, s_{k+1} \sim D_{no}}[s_i=a_i~\mbox{for each $i \in [k+1]$}~|~\cE].
\end{eqnarray*}
\end{proof}

Following the construction of $D_{yes}$ and $D_{no}$, we know that the two distributions differ only on the elements of $L$. Moreover, following Observation~\ref{obs:LH-no}, we know that if we take $o\left(\sqrt{\frac{1-2\beta}{1-2\alpha}n}\right)$ many samples, then with probability $1- o(1)$, neither any element of $L$ will appear more than once nor two elements of same pair in $P_L$ will appear. Under these two conditions, Lemma~\ref{lem:imp} states that $D_{yes}$ and $D_{no}$ will appear to be the same. Thus we can say that any ($0, \epsilon$)-tester that receives $o\left(\sqrt{\frac{1-2\beta}{1-2\alpha}n}\right)$ samples cannot distinguish between $D_{yes}$ and $D_{no}$, and obtain Theorem~\ref{thm:non-sym}.

\color{black}
\section{Learning a distribution (Proof of Theorem~\ref{thm:nonconcub})}\label{sec:tol-ub}

In this section, we prove an upper bound related to the tolerant testing of more general properties.
Following a folklore result, when provided with oracle access to an unknown distribution $D$, we can always construct a distribution $D'$, such that the $\ell_1$ distance between $D'$ and $D$ (the unknown distribution) is at most $\eps$, by using $\Oh(\frac{n}{\eps^2})$ samples from $D$~\footnote{There is a writeup of this folklore result by Cannone~\cite{canonne2020learning}.}.
In this section, we provide a procedure that can be used for tolerant testing of properties, and in particular hints at how general tolerance gap bounds could be proved in the future. Our algorithm learns an unknown distribution approximately with high probability, adapting to the input, using as few samples as possible.
Specifically, we prove that given a distribution $D$, if there exists a subset $S \subseteq [n]$ which holds most of the total probability mass of $D$, then the distribution $D$ can be learnt using $\Oh(|S|)$ samples, even if the algorithm is unaware of $|S|$ in advance. Our result is formally stated as follows:

\begin{theo}[Theorem~\ref{thm:nonconcub} formalized]\label{theo:learn-main}
Let $D$ denote the unknown distribution over $\Omega=[n]$, and assume that there exists a set $S \subseteq [n]$ with $D(S) \geq 1- \frac{\eta}{2}$\footnote{Recall that the variation distance between two distribution is half than that of the $\ell_1$ distance between them. So, we take $D(S) \geq 1- \frac{\eta}{2}$ (with $\eta \in [0,2)$) instead of $D(S) \geq 1- \eta$ (with $\eta \in [0,1)$) .}, where $\eta \in [0,2)$ is known but $S$ and $\size{S}$ are unknown. Then there exists an algorithm that takes $\delta \in (0,2]$ as input and  constructs a distribution $D'$ satisfying $||D-D'||_1 \leq \eta +\delta$ with probability at least $\frac{2}{3}$. Moreover, the algorithm uses, in expectation, $\Oh\left(\frac{{\size{S}}}{\delta^2}\right)$ many samples from $D$.
\end{theo}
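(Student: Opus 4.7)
The plan is to design an adaptive algorithm that maintains a guess $s$ for (a constant multiple of) $|S|/\delta^2$, and at each step produces an empirical candidate distribution $\widehat{D}_s$ built from $s$ fresh samples of $D$, then certifies it using a tolerant identity test that uses only $\widetilde{\Oh}(s)$ further samples. Starting with $s=1$, at each iteration I would (i) draw $s$ samples from $D$ and let $\widehat{D}_s$ be their empirical distribution; (ii) invoke the Valiant--Valiant tolerant identity tester of \cite{ValiantV11} with $\Theta(s/\log s)$ additional samples, tuned at constant confidence to distinguish $\|D-\widehat{D}_s\|_1 \le \eta+\delta/2$ from $\|D-\widehat{D}_s\|_1 \ge \eta+\delta$; (iii) if the test accepts, output $\widehat{D}_s$; otherwise double $s$ and repeat.

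For soundness, the guarantee of the tolerant identity tester ensures that whenever the test accepts, $\|D-\widehat{D}_s\|_1 \le \eta+\delta$ holds with constant confidence. Amplifying each individual test's confidence slightly and taking a union bound over the $\Oh(\log(|S|/\delta^2))$ iterations keeps the total probability of ever outputting a bad $\widehat{D}_s$ below $1/6$. For completeness, I would argue that once $s \ge s^* := c\,|S|/\delta^2$ for a sufficiently large absolute constant $c$, the empirical $\widehat{D}_s$ satisfies $\|D-\widehat{D}_s\|_1 \le \eta+\delta/2$ with probability at least $5/6$. The bound follows from the decomposition
\[
\|D-\widehat{D}_s\|_1 \;\le\; \sum_{i\in S}\left|D(i)-\widehat{D}_s(i)\right| \;+\; D(\Omega\setminus S) \;+\; \widehat{D}_s(\Omega\setminus S),
\]
by combining the folklore $\Oh(\sqrt{|S|/s})$ bound on the first term (with concentration, e.g.\ via McDiarmid or a Cauchy--Schwarz estimate applied to $\sum_{i\in S}\sqrt{D(i)/s}$), the hypothesis $D(\Omega\setminus S) \le \eta/2$, and a Chernoff estimate on $\widehat{D}_s(\Omega\setminus S)$ around its mean $D(\Omega\setminus S)$.

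Finally, the expected sample complexity comes out to $\Oh(|S|/\delta^2)$: the samples consumed in iterations with $s < s^*$ sum geometrically to $\Oh(s^*)$, and the expected number of iterations reached at or past $s^*$ is $\Oh(1)$ since each such test accepts with constant probability. The main obstacle is to calibrate the gap $[\eta+\delta/2,\,\eta+\delta]$ used in the identity tester so that (a) for $s \ge s^*$ the empirical is on the accept side of the gap with high probability, and simultaneously (b) the tester's soundness guarantee genuinely certifies $\|D-\widehat{D}_s\|_1 \le \eta+\delta$ even though $\widehat{D}_s$ is produced adaptively. Point (b) is handled by observing that at each fixed iteration $\widehat{D}_s$ is constructed from an independent batch of samples, hence conditioned on those samples it is a fixed known distribution against which the tester runs on a fresh independent batch, so the tester's guarantees apply verbatim.
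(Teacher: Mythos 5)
Your proposal follows the same algorithmic scheme as the paper: geometric doubling on the guess $s$, build an empirical distribution from a fresh batch, certify it with the Valiant--Valiant tolerant identity tester, and quit on acceptance. The overall approach is correct, and you rightly flag the subtle conditioning point that the identity test is run against a distribution already fixed by a previous independent batch, so its guarantees apply as stated.

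Two remarks, one a genuine difference and one a genuine (if repairable) gap. First, your completeness step uses a different argument from the paper's. You decompose $\|D-\widehat D_s\|_1 \le \sum_{i\in S}|D(i)-\widehat D_s(i)| + D(\Omega\setminus S) + \widehat D_s(\Omega\setminus S)$ and bound the first term via the standard $\E\bigl[\sum_{i\in S}|D(i)-\widehat D_s(i)|\bigr] \le \sqrt{|S|/m}$ estimate plus McDiarmid. The paper instead writes $\|D - D'\|_1 = 2\max_{E}|D(E)-D'(E)|$, applies an additive Chernoff bound with failure probability $O(2^{-|S|})$ to each subset $T\subseteq S$ and to $\Omega\setminus S$, and union-bounds over all $2^{|S|}+1$ such sets. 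Both routes reach the same $O(|S|/\delta^2)$ bound; yours is more standard, theirs is a compact union-bound trick that avoids invoking a concentration inequality for a Lipschitz functional.

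Second, your soundness union bound is underspecified in a way that matters. You write that one should ``amplify each test's confidence slightly and union bound over the $O(\log(|S|/\delta^2))$ iterations,'' but the algorithm does not know $|S|$, so it cannot set a per-iteration failure probability of $\Theta(1/\log(|S|/\delta^2))$; and a uniform constant failure probability gives a total false-accept probability that scales with the (a priori unbounded) number of iterations executed. The resolution the paper uses is to set the confidence parameter at the $k$-th iteration (guess $s=2^k$) to $\kappa_k = 1/(100 k^2)$, so that $\sum_k \kappa_k$ converges independently of $|S|$, while the cost overhead in the identity test is only $\log(1/\kappa_k)=O(\log k)=O(\log\log s)$, which is dominated by the $\Theta(s/\log s)$ term and hence does not disturb the $O(|S|/\delta^2)$ expected-sample bound. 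Without some such schedule of decreasing per-iteration failure probabilities, the soundness claim does not go through as written.
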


Note that in the above theorem, the algorithm has no prior knowledge of $\size{S}$. Before directly proving the above, we first show that if $\size{S}$ is known, then $\Oh(\size{S})$ many samples are enough to approximately learn the distribution $D$. We would like to point out that a similar question has been studied under the local differential privacy model with communication constraints, by Acharya, Kairouz, Liu and Sun~\cite{DBLP:conf/alt/AcharyaKLS21} and by Chen, Kairouz and {\"{O}}zg{\"{u}}r~\cite{DBLP:conf/nips/ChenKO20}.

\begin{lem}[Theorem~\ref{theo:learn-main} when $\size{S}$ is known]\label{lem:know}
Let $D$ be the unknown distribution over $\Omega=[n]$ such that there exists a set $S \subseteq [n]$ with $\size{S}=s$, and $\eta \in [0,2)$ such that $D(S) \geq 1- \frac{\eta}{2}$, where $s \in [n]$ and $\eta \in (0,1)$ are known. Then there exists an algorithm that takes $\delta \in (0,2]$ as an input and  constructs a distribution $D'$ satisfying $||D-D'||_1 \leq \eta +\delta$ with probability at least $\frac{9}{10}$. Moreover, the algorithm uses $\Oh\left(\frac{s}{\delta^2}\right)$ many samples from $D$.

\end{lem}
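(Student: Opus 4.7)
The plan is to simply output the empirical distribution: draw $m = C s/\delta^2$ independent samples from $D$ for a sufficiently large absolute constant $C$, and define $D'(i)$ to be the fraction of samples landing on $i$. Since we know $s$, this is an entirely non-adaptive procedure with sample complexity exactly of the stated order. The analysis then reduces to bounding $\|D - D'\|_1$ by splitting the universe as $[n] = S \sqcup \bar S$, where $\bar S := [n]\setminus S$, and controlling the two contributions independently.

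On $S$, the key observation is that $m D'(i) \sim \mathrm{Bin}(m, D(i))$, so standard variance bounds give $\E[|D(i) - D'(i)|] \le \sqrt{D(i)(1-D(i))/m} \le \sqrt{D(i)/m}$. Summing over $S$ and using Cauchy-Schwarz,
\[
\E\Bigl[\sum_{i \in S} |D(i) - D'(i)|\Bigr] \;\le\; \sum_{i \in S} \sqrt{D(i)/m} \;\le\; \sqrt{|S| \cdot D(S) / m} \;\le\; \sqrt{s/m}.
\]
Choosing $C$ large enough so that $\sqrt{s/m} \le \delta/40$ and applying Markov's inequality then yields $\sum_{i \in S}|D(i) - D'(i)| \le \delta/2$ with probability at least $19/20$.

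On $\bar S$, I would use the triangle inequality in the crude form $\sum_{i \in \bar S}|D(i) - D'(i)| \le D(\bar S) + D'(\bar S) \le \eta/2 + D'(\bar S)$, where the last step uses the hypothesis $D(S) \ge 1 - \eta/2$. Since $m \cdot D'(\bar S)$ is a sum of $m$ independent $\{0,1\}$ random variables with total expectation at most $m\eta/2$, the additive Chernoff bound (Lemma~\ref{lem:cher_bound2}) gives $D'(\bar S) \le \eta/2 + \delta/2$ with probability at least $19/20$, again provided $C$ is large enough. A union bound then combines the two good events and produces $\|D - D'\|_1 \le \delta/2 + \eta/2 + \eta/2 + \delta/2 = \eta + \delta$ with probability at least $9/10$.

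The only mild obstacle is calibrating the constant $C$ in the sample size so that Markov's inequality on $S$ and the Chernoff tail on $\bar S$ both give failure probability at most $1/20$ simultaneously; everything else is routine, and in particular the argument makes no attempt to identify $S$ or even estimate $|S|$ from the samples, which is precisely what allows the subsequent Theorem~\ref{theo:learn-main} to bootstrap this lemma via a doubling search over candidate values of $s$.
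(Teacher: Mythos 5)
Your proposal is correct, and it takes a genuinely different route from the paper's proof. The paper rewrites $\|D - D'\|_1 = 2\max_{E\subseteq[n]}|D(E)-D'(E)|$ and then runs an exponential-size union bound: it shows via additive Chernoff that each fixed $T\subseteq S$ satisfies $|D(T)-D'(T)|\leq\delta/4$ except with probability $O(1/2^s)$, unions over all $2^s$ subsets of $S$ plus the single set $\Omega\setminus S$, and then decomposes an arbitrary $E$ as $(E\cap S)\cup(E\cap\bar S)$ to get $|D(E)-D'(E)|\leq(\eta+\delta)/2$. You instead bound the restricted $\ell_1$-distance on $S$ directly in first moment, using the per-coordinate variance $\E|D(i)-D'(i)|\leq\sqrt{D(i)/m}$, Cauchy--Schwarz to get $\E\sum_{i\in S}|D(i)-D'(i)|\leq\sqrt{s/m}$, and Markov; on $\bar S$ you use the crude bound $\sum_{i\in\bar S}|D(i)-D'(i)|\leq D(\bar S)+D'(\bar S)$ and a single Chernoff estimate on $D'(\bar S)$. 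Both arguments yield the same $O(s/\delta^2)$ sample bound. Your route avoids the $2^s$ union bound entirely and is arguably more elementary, needing only one Chernoff application; the paper's route proves the stronger intermediate statement that \emph{all} events $E$ are simultaneously well-approximated (which it then collapses to the $\ell_1$ claim), and its additive-Chernoff-over-subsets argument is perhaps more transparent about where the factor of $s$ in the sample complexity comes from. Either way the conclusion is the same, and your write-up is sound; the only caveat is that the constant $C$ must be taken large enough to absorb both the Markov-on-$S$ failure and the Chernoff-on-$\bar S$ failure, as you already note.
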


We note that Lemma~\ref{lem:know} can be obtained from the work of Acharya,
Diakonikolas, Li and Schmidt~\citep[Theorem 2]{acharya2017sample}. For completeness, we
give a self-contained proof for this lemma below.

We later adapt the algorithm of Lemma~\ref{lem:know} to give a proof to the scenario where $\size{S}$ is unknown, using a guessing technique. The idea is to guess $\size{S}=s$ starting from $s=1$, and then to query for $\Oh\left(s \right)$ many samples from the unknown distribution $D$. From the samples obtained, we construct a distribution $D_s$, and use Lemma~\ref{pro:val-tol} presented below to distinguish whether $D_s$ and $D$ are close or far. We argue that, for $s \geq \size{S}$, $D_s$ will be close to $D$ with probability at least $\frac{9}{10}$. We bound the total probability for the algorithm reporting a $D'$ that is too far from $D$ (for example when terminating before $s \geq |S|$), and also bound the probability of the algorithm not terminating in time when $s$ becomes at least as large as $|S|$.

\begin{lem}[\cite{ValiantV11}]
\label{pro:val-tol}
Let $D_u$ and $D_k$ denote unknown (input) and known (given in advance) distributions respectively over $\Omega=[n]$, such that the support of $D_u$ is a set of $s$ elements of $[n]$.  Then there exists an algorithm $\mbox{{\sc Tol-Alg}}(D_u,D_k,\eps_1,\eps_2,\kappa)$ that takes the full description of $D_k$, two proximity parameters $\eps_1,\eps_2$ with $0 \leq \eps_1 <\eps_2 \leq 2$ and $\kappa \in (0,1)$ as inputs, queries $\Oh\left(\frac{1}{(\eps_2 - \eps_1)^2}\frac{s}{\log s} \log \frac{1}{\kappa}\right)$ many samples from $D_u$, and distinguishes whether $||D_u-D_k||_1\leq \eps_1$ or  $||D_u-D_k||_1\geq \eps_2$ with probability at least $1-\kappa$~\footnote{The multiplicative factor $\log \frac{1}{\kappa}$ is for amplifying the success probability from $\frac{2}{3}$ to $1-\kappa$.}. 
\end{lem}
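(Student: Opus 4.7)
The plan is to reduce the decision problem to additively estimating $\Delta := \|D_u - D_k\|_1$ and then compare the estimate against $\tfrac{1}{2}(\eps_1+\eps_2)$. Targeting additive error $\gamma := (\eps_2 - \eps_1)/4$ provides a safety margin around the threshold, so the whole task reduces to building a $\gamma$-accurate estimator of $\Delta$. I aim for a base estimator that succeeds with constant probability using $\Theta\bigl(s/(\gamma^2 \log s)\bigr)$ samples, and then apply standard median-of-$\Theta(\log(1/\kappa))$ boosting to reach failure probability $\kappa$.

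For the core estimator, I would Poissonize: draw $N \sim \mathrm{Poi}(m)$ samples so that the counts $X_i \sim \mathrm{Poi}(m\,D_u(i))$ become mutually independent. Writing $\Delta = \sum_i f_i(D_u(i))$ with $f_i(p) := |p - D_k(i)|$, the plan is to construct per-index estimators $\widehat{T}_i$ using $X_i$ alone (since $D_k$ is fully known). I would split $[n]$ into ``heavy'' indices with $D_k(i) \geq c\,\tfrac{\log m}{m}$ and ``light'' indices. On heavy indices the direct plug-in $\widehat{T}_i := |X_i/m - D_k(i)|$ has variance $\Oh(D_k(i)/m)$ per index; since there are at most $\Oh(m/\log m)$ heavy indices (by $\sum_i D_k(i) = 1$), their contribution to the total error is well within budget.

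On light indices, $f_i$ is non-smooth at $p = D_k(i)$, so plug-in is too lossy. Following the Valiant--Valiant template, I would use the best uniform polynomial approximation $g_i$ of degree $L = \Theta(\log m)$ to $f_i$ on $[0, c\,\tfrac{\log m}{m}]$, normalized so that $g_i(0) = f_i(0) = D_k(i)$. Each monomial $p^k$ admits an exactly unbiased estimator from $X_i$ through the Poisson falling-factorial identity $\E[X_i^{(k)}] = (m\,D_u(i))^k$, so expanding $g_i$ in this basis yields an unbiased estimator of $g_i(D_u(i))$. The normalization at $0$ ensures that inactive light indices ($D_u(i) = 0$) contribute no bias, so the bias only accumulates over the at most $s$ active indices, each with pointwise error $\Oh\bigl(\tfrac{\log m}{mL}\bigr)$ by Chebyshev-type approximation bounds for $|p-a|$ on a bounded interval.

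The central obstacle is the variance analysis of the polynomial-based estimators: the coefficients of $g_i$ in the monomial basis can grow like $\exp(\Theta(L))$, which naively makes the variance explode. The standard fix is to bound the variance in the Chebyshev basis instead, exploiting that the Chebyshev coefficients of the best approximation are small and that the Poisson moment products $\E[X_i^{(k)} X_i^{(\ell)}]$ factor cleanly. The calculation carried out in~\cite{ValiantV11} shows that at $L = \Theta(\log m)$ the per-index variance is $\Oh(1/(m\log m))$ up to a controlled factor, and summing over the $\Oh(s)$ indices in $\mathrm{supp}(D_u)$ balances against the bias precisely when $m = \Theta\bigl(s/(\gamma^2 \log s)\bigr)$. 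Chebyshev's inequality then gives constant-probability accuracy $|\widehat{\Delta}-\Delta|\le\gamma$, and the median-of-means boosting produces the stated $\Oh\bigl(\tfrac{1}{(\eps_2-\eps_1)^2}\,\tfrac{s}{\log s}\,\log\tfrac{1}{\kappa}\bigr)$ bound, completing the proof.
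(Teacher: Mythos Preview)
Your proposal reconstructs the Valiant--Valiant polynomial-approximation machinery from scratch, but the paper does not do this. The lemma is a cited result: the paper invokes the $\Oh\bigl(\tfrac{1}{(\eps_2-\eps_1)^2}\tfrac{n}{\log n}\bigr)$ tolerant-identity tester from~\cite{ValiantV11} as a black box, and obtains the dependence on $s$ (rather than $n$) by a one-line reduction---contract all of $\Omega\setminus\supp(D_k)$ to a single element, which preserves $\|D_u-D_k\|_1$ and shrinks the effective domain to size $s+1$. The $\log\tfrac{1}{\kappa}$ factor is just the standard majority-vote amplification noted in the footnote. So the paper's ``proof'' is a citation plus a domain-contraction trick, not a re-derivation of the estimator.

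Your route is in the right spirit for how~\cite{ValiantV11} actually works, but as a standalone argument the sketch has gaps you would need to close: you do not address the \emph{bias} of the plug-in estimator on heavy indices (only its variance), and the per-index variance is governed by $D_u(i)$, not $D_k(i)$; more seriously, your light-index polynomial approximation is only valid for $D_u(i)$ in the interval $[0,\,c\tfrac{\log m}{m}]$, but an index classified as light (via $D_k(i)$) can still have large $D_u(i)$, and you give no mechanism to handle that case. These are exactly the technical hurdles that make the full VV proof nontrivial. Given that the result is citable, the paper's black-box-plus-contraction argument is both shorter and avoids all of this.
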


Note that Theorem~\ref{theo:learn-main} talks about learning a distribution with $\Oh(\size{S})$ samples, where there exists an unknown set $S$ with $D(S)\geq 1-\eta/2$. To prove  Theorem~\ref{theo:learn-main}, we use Lemma~\ref{pro:val-tol}, that crucially uses less than $s$ queries for tolerant identity testing (as opposed to learning).

The original bound following the paper of Valiant and Valiant~\cite{ValiantV11} is $\Oh\left(\frac{1}{(\eps_2-\eps_1)^2} \frac{n}{\log n}\right)$, which holds for any general distributions $D_u$ and $D_k$ with constant success probability. When deploying Lemma~\ref{pro:val-tol}, we ``contract'' the set $\Omega \setminus \supp(D_k)$ to a single element, which allows us to substitute $s+1$ for $n$. Note that this does not change the distance between $D_k$ and $D_u$. Hence, $\Oh\left(\frac{1}{(\eps_2-\eps_1)^2}\frac{s}{\log s}\right)$ samples from $D_u$ are enough for constant success probability. Following a recent work of Cannone, Jain, Kamath and Li~\cite{canonne2021price}, the dependence on the proximity parameters can be slightly improved. However we are not using that result since the focus of this work is different.

We first prove Lemma~\ref{lem:know}, and then proceed to prove Theorem~\ref{theo:learn-main}.

\begin{proof}[Proof of Lemma~\ref{lem:know}]
Let $Z$ be a multi-set of $\Oh\left(\frac{s}{\delta^2}\right)$ samples taken from $D$. The algorithm constructs a distribution $D':[n] \rightarrow [0,1]$ such that $$ D'(x)=\frac{\#~\mbox{times}~x~\mbox{appears in}~Z}{\size{Z}}.$$

Observe that $||D-D'||_1 = 2 \max\limits_{E \subseteq [n]} |D(E)-D'(E)|$. So, we will be done by showing the following:
\begin{equation}\label{Eqn:E}
    \mbox{With probability at least $\frac{9}{10}$, $|D(E)-D'(E)|\leq \frac{\eta+\delta}{2}$ for all $E \subseteq [n]$}
\end{equation}

Note that there are $2^n$ possibilities for $E$. So, a direct application of the union bound would require a failure probability of at most $\Oh(\frac{1}{2^n})$ for each $E$ not satisfying $|D(E)-D'(E)|\leq \frac{\eta+\delta}{2}$, that is, $\Oh(\frac{n}{\delta^2})$ samples would be needed. Assuming that $D$ is concentrated ($D(S)\geq 1- \frac{\eta}{2}$), we argue below that it is enough to have a failure probability of $\Oh(\frac{1}{2^s})$ for each $T$ not satisfying $|D(T)-D'(T)|\leq \frac{\delta}{4}$, but first we show that this is indeed the probability that we achieve.


\begin{obs}
Consider $T \subseteq [n]$. $|D(T)-D'(T)| \leq \frac{\delta}{4}$ holds with probability at least $1-\frac{1}{100 \cdot 2^s}$.
\end{obs}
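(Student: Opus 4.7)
The plan is to apply the additive Chernoff bound (Lemma~\ref{lem:cher_bound2}) to the empirical mass $D'(T)$. For each $i \in [|Z|]$, let $X_i$ be the indicator random variable of the event that the $i$-th sample from $Z$ lies in $T$. The $X_i$'s are mutually independent, each lies in $[0,1]$, and their sum $X = \sum_i X_i = |Z| \cdot D'(T)$ has expectation $\mu = |Z| \cdot D(T)$. So the event $|D(T) - D'(T)| > \delta/4$ coincides with $|X - \mu| > \delta |Z|/4$, and I want to bound the probability of this event.

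I would then plug into Lemma~\ref{lem:cher_bound2} with $\delta |Z|/4$ playing the role of the additive slack. This gives
\[
\Pr\!\left[\,|D(T)-D'(T)| > \frac{\delta}{4}\,\right] \;\leq\; 2\exp\!\left(-\frac{2 (\delta |Z|/4)^2}{|Z|}\right) \;=\; 2\exp\!\left(-\frac{\delta^2 |Z|}{8}\right).
\]
For the right-hand side to be at most $\frac{1}{100 \cdot 2^s}$, it suffices that $\frac{\delta^2 |Z|}{8} \geq s \ln 2 + \ln 200$, i.e.\ $|Z| \geq \frac{8(s \ln 2 + \ln 200)}{\delta^2}$. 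Since the algorithm already takes $|Z| = \Theta(s/\delta^2)$ samples, I would fix the hidden constant in the $\Oh(\cdot)$ in the definition of $|Z|$ to be large enough (say, at least $16$) so that this condition is satisfied for all $s \geq 1$.

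The only mild subtlety is picking the constant inside the $\Oh(s/\delta^2)$ for $|Z|$ consistently with the downstream union bound (over sets $T$ of a restricted form, coming in the next step of the proof of Lemma~\ref{lem:know}): here we merely need the failure probability to shrink like $2^{-s}$, which is delivered by a linear-in-$s$ sample size, and no truly technical difficulty arises. The main conceptual point, which the authors will exploit next, is that this $2^{-s}$ bound is tight enough to union-bound over the $2^s$ subsets of the ``heavy'' set $S$, thereby avoiding the $2^n$ penalty that a naive learning argument would incur.
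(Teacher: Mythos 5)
Your proof is correct and follows essentially the same route as the paper: define the indicator variables $X_i$ for the samples in $Z$ landing in $T$, note that $\E[D'(T)]=D(T)$, and apply the additive Chernoff bound (Lemma~\ref{lem:cher_bound2}) with deviation $\delta|Z|/4$; the paper merely leaves the final plug-in and constant-setting implicit where you spell them out.
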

\begin{proof}
Let $X_i$ denote the binary random variable that takes value $1$ if and only if the $i$-th sample in $Z$ is an element of $T$, where $i \in [\size{Z}]$.
So,  $D'(T)=\frac{1}{\size{Z}} \sum\limits_{i=1}^{\size{Z}}X_i.$

Observe that the expectation of $D'(T)$ is $\E\left[D'(T)\right]=D(T)$. Applying Chernoff bound (Lemma~\ref{lem:cher_bound2}), we get the desired result.
\end{proof}

By the above observation for every subset of $S$, applying the union bound over all possible subsets of $S$, we have $|D(T)-D'(T)|\leq \frac{\delta}{4}$ for every $T \subseteq S$ with probability at least $\frac{99}{100}$. Further applying the observation for $T=\Omega \setminus S$, we have $|D(\Omega \setminus S)-D'(\Omega \setminus S)|\leq \frac{\delta}{4}$ with probability at least $1-\frac{1}{100 \cdot 2^s}$. 
 
Let $\cE$ be the event that $|D(T)-D'(T)|\leq \frac{\delta}{4}$ for every $T \subseteq S$, and $|D(\Omega \setminus S)-D'(\Omega \setminus S)|\leq \frac{\delta}{4}$. Note that $\Pr(\cE) \geq \frac{9}{10}$. 
So, to prove Equation~\eqref{Eqn:E} and conclude the proof of Lemma~\ref{lem:know}, we show that $|D(E)-D'(E)| \leq \frac{\eta +\delta}{2}$ holds, in the conditional probability space when $\cE$ occurs, for any $E \subseteq [n]$.
 
\begin{eqnarray*}
 |D(E)-D'(E)| &\leq& |D(E\cap S)-D'(E \cap S)|+|D(E\cap (\Omega \setminus S))-D'(E \cap (\Omega \setminus S))| \\
&\leq& \frac{\delta}{4} + \max \ \{ {D( \Omega \setminus S), D'( \Omega \setminus S)} \}  \\
&\leq & \frac{\delta}{4} + D(\Omega \setminus S) + \frac{\delta}{4} \\
&\leq& \frac{\eta + \delta}{2}.
\end{eqnarray*}
\end{proof}

\begin{proof}[Proof of Theorem~\ref{theo:learn-main}]The algorithm is as follows:
\begin{enumerate}
    \item Set $s=1$.
    
    \item Query for a multi-set $Z_s$ of $\Oh\left(\frac{s}{\delta^2}\right)$ many samples from $D$.
    
    \item Construct a  distribution $D_s:[n] \rightarrow [0,1]$ such that 
$$ D_s(x)=\frac{\#~\mbox{times}~x~\mbox{appears in}~Z_s}{\size{Z_s}}$$

\item Call the algorithm $\mbox{{\sc Tol-Alg}}\left(D_s,D,\eta+ \frac{\delta}{2},\eta + \delta, \frac{1}{100 \log^2 s}\right)$  (corresponding to Lemma~\ref{pro:val-tol}) to distinguish whether $||D-D_s||_1\leq \eta+ \frac{\delta}{2}$ or $||D-D_s||_1 \geq  \eta+ \delta$. If we get $ ||D-D_s||_1\leq \eta + \frac{\delta}{2}$ as the output of {{\sc Tol-Alg}}, then we report $D'$ as the output  and {\sc Quit}.  Otherwise, we double the value of $s$. If $s \leq 2n$, go back to Step $2$. Otherwise, report {\sc Failure}~\footnote{By Lemma~\ref{pro:val-tol}, this step uses $\Oh(\frac{s}{\delta^2})$ samples.}. 
\end{enumerate}

Let $\cS$ denote the event that the algorithm quits with the desired output. We first show that $\Pr(S)\geq \frac{2}{3}$. Then we analyze the expected sample complexity of the algorithm.

Observe that the algorithm quits after an iteration with guess $s$ such that {\sc Alg-Tol} reports $ ||D-D_s||_1\leq \eta + \frac{\delta}{2}$. So, in that case,
the probability that the algorithm exits with an output not satisfying $||D-D_s||_1\leq \eta + \delta$ is at most $\frac{1}{100 \log^2 s}$. When summing this up over all possible $s$ (all powers of $k$, even up to infinity), the probability that the algorithm does not produce the desired output, given that it quits, is at most  $\sum\limits_{k=1}^{\infty} \frac{1}{100 k^2}\leq \frac{1}{10}$. So, denoting $\cQ$ as the event that the algorithm quits without reporting {\sc Failure}, $\Pr(\cS~|~\cQ) \geq \frac{9}{10}$.

For the lower bound on $\Pr(\cQ)$, consider the case where $s \geq \size{S}$. In this case, $||D_s-D||_1 \leq \eta + \frac{\delta}{2}$ with probability at least $\frac{9}{10}$, and {\sc Tol-Alg} quits by reporting $D_s$ as the output with probability at least $1-\frac{1}{100 \log^2 s}$.  So, for any guess $s \geq |S|$, the algorithm quits and reports the desired output with probability at least $\frac{4}{5}$. So, the probability that the algorithm quits without reporting failure is at least the probability that the algorithm quits with a desired output at some iteration with a guess $s \geq |S|$, which is at least $1-(\frac{1}{5})^{(\log n-\log |S|+1)}$. That is, $\Pr(\cQ)\geq \frac{4}{5}$.

Hence, the success probability of the algorithm can be lower-bounded as

$$\Pr(\cS)\geq \Pr(\cQ)\cdot \Pr(\cS~|~\cQ)\geq \frac{9}{10}\cdot \frac{4}{5}>\frac{2}{3}.$$

Now, we analyze the sample complexity of the algorithm. The algorithm queries for $\Oh(s)$ samples when it runs the iteration whose guess is $s$. The algorithm goes to the iteration with guess $s >\size{S}$ if all prior iterations which guessed more than $|S|$ failed, which holds with probability at most $\Oh\left((\frac{1}{5})^{\floor{\log s/\size{S}}}\right)$. Hence the expected sample complexity of the algorithm is at most

$$\sum \limits_{k:s=2^k<|S| } \Oh(\frac{s}{\delta^2}) +\sum \limits_{k:s=2^k \geq |S|}  \Oh\left(\left(\frac{1}{5}\right)^{\floor{\log (s/\size{S})}}\cdot \frac{s}{\delta^2} \right) = \Oh(\frac{\size{S}}{\delta^2}).$$

To explain the above equality, note that in the LHS of the above equation, each term of the second sum is bounded by $\Oh((\frac{1}{5})^{(k-\log |S|)} \cdot 2^{(k-\log |S|)} \cdot \frac{|S|}{\delta^2})$. Thus, substituting $k-\log(|S|)$ by $r$, we see that the second part of the LHS is upper bounded by $\sum\limits_{r\geq 0} \Oh\left((\frac{2}{5})^r \cdot \frac{|S|}{\delta^2}\right)$ which is clearly $\Oh(\frac{|S|}{\delta^2})$. Thus we have the above bound. 
\end{proof}

\bibliographystyle{alpha}
\bibliography{reference}

\end{document}